\newtheorem*{PT}{Theorem A}
\theoremstyle{plain}
\newtheorem{theorem}{Theorem}[section]
\newtheorem{corollary}[theorem]{Corollary}
\newtheorem{lemma}[theorem]{Lemma}
\newtheorem{prop}[theorem]{Proposition}
\theoremstyle{definition}
\theoremstyle{remark}
\newcommand{\nri}{n\rightarrow\infty}
\newcommand{\bnri}{N\rightarrow\infty}
\newcommand{\bbZ}{\mathbb{Z}}
\newcommand{\bbR}{\mathbb{R}}
\newcommand{\bbC}{\mathbb{C}}
\newcommand{\bbN}{\mathbb{N}}
\newcommand{\mca}{\mathcal{A}}
\newcommand{\mcb}{\mathcal{B}}
\newcommand{\mce}{\mathcal{E}}
\newcommand{\mch}{\mathcal{H}}
\newcommand{\mcl}{\mathcal{L}}
\newcommand{\mcm}{\mathcal{M}}
\newcommand{\mcn}{\mathcal{N}}
\newcommand{\mco}{\mathcal{O}}
\newcommand{\mcv}{\mathcal{V}}
\newcommand{\mcw}{\mathcal{W}}
\newcommand{\arz}{a\rightarrow0^+}
\newcommand{\alrz}{\alpha\rightarrow0^+}
\newcommand{\rmp}{\rm{p}}
\newcommand{\cmp}{\rm{cp}}
\begin{document}

\title[]{Periodic Discrete Energy for Long-Range Potentials}

\date{\today}
\maketitle

\begin{center}
\textrm{D. P. Hardin \footnote{\label{note1}The research of the authors was supported, in part, by the National Science Foundation grant DMS-1109266}}, E. B. Saff \footnotemark[1], and B. Simanek \footnotemark[1]

\begin{small}Center for Constructive Approximation, Department of Mathematics, Vanderbilt University, Nashville, TN 37240
\vspace{.1cm}

e-mails: doug.hardin@vanderbilt.edu, edward.b.saff@vanderbilt.edu, brian.z.simanek@vanderbilt.edu \end{small}
\end{center}
\date{}

\begin{abstract}
We consider periodic energy problems in Euclidean space with a special emphasis on long-range potentials that cannot be defined through the usual infinite sum.  One of our main results builds on more recent developments of Ewald summation to define the periodic energy corresponding to a large class of long-range potentials.  Two particularly interesting examples are the logarithmic potential and the Riesz potential when the Riesz parameter is smaller than the dimension of the space.  For these examples, we use analytic continuation methods to provide concise formulas for the periodic kernel in terms of the Epstein Hurwitz Zeta function.  We apply our energy definition to deduce several properties of the minimal energy including the asymptotic order of growth and the distribution of points in energy minimizing configurations as the number of points becomes large.  We conclude with some detailed calculations in the case of one dimension, which shows the utility of this approach.
\end{abstract}

\vspace{4mm}

\footnotesize\noindent\textbf{Keywords:} Periodic energy, Convergence factor, Ewald summation, Completely monotonic functions, Lattice sums, Epstein Hurwitz Zeta function

\vspace{2mm}

\noindent\textbf{Mathematics Subject Classification:} Primary: 52C35, 74G65; Secondary: 40D15

\vspace{2mm}

\normalsize

\section{Introduction}\label{intro}

For an $N$-tuple $\omega_N=(x_j)_{j=1}^N$ of points confined to a compact subset $\Omega_0\subseteq\bbR^d$, we define its $f$-\textit{energy} as
\begin{align}\label{endef}
E_f(\omega_N):=\sum_{i\neq j}f(x_i-x_j),
\end{align}
where $f$ is a lower-semicontinuous function from $\bbR^d$ to $\bbR\cup\{\infty\}$.  The study of minimal energy investigates configurations that minimize this energy among all such $N$-tuples.  Therefore, we define
\begin{align}\label{minendef}
\mce_f(\Omega_0,N):=\inf_{\omega_N\in\Omega^N}E_f(\omega_N).
\end{align}
The lower semi-continuity of the function $f$ implies that minimizers exist and so the infimum in (\ref{minendef}) is in fact a minimum.  In recent years, there has been much interest in studying the asymptotics of $\mce_f(\Omega_0,N)$ as $N$ becomes large and deducing properties of the energy minimizing configurations (see \cite{LowComp,BHS,BHS2,HSNot,HSAdv,KS}).  Of particular interest is the case of a Riesz potential, where $f(w)=|w|^{-s}$ and $|\cdot|$ denotes the Euclidean norm.

We will consider the energy problem in a related setting, which includes additional symmetry that will simplify many of our computations.  Let $\{v_1,\ldots,v_d\}$ be a collection of $d$ linearly independent vectors in $\bbR^d$ and let $V$ be the $d\times d$ matrix whose $j^{th}$ column is equal to $v_j$.  We set
\[
\Omega:=\left\{w:w=\sum_{j=1}^d\alpha_jv_j,\, \alpha_j\in[0,1),\, j=1,2,\ldots,d\right\},
\]
and we will denote its closure in $\bbR^d$ by $\overline{\Omega}$.
Let $\mcv$ be the lattice determined by the matrix $V$; that is, $\mcv:=\{Vk:k\in\bbZ^d\}$ and let $\mcv^*$ be the lattice dual to $\mcv$; that is, $\mcv^*=\left\{w\in\bbR^d:w\cdot v\in\bbZ\mbox{ for all }v\in\mcv\right\}$.  We can think of $\Omega$ as a fundamental cell of the quotient space $\bbR^d/\mcv$, and we highlight the fact that in the quotient topology, $\Omega$ is compact.

If $f$ is a lower-semicontinuous function from $\bbR^d$ to $\bbR\cup\{\infty\}$ that decays sufficiently quickly at infinity, then we define the \textit{classical periodic $f$-energy} of a configuration $\omega_N=(x_j)_{j=1}^N\in(\bbR^d)^N$ by
\begin{align}\label{periodf}
E_f^{\cmp}(\omega_N):=\sum_{i\neq j}\left(\sum_{v\in\mcv}f(x_i-x_j+v)\right).
\end{align}
In this context, the function $f$ is referred to as the \textit{potential function}.  If $\mca\subseteq\Omega$ is compact (in the quotient topology on $\bbR^d/\mcv$) and infinite, then we define
\begin{align}\label{pminendef}
\mce_f^{\cmp}(\mca,N):=\inf_{\omega_N\in\mca^N}E_f^{\cmp}(\omega_N).
\end{align}
The physical interpretation of the energy (\ref{periodf}) is easy to describe.  Consider a crystal that consists of a particular configuration of particles that is confined to a compact set and this configuration is repeated in a periodic fashion throughout a very large region of space.  If the particles exhibit a repelling force on one another, they will arrange themselves in a manner that minimizes the energy of the entire crystal.  To approximate the energy of this crystal, it suffices to approximate the energy of one cell of the crystal lattice and then multiply by the number of cells.  When calculating the energy of a single cell, we make the further approximation that the lattice is infinite, so we must sum up the contribution to the energy of the interaction between every particle in the cell and every other particle in the entire crystal.  When the interaction between the particles $x$ and $y$ is given by $f(x-y)$, the resulting sum is of the form (\ref{periodf}).

We should point out that some authors define the periodic energy using different notation (see \cite{CKJAMS}) by choosing $x_j\in\bbR^d$ for $j\in\{1,\ldots,N\}$, defining the set $\Lambda$ by
\[
\Lambda:=\bigcup_{j=1}^N\left\{x_j+v:v\in\mcv\right\},
\]
and then defining the periodic energy by
\[
\sum_{k=1}^N\left(\sum_{{q\in\Lambda}\atop{q\neq x_k}}f(x_k-q)\right).
\]
It is easy to see that this sum and (\ref{periodf}) differ by the so-called \textit{self-energy} term, which takes the form
\[
N\cdot\sum_{v\in\mcv\setminus\{0\}}f(v).
\]
In the specific case of a Coulomb potential, an alternating sum similar to this self-energy term is related to the Madelung constant, which is of significant interest in its own right (see \cite{BBP,BBT}).  Since the self-energy term is independent of the points in the configuration, its presence does not meaningfully effect the asymptotics of $\mce_{f}^{\cmp}(\mca,N)$ for large $N$, so its inclusion or omission is not relevant for our investigation.

Of course, the sum (\ref{periodf}) will not converge without the decay assumption on the function $f$, so we will introduce a {\em renormalized} energy given by (\ref{e1def}) and (\ref{1def}) below to compute the energy of a configuration for a broader class of potentials.  We provide a derivation of the kernel formula (\ref{1def}) in Section \ref{cf2}, and describe its relation to formulas that have previously appeared in the physics literature (see for example \cite{MP,PPdL}).

The problem of summing divergent or conditionally convergent series related to physical phenomena has a long history.  One of the most widely used methods is known as Ewald summation (see \cite{Ewald}), which is a method for defining Coulomb (that is, electrostatic) energies.
Various improvements of the Ewald summation method have arisen since the original paper \cite{Ewald}.  Indeed, the recent advances in computational mathematics have inspired many faster and more stable algorithms related to lattice summation (see, for example, \cite{BG,FSW,FreSmit,GR,NaCe,PPdL}).  There have also been improvements to the scope of the Ewald method.  In \cite{Heyes}, Heyes studied the effect of utilizing different charge distribution functions in the Ewald method and in \cite{PPdL} the Ewald method is applied to a large collection of potentials that includes the Coulomb interaction.  We also note that the recent methods of \cite{SanSer}  can be utilized to define such `renormalized' energies for infinite point configurations (e.g., the periodic case) interacting through the Coulomb potential in two-dimensions.

 Compared with the physics literature, extensive results in the mathematics literature on periodic discrete energy are more difficult to find.  Some analytic methods for evaluating conditionally convergent sums can be found in \cite{BBS} and rigorous results concerning the Madelung constant appear in \cite{BBP,BBT,Terras}.  One of our goals is to define an energy functional that admits a mathematically rigorous derivation, has certain desirable properties (see Theorems \ref{main} and \ref{intasymp}), and generalizes the ideas presented in many of the aforementioned papers.

Before we state the definition of our energy functional, we need to specify the potential functions that we will consider.  If $\nu$ is a signed measure, we will denote by $\nu^+$ and $\nu^-$ its positive and negative parts, respectively.

\vspace{2mm}

\noindent\textbf{Definition 1.}  We will say that a lower-semicontinuous function $f:\bbR^d\rightarrow\bbR\cup\{\infty\}$ is a \textit{G-type potential} if it satisfies the following property:
\begin{enumerate}
\item[(G)]  for every $q\in\bbR^d\setminus\{0\}$, $f(q)$ is finite and can be expressed as
\[
f(q)=\int_0^{\infty}e^{-|q|^2t}\textrm{d}\mu_f(t),
\]
for some signed measure $\mu_f$ on $(0,\infty)$ having finite negative part.  We also define $f(0):=\mu_f(\bbR^d)$, which exists as an element of $\bbR\cup\{\infty\}$.
\end{enumerate}
We will say that a lower-semicontinuous function $f:\bbR^d\rightarrow\bbR\cup\{\infty\}$ is a \textit{weak G-type potential} if there is a function $f^*:(0,1)\rightarrow\bbR$ and a signed measure $\mu_f$ on $(0,\infty)$ with finite negative part so that the following conditions are satisfied:
\begin{enumerate}
\item[(W1)]
for every $q\in\bbR^d\setminus\{0\}$, $f(q)$ is finite and can be expressed as
\[
f(q)=\lim_{\alrz}\left(\int_{\alpha}^{\infty}e^{-|q|^2t}\textrm{d}\mu_f(t)+f^*(\alpha)\right),
\]
where $\int_{\alpha}^{\infty}e^{-|q|^2t}\textrm{d}\mu_f(t)<\infty$ for all $\alpha>0$, and
\vspace{2mm}

\item[(W2)]
If $w_0$ is an element of $\mcv^*\setminus\{0\}$ of minimal length, then
\[
\int_0^1\frac{e^{-\pi^2|w_0|^2/t}}{t^{d/2}}\textrm{d}\mu_f(t)<\infty.
\]

\end{enumerate}

\vspace{4mm}

The terminology ``G-type potential" is short for Gaussian-type potential in that we are expressing the potentials $f$ in the form $f(q)=F(|q|^2)$, where $F$ is the Laplace transform of a signed measure on $(0,\infty)$.  If $\mu_f$ is positive, then its Laplace transform is a completely monotone function from $(0,\infty)$ to itself\footnote{A function $F$ is said to be \textit{completely monotone} on $(0,\infty)$ if $(-1)^kF^{(k)}(x)\geq0$ holds on $(0,\infty)$ for every $k\in\{0,1,2,\ldots\}$.}.  Therefore, G-type potentials are defined via the difference of two completely monotone functions on $(0,\infty)$ and weak G-type potentials are renormalized limits of G-type potentials.

Whenever we refer to a weak G-type potential $f$, we will associate to it the measure $\mu_f$ that appears in the definition.  It is clear that every G-type potential is a weak G-type potential, but the converse is false.  An example of a weak G-type potential that is not a G-type potential is the logarithm $f(x)=-\log(|x|^2)$.  Indeed, the logarithm motivates our definition of weak G-type potentials and it is true that
\begin{align}\label{logiden}
-\log(r^2)=\lim_{\alrz}\left(\int_{\alpha}^{\infty}\frac{e^{-r^2t}}{t}\textrm{d}t+\gamma+\log\alpha\right),
\end{align}
where $\gamma$ is the Euler-Mascheroni constant (see \cite[Equation 3.77]{Gamma}).  The corresponding function $f^*$ is given by $f^*(\alpha)=\gamma+\log\alpha$ and the corresponding measure $\mu_f$ is $t^{-1}\textrm{d}t$.  We will see some examples of G-type potentials in Section \ref{exam}, such as the Riesz potential $f(x)=|x|^{-s}$ where $s>0$.

With these preliminaries, we now present a definition that will be of fundamental importance to the remainder of the paper.

\vspace{2mm}

\noindent\textbf{Definition 2.}  Let $f$ be a weak G-type potential with corresponding measure $\mu_f$.  Assume that the matrix $V$ that determines $\mcv$ satisfies $\det(V)=1$ and consider $\omega_N=(x_j)_{j=1}^N\in(\bbR^d)^N$.  We define the \textit{periodic $f$-energy of $\omega_N$ associated with the lattice $\mcv$} by
\begin{align}\label{e1def}
E_{f}^{\rmp}(\omega_N):=\sum_{{1\leq k,j\leq N}\atop{k\neq j}}K_f^{\rmp}(x_j,x_k),
\end{align}
where
\begin{align}\label{1def}
K_f^{\rmp}(x,y):=\sum_{v\in\mcv}\int_1^{\infty}e^{-|x-y+v|^2t}\textrm{d}\mu_f(t)+\sum_{w\in\mcv^*\setminus\{0\}}e^{2\pi iw\cdot(x-y)}\int_0^1\frac{\pi^{d/2}}{t^{d/2}}e^{-\pi^2|w|^2/t}\textrm{d}\mu_f(t).
\end{align}
We also define
\begin{align}\label{minendef2}
\mce_f^{\rmp}(\mca,N):=\inf_{\omega_N\in\mca^N}E_f^{\rmp}(\omega_N),
\end{align}
where $\mca\subseteq\overline{\Omega}$ is infinite.

\vspace{2mm}

\noindent\textit{Remark.}  When we write $\int_a^bh(t)\textrm{d}\mu(t)$ for any measure $\mu$, we mean the integral over the half-open interval $[a,b)$.

\vspace{2mm}

\noindent\textit{Remark.}  We allow for the possibility of a configuration having infinite energy, but this can only happen if $x_i-x_j\in\mcv$ for some $i\neq j$.

\vspace{2mm}

The formula (\ref{1def}) arises from a renormalization process involving limits of classical periodic energy functionals.  Namely, we derive (\ref{1def}) by first modifying the potential so that the sum (\ref{periodf}) for the modified potential converges, and then continuously remove this added decay by pushing it out to infinity and renormalizing the sum in a way that is independent of the configuration.  Further details are provided in Section \ref{cf2}.

The focus of this paper will be on applications of (\ref{e1def}) and (\ref{1def}) to minimal energy problems.  Before we apply Definition 2, we list some of its properties.  The following theorem shows that (\ref{e1def}-\ref{1def}) has several properties one would expect from a periodic energy definition.

\begin{theorem}\label{main}
If $f$ is a weak G-type potential, then its kernel has the  following properties:
\begin{enumerate}[before=\itshape,font=\normalfont]
\item\label{welldef}   $K_f^{\rmp}$ is  well defined and continuous as a function from $\bbR^d\times \bbR^d$ to $\bbR\cup \{\infty\}$.  Furthermore, $K_f^{\rmp}(x,y)$ is finite for any   $x,y\in \bbR^d$ such that  $x -y\not\in\mcv$.

  \item\label{periodice}    $K_f^{\rmp}(x,y)$  is symmetric, periodic in each coordinate with respect to the lattice $\mcv$ and depends only on $x-y$.
\item\label{samesum} If $f$ is a G-type potential and the sum (\ref{periodf}) converges absolutely, then $E_f^{\cmp}$ and $E_f^{\rmp}$ differ by a constant multiple of $N(N-1)$, where the constant does not depend on the configuration.
\end{enumerate}
\end{theorem}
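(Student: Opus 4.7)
The main tool is the Jacobi/Poisson identity on the lattice $\mcv$,
\[
\sum_{v\in\mcv}e^{-|z+v|^2 t}=\frac{\pi^{d/2}}{t^{d/2}}\sum_{w\in\mcv^*}e^{-\pi^2|w|^2/t}e^{2\pi i w\cdot z},\qquad t>0,
\]
valid because $\det V=1$; this identity underlies the Ewald-type decomposition (\ref{1def}) and will drive the proofs of (a) and (c).

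For (a) the plan is to show each of the two sums in (\ref{1def}) converges absolutely and uniformly on compact subsets of $\{x-y\not\in\mcv\}$, so that continuity there follows by the Weierstrass M-test. For the Fourier sum, hypothesis (W2) handles the $w=w_0$ term, and for any other $w\in\mcv^*\setminus\{0\}$ the estimate $e^{-\pi^2|w|^2/t}\le e^{-\pi^2|w_0|^2/t}e^{-\pi^2(|w|^2-|w_0|^2)}$ on $(0,1]$ reduces its integral to a rapidly decaying multiple of the $w=w_0$ integral, and $\sum_{w\ne 0}e^{-\pi^2(|w|^2-|w_0|^2)}<\infty$. For the spatial sum I would fix any $q^*\neq 0$ and observe that $\int_1^\infty e^{-|q^*|^2 t}d|\mu_f|(t)<\infty$ (from finiteness of $f(q^*)$ and of $\mu_f^-$); for $v$ with $|x-y+v|\ge|q^*|$,
\[
\int_1^\infty e^{-|x-y+v|^2 t}d|\mu_f|(t)\le e^{|q^*|^2-|x-y+v|^2}\int_1^\infty e^{-|q^*|^2 t}d|\mu_f|(t),
\]
which is Gaussian-decaying in $|v|$ and hence summable over $\mcv$, while only finitely many nearer $v$'s contribute, each with bounded integral on a compact subset of $\{x-y\not\in\mcv\}$. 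Continuity into $\bbR\cup\{\infty\}$ at a point with $x-y=v_0\in\mcv$ is then obtained by isolating the $v=-v_0$ term, whose limit is $\mu_f([1,\infty))\in\bbR\cup\{\infty\}$ by monotone/dominated convergence, while the remaining summands stay bounded.

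Part (b) is purely formal re-indexing: dependence on $x-y$ alone is manifest, symmetry follows from $v\mapsto -v$ and $w\mapsto -w$ (which preserve $\mcv$ and $\mcv^*$ respectively), and $\mcv$-periodicity follows because translating $x\mapsto x+v_0$ with $v_0\in\mcv$ reindexes the spatial sum by $v\mapsto v-v_0$ and multiplies each Fourier term by $e^{2\pi i w\cdot v_0}=1$.

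For (c), assume $f$ is G-type and (\ref{periodf}) converges absolutely at $\omega_N$. Splitting $f(q)=\int_0^1+\int_1^\infty$ in the defining integral and applying Fubini (justified by absolute convergence plus the finite negative part of $\mu_f$) yields
\[
\sum_v f(x-y+v)=\sum_v\int_1^\infty e^{-|x-y+v|^2 t}d\mu_f(t)+\int_0^1\sum_v e^{-|x-y+v|^2 t}d\mu_f(t).
\]
Applying the Jacobi identity to the inner sum of the second term peels off a $w=0$ contribution $C_0:=\int_0^1\pi^{d/2}t^{-d/2}d\mu_f(t)$ and leaves precisely the Fourier sum in (\ref{1def}). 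Thus $\sum_v f(x-y+v)=K_f^{\rmp}(x,y)+C_0$, an $(x,y)$-independent shift, and summing over $i\neq j$ gives $E_f^{\cmp}(\omega_N)=E_f^{\rmp}(\omega_N)+N(N-1)C_0$. The main technical obstacle throughout is justifying the Fubini interchanges against the signed measure $\mu_f$, especially near $t=0$ where $t^{-d/2}$ is singular; the remedy is to handle $\mu_f^+$ and $\mu_f^-$ separately, using the finite negative part for routine bounds and hypotheses (G)/(W1)/(W2) to dominate the positive part in each regime, after which the rearrangements in (a) and (c) are legitimate.
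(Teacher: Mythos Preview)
Your arguments for parts (a) and (b) match the paper's proofs closely: the Fourier sum is controlled via (W2) together with the monotonicity bound $e^{-\pi^2|w|^2/t}\le e^{-\pi^2|w_0|^2/t}e^{-\pi^2(|w|^2-|w_0|^2)}$ on $(0,1]$, and the spatial sum is controlled by factoring out a fixed Gaussian and using (W1). Your treatment is a mild re-packaging of the paper's $\Theta_{\mcv}$ argument, and the analysis of continuity at points with $x-y\in\mcv$ is identical in spirit.

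For part (c), however, your route departs from the paper's and has a genuine gap. The paper does \emph{not} prove (c) by a direct Fubini/Poisson computation; it postpones the proof to Section~\ref{cf2} and invokes Theorem~\ref{cfthm2} with the Gaussian convergence factor $g_a(q)=e^{-|aq|^2}$ to write
\[
K_f^{\rmp}(x,y)=\lim_{a\to 0^+}\Bigl(\sum_{v\in\mcv}f(x-y+v)e^{-|a(x-y+v)|^2}-\int_0^{1}\frac{\pi^{d/2}}{(t+a^2)^{d/2}}\,\textrm{d}\mu_f(t)\Bigr),
\]
and then passes the limit through the (absolutely convergent) lattice sum. The finiteness of the constant is obtained \emph{a posteriori}, as the difference of two finite quantities; this is exactly the content of Proposition~\ref{l1sum}.

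Your direct argument needs two facts that do not follow from ``absolute convergence of (\ref{periodf}) plus finiteness of $\mu_f^-$'':
\begin{itemize}
\item the constant $C_0=\int_0^1 \pi^{d/2}t^{-d/2}\,\textrm{d}\mu_f(t)$ is finite. Finiteness of $\mu_f^-$ does not control $\int_0^1 t^{-d/2}\,\textrm{d}\mu_f^-(t)$ (take for instance $\textrm{d}\mu_f^-=t^{d/2-1}\chi_{(0,1)}\textrm{d}t$, a finite measure for which this integral diverges), and condition (G) alone does not bound $\int_0^1 t^{-d/2}\,\textrm{d}\mu_f^+(t)$ either;
\item the interchange $\sum_{v}\int_0^1=\int_0^1\sum_v$ against $\mu_f$. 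This would require $\sum_v\int_0^1 e^{-|x-y+v|^2 t}\,\textrm{d}|\mu_f|(t)<\infty$, i.e.\ essentially $\sum_v[f^+(x-y+v)+f^-(x-y+v)]<\infty$. Absolute convergence of $\sum_v|f^+-f^-|$ does not imply this, since $f^-$ (though bounded by $\mu_f^-((0,\infty))$) may decay too slowly to be summable over the lattice when $\mu_f^-$ puts mass near $t=0$.
\end{itemize}
The regularization by $e^{-a^2|q|^2}$ in the paper's approach is precisely what makes these interchanges legitimate for each $a>0$, after which the limit $a\to 0^+$ recovers the desired identity. Your sketch would be repaired by inserting such a cutoff (equivalently, working on $[\epsilon,1]$ and sending $\epsilon\to 0^+$ with a careful limiting argument), but as written the ``handle $\mu_f^\pm$ separately'' remedy does not close the gap.
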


\noindent\textit{Remark.} As shown in the proof, if $ \mu_f^+([1,\infty))<\infty$, then  $K_f^{\rmp}(x,y)$ is also finite for $x-y\in \mcv$,  otherwise $K_f^{\rmp}(x,y)=\infty$   for $x-y\in \mcv$.
 A configuration $(x_j)_{j=1}^N$  will be called \textit{non-degenerate} if $x_j-x_k\not\in \mcv$ for any $j\neq k$ and so the energy in \eqref{e1def} of such a configuration must be finite.

\vspace{2mm}

\begin{proof}[Proof of Theorem \ref{main}(a)]
By assumption on $f$, we have $\mu_f=\mu_f^+ -\mu_f^-$ for some positive $\mu_f^{+}$    and some finite positive measure $\mu_f^-$.     We shall begin by establishing that   the second sum in (\ref{1def}) converges uniformly on $\bbR^d\times \bbR^d$ by verifying that   the sum of the integrals converges absolutely.    From the representation
\begin{align*}
\sum_{w\in\mcv^*\setminus\{0\}}\frac{e^{-\pi^2|w|^2/t}}{t^{d/2}}&=\frac{e^{-\pi^2|w_0|^2/t}}{t^{d/2}}\sum_{w\in\mcv^*\setminus\{0\}}e^{\pi^2(|w_0|^2-|w|^2)/t}
\end{align*}
and the fact that the last sum is increasing in $t$ and  converges when $t=1$, it follows that   the sum of the integrands converges  and is bounded on $[0,1]$ by a constant multiple of $e^{-\pi^2|w_0|^2/t}t^{-d/2}$.
Consequently, applying condition (W2), we obtain that the second sum converges to a finite continuous function on
$\bbR^d\times \bbR^d$.

Next we consider the first sum in  (\ref{1def}).
Since
$$\sum_{v\in\mcv} e^{-|x-y+v|^2},$$
converges uniformly for $(x,y)\in\bbR^d\times \bbR^d$ it follows that
\[
\sum_{v\in\mcv}\int_1^{\infty}e^{-|x-y+v|^2t}\textrm{d}\nu(t),
\]
also converges uniformly on $\bbR^d\times \bbR^d$ if $\nu$ is a finite measure. Thus, if $\mu_f^+([1,\infty))$ is finite, we have that $K_f^p$ is continuous and finite on  $\bbR^d\times \bbR^d$.

Finally, we consider the case that $\mu_f^+([1,\infty))=\infty$.  Let $x,y\in \bbR^d$ be such that $x-y\not\in \mcv$ and
choose $\delta$ so that $0<\delta<|x -y+v|^2$ for all $v\in\mcv$.  Define
\[
\Theta_\mcv(t):=\sum_{v\in\mcv}e^{(\delta-|x-y+v|^2)t},
\]
and observe from the finiteness property of (W1) that
\begin{align*}
\sum_{v\in\mcv}\int_1^{\infty}e^{-|x-y+v|^2t}\textrm{d}\mu_f^+(t)&=\int_1^{\infty}\sum_{v\in\mcv}e^{(\delta-|x-y+v|^2)t}e^{-\delta t}\textrm{d}\mu_f^+(t)\\
&=\int_1^{\infty}\Theta_{\mcv}(t)e^{-\delta t}\textrm{d}\mu_f^+(t)<\infty,
\end{align*}
since $\Theta_{\mcv}(t)$ is bounded and decreasing on $[1,\infty)$.  This establishes convergence of the first sum in (\ref{1def}).  It is not difficult to show that the convergence is uniform on any closed subset of $\mathcal{D}'=\{(x,y)\in \bbR^d\times \bbR^d: x-y\not\in \mcv\}$ and so the first sum is continuous on $\mathcal{D}'$.

Fix $v\in \mcv$.  For  $x-y$ in a sufficiently small neighborhood of $-v$, the dominant term in the first sum in  (\ref{1def}) is $h(x-y):=\int_1^{\infty}e^{-|x-y+v|^2t}\textrm{d}\mu_f^+(t)$ while the remainder is continuous and finite for $x-y$ in this neighborhood.  Since $\mu_f^+([1,\infty))=\infty$, $h(x-y)\to \infty$ as $x-y\to -v$.     Consequently, $K_f^p$ is continuous as a function from $\bbR^d\times \bbR^d$ to $\bbR\cup\{\infty\}$.
\end{proof}

\begin{proof}[Proof of Theorem \ref{main}(b)]
The symmetry and periodicity of the kernel is clear from the form of the kernel and the definition of the dual lattice.
\end{proof}

\noindent\textit{Remark.}  We postpone the proof of Theorem \ref{main}(c) until Section \ref{cf2}.

\vspace{2mm}

One of our goals is  to investigate the asymptotics of the minimal energy (as defined in (\ref{minendef2})) as $N$ becomes large.  One of our results (see Theorem \ref{intasymp} below) states that if $\mu_f$ is positive (more generally, if the kernel $K_f^{\rmp}$ is integrable), then the limit:
\[
\lim_{\bnri}\frac{\mce_f^{\rmp}(\mca,N)}{N^2},
\]
 exists, is finite, and can be expressed as an explicit integral provided $\mca$ satisfies some additional hypotheses.  We will apply this result to determine the leading order of growth of the minimal periodic energy corresponding to the potential function $f_s(x):=|x|^{-s}$ for all values of $s\in(0,d)$ when $\mca=\Omega$ (see Corollary \ref{slessdexam}).  When $s\geq d$, we will show that the leading order of growth is the same as in the non-periodic setting, even if $\mca\neq\Omega$ (see Theorem \ref{equalproblem}).  This is not surprising because for large values of $s$, it is the nearest neighbor interactions that dominate the asymptotics, so the periodization of the problem should only have a slight effect.

In the next section, we will investigate minimal energy asymptotics for positive integrable kernels.  In Section \ref{exam}, will study the resulting kernels and the minimal energy asymptotics for Riesz and log-Riesz potentials and also introduce a convenient formula for the periodic logarithmic kernel.  In Section \ref{cf2} we will provide the details of our derivation of the formula (\ref{1def}) and show that it arises naturally from a certain renormalization process.  We will place a particular emphasis on the robust nature of our derivation and show that many different approaches to defining a periodic energy yield the same result.  Section \ref{d1} contains some detailed minimal energy calculations for several potentials - including the Riesz potential - in the one-dimensional setting.  These results are extremely precise and highlight the possible advantages of considering the periodic problem when studying minimal energy configurations.

\subsection{Notation.}
Throughout the remainder of the paper, we will use $x_{jk}$ to mean $x_j-x_k$.  We will always assume that our lattice $\mcv$ is determined by a matrix $V$ satisfying $\det(V)=1$; i.e., the {\em co-volume of $\mcv$} is 1.  This can always be achieved by an appropriate rescaling of the lattice and will simplify some of our formulas.  For any integrable function $h$, we denote its Fourier transform by $\hat{h}$, that is,
\[
\hat{h}(y):=\int_{\bbR^d}e^{-2\pi iy\cdot t}h(t)dt.
\]
If $\nu$ is a signed measure, we will write $\hat{\nu}$ to denote its Fourier transform $\displaystyle{\hat{\nu}(y):=\int_{\bbR^d}e^{-2\pi iy\cdot t}d\nu(t)}$.  We will use $\mch_q(X)$ to denote the $q$-dimensional Hausdorff measure of a set $X$.

\section{Integrable Kernels}\label{integrable}

In this section, we will fix $\mca\subseteq\Omega$ to be an infinite set that is compact in the quotient topology on $\bbR^d/\mcv$.  Let $\mcm_{+,1}(\mca)$ be the collection of all positive probability measures with support in $\mca$, where we define the support of the measure in the topology of $\bbR^d/\mcv$.
Our goal in this section is to prove the following pair of theorems:

\begin{theorem}\label{intthrm}
Suppose $f$ is a weak G-type potential and there exists some $\lambda\in\mcm_{+,1}(\mca)$ satisfying
\begin{align}\label{elldef}
\normalfont\mcl_f(\lambda):=\int\int K^{\rmp}_f(x,y)\textrm{d}\lambda(x)\textrm{d}\lambda(y)<\infty.
\end{align}
If the signed measure $\mu_f$ associated with $f$ satisfies
\begin{align}\label{gfin}
\normalfont\int_0^{\infty}\frac{\pi^{d/2}}{t^{d/2}}e^{-\pi^2|w|^2/t}\textrm{d}\mu_f(t)>0,\qquad \mbox{ for all }w\in\mcv^*\setminus\{0\},
\end{align}
then   the set
\begin{align}\label{minset}
\left\{\lambda\in\mcm_{+,1}(\mca):\mcl_f(\lambda)=\inf_{\nu\in\mcm_{+,1}(\mca)}\mcl_f(\nu)\right\},
\end{align}
consists of a single element  (denoted by $\nu_f$).

In particular, if $\mathcal{A}=\Omega$, then $\nu_f$ is $d$-dimensional Lebesgue measure restricted to $\Omega$
and
\begin{align}\label{ellA}
\normalfont\mcl_f(\nu_f):=\int\int K^{\rmp}_f(x,y)\textrm{d}x\textrm{d}y=\int_1^\infty \frac{\pi^{d/2}}{t^{d/2}}\textrm{d}\mu_f(t).
\end{align}
\end{theorem}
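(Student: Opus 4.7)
My plan is to diagonalize the quadratic form $\lambda\mapsto\mcl_f(\lambda)$ by expanding the kernel $K_f^{\rmp}$ as a Fourier series on the torus $\bbR^d/\mcv$. The key tool is the Poisson summation formula: since $\det V=1$, for each fixed $t>0$,
\[
\sum_{v\in\mcv}e^{-|x-y+v|^2 t}=\sum_{w\in\mcv^*}\frac{\pi^{d/2}}{t^{d/2}}\,e^{-\pi^2|w|^2/t}\,e^{2\pi i w\cdot(x-y)}.
\]
Integrating this identity against $d\mu_f(t)$ on $[1,\infty)$ and combining with the second sum in (\ref{1def}), the contribution of the $w=0$ Fourier mode on the right becomes the constant $\int_1^\infty\pi^{d/2}t^{-d/2}\,d\mu_f(t)$, while the other modes combine to give
\[
K_f^{\rmp}(x,y)=\hat K(0)+\sum_{w\in\mcv^*\setminus\{0\}}\hat K(w)\,e^{2\pi i w\cdot(x-y)},
\]
where $\hat K(0):=\int_1^\infty\pi^{d/2}t^{-d/2}\,d\mu_f(t)$ and, for $w\neq 0$, $\hat K(w):=\int_0^\infty\pi^{d/2}t^{-d/2}e^{-\pi^2|w|^2/t}\,d\mu_f(t)$. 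Hypothesis (\ref{gfin}) is precisely the statement that $\hat K(w)>0$ for every $w\in\mcv^*\setminus\{0\}$.

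Integrating this expansion against $\lambda\otimes\lambda$ and using the symmetry $\hat\lambda(-w)=\overline{\hat\lambda(w)}$ (with $\hat\lambda(w):=\int e^{-2\pi i w\cdot x}\,d\lambda(x)$) yields the Plancherel-type identity
\[
\mcl_f(\lambda)=\hat K(0)+\sum_{w\in\mcv^*\setminus\{0\}}\hat K(w)\,|\hat\lambda(w)|^2.
\]
Since every $\hat K(w)>0$, the right-hand side is minimized exactly when $\hat\lambda(w)=0$ for all nonzero $w$, and the minimum value is $\hat K(0)$, necessarily finite by the hypothesis that some $\lambda$ has finite energy. When $\mca=\Omega$, normalized Lebesgue measure on $\Omega$---which is a probability because $|\Omega|=\det V=1$---satisfies $\int_\Omega e^{-2\pi i w\cdot x}\,dx=0$ for every $w\in\mcv^*\setminus\{0\}$ by orthogonality of characters on $\bbR^d/\mcv$, giving the explicit formula for $\mcl_f(\nu_f)$ and identifying $\nu_f$ with Lebesgue measure. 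For uniqueness on a general compact $\mca$, if $\lambda_1,\lambda_2$ both attain the infimum then $\eta:=\lambda_1-\lambda_2$ has $\hat\eta(0)=0$, and a polarization argument applied to the midpoint $(\lambda_1+\lambda_2)/2$ forces $\sum_{w\neq 0}\hat K(w)\,|\hat\eta(w)|^2=0$; strict positivity of each $\hat K(w)$ then gives $\hat\eta\equiv 0$, hence $\eta=0$.

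The main obstacle is justifying the rearrangements of sums and integrals in the derivation of the Fourier expansion of $K_f^{\rmp}$ and in the Plancherel identity, since $\mu_f$ is signed and $\mu_f^+$ need not be finite on $[1,\infty)$. I would handle this by first truncating $\mu_f$ to a bounded interval $[\alpha,M]$, where all series involved converge absolutely and uniformly by the bounds appearing in the proof of Theorem \ref{main}(a), so that Fubini and pointwise Poisson summation apply without issue; one then lets $\alpha\downarrow 0$ and $M\uparrow\infty$, using condition (W2) to control the small-$t$ end and the finiteness hypothesis $\mcl_f(\lambda_0)<\infty$ for some $\lambda_0$ (which in particular forces $\hat K(0)<\infty$), together with monotone convergence on the positive part of $\mu_f$, to pass to the limit.
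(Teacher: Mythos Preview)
Your approach is essentially the paper's: both derive a Plancherel-type identity expressing the energy as $\hat K(0)+\sum_{w\neq 0}\hat K(w)\,|\hat\lambda(w)|^2$ with strictly positive coefficients $\hat K(w)$, and then invoke the parallelogram/midpoint argument for uniqueness. The paper isolates the Fourier computation as a separate strict positive-definiteness lemma (Theorem~\ref{posdefker}) and, for $\mca=\Omega$, appeals to translation invariance (Haar measure) rather than checking Fourier coefficients directly, but the substance is the same.

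Two points deserve care. First, the sentence ``the right-hand side is minimized exactly when $\hat\lambda(w)=0$ for all nonzero $w$, and the minimum value is $\hat K(0)$'' is only correct when $\mca=\Omega$; for a proper compact $\mca$ there need not exist any probability measure supported on $\mca$ whose nonzero Fourier coefficients all vanish, so this neither identifies the infimum nor proves it is attained. Since you correctly handle uniqueness on general $\mca$ by the separate polarization argument, this does not break the proof, but the sentence as written is misleading.

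Second, the \emph{pointwise} Fourier expansion $K_f^{\rmp}(x,y)=\hat K(0)+\sum_{w\neq 0}\hat K(w)e^{2\pi i w\cdot(x-y)}$ cannot hold in the regime $\mu_f^+([1,\infty))=\infty$: the kernel then blows up as $x-y\to v\in\mcv$, while each partial sum is bounded, so one cannot first expand $K_f^{\rmp}$ and then integrate term-by-term. The paper avoids this by working directly with the integrated quantity $\iint K_f^{\rmp}\,d\lambda\,d\lambda$, applying Poisson summation \emph{inside} the double integral (see the passage from \eqref{hatv} through \eqref{lastsum}) and carefully justifying each interchange using the finite-energy hypothesis on the $\lambda_i$. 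Your truncation sketch can be pushed through, but it should be organized around the integrated energy rather than around a pointwise kernel expansion.
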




\noindent\textit{Remark.}  It is clear that the condition (\ref{gfin}) is satisfied if $\mu_f^-=0$.

\vspace{2mm}

\begin{theorem}\label{intasymp}
Suppose $f$ is a weak G-type potential.
\begin{itemize}[before=\itshape,font=\normalfont]
\item[I)] If (\ref{elldef}) holds for some $\lambda\in\mcm_{+,1}(\mca)$ and $\mu_f$ satisfies (\ref{gfin}), then
\begin{align}\label{n2lim}
\normalfont\lim_{\bnri}\frac{\mce_f^{\rmp}(\mca,N)}{N^2}=\int_{\mca}\int_{\mca}K_f^{\rmp}(x,y)\textrm{d}\nu_f(x)\textrm{d}\nu_f(y),
\end{align}
where $\nu_f$ is the unique element of the set (\ref{minset}).
\item[II)]  If $\mu_f$ satisfies (\ref{gfin}) but $\mcl_f(\lambda)=\infty$ for all $\lambda\in\mcm_{+,1}(\mca)$, then the limit in (\ref{n2lim}) is positive infinity.
\end{itemize}
\end{theorem}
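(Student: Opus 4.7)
The plan is to establish matching upper and lower bounds for $\mce_f^{\rmp}(\mca,N)/N^2$ through standard potential-theoretic arguments, using the regularity of $K_f^{\rmp}$ supplied by Theorem \ref{main}. By Theorem \ref{main}(a)-(b), $K_f^{\rmp}$ is lower semi-continuous as a function into $\bbR\cup\{\infty\}$ and $\mcv$-periodic, so it descends to a lower semi-continuous function on the compact quotient $(\bbR^d/\mcv)\times(\bbR^d/\mcv)$ and in particular is bounded below by a constant $C\in\bbR$. Consequently $E_f^{\rmp}$ is lower semi-continuous on the compact set $\mca^N$, so a minimizing configuration $\omega_N^*=(x_j^{(N)})_{j=1}^N$ exists for every $N$ (with the value $+\infty$ permitted).

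For the upper bound I use a first-moment averaging argument. Under hypothesis~(I), Theorem \ref{intthrm} provides a probability measure $\nu_f\in\mcm_{+,1}(\mca)$ with $\mcl_f(\nu_f)<\infty$. Drawing $N$ points independently according to $\nu_f$ produces a random configuration whose expected energy equals $N(N-1)\mcl_f(\nu_f)$; hence some realization $\omega_N\in\mca^N$ satisfies $E_f^{\rmp}(\omega_N)\leq N(N-1)\mcl_f(\nu_f)$, which gives $\limsup_{\bnri}\mce_f^{\rmp}(\mca,N)/N^2\leq \mcl_f(\nu_f)$.

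For the lower bound I form the empirical measure $\mu_N:=N^{-1}\sum_{j=1}^N\delta_{x_j^{(N)}}$ of a minimizer. By weak-$*$ sequential compactness of $\mcm_{+,1}(\mca)$, some subsequence $\mu_{N_k}$ converges weak-$*$ to a limit $\mu^*\in\mcm_{+,1}(\mca)$. To handle the possible diagonal singularity of $K_f^{\rmp}$ I truncate: for $M>C$ let $K_M:=\min(K_f^{\rmp},M)$, which is bounded and (as the minimum of two LSC functions) itself lower semi-continuous, so $\mu\mapsto \iint K_M\,d\mu\,d\mu$ is weak-$*$ lower semi-continuous on $\mcm_{+,1}(\mca)$. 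From
\[
\frac{E_f^{\rmp}(\omega_{N_k}^*)}{N_k^2}\geq \iint K_M\,d\mu_{N_k}\,d\mu_{N_k}-\frac{1}{N_k^2}\sum_{j=1}^{N_k}K_M\!\left(x_j^{(N_k)},x_j^{(N_k)}\right),
\]
with the last term bounded by $M/N_k\to 0$, I take the liminf to obtain $\liminf \mce_f^{\rmp}(\mca,N)/N^2\geq \iint K_M\,d\mu^*\,d\mu^*$, and then let $M\to\infty$ and invoke monotone convergence on $K_M-C\geq 0$ to conclude $\liminf \mce_f^{\rmp}(\mca,N)/N^2\geq \mcl_f(\mu^*)$.

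The two statements of the theorem now follow by combining the bounds: in case~(I), Theorem \ref{intthrm} guarantees $\mcl_f(\mu^*)\geq \mcl_f(\nu_f)$, which matches the upper bound; in case~(II), every $\mu\in\mcm_{+,1}(\mca)$ satisfies $\mcl_f(\mu)=\infty$, so $\liminf=\infty$ and the limit is $+\infty$. The main obstacle is the truncation step in the lower bound: one must verify both that weak-$*$ lower semi-continuity persists for the merely LSC (not continuous) truncated energy, and that the monotone passage $M\to\infty$ recovers the full energy $\mcl_f(\mu^*)$; the uniform lower bound $K_f^{\rmp}\geq C$ inherited from compactness of the quotient is precisely what makes these steps go through.
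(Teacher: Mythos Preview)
Your proof is correct and follows essentially the same standard potential-theoretic argument as the paper: an averaging upper bound against $\nu_f$, and a lower bound via empirical measures, weak-$*$ compactness, and approximation of $K_f^{\rmp}$ from below. The only cosmetic difference is that the paper approximates $K_f^{\rmp}$ from below by \emph{continuous} kernels $K_f^{\ell}$ and passes weak limits directly, whereas you truncate with $K_M=\min(K_f^{\rmp},M)$ and invoke weak-$*$ lower semi-continuity of the energy (which, unpacked, is the same continuous-approximation argument); one small point worth tightening is to first pass to a subsequence along which $\mce_f^{\rmp}(\mca,N)/N^2$ attains its $\liminf$ before extracting a weak-$*$ convergent subsequence of $\mu_N$, as the paper does.
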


Theorem \ref{intasymp}(I) tells us that if the kernel $K_f^{\rmp}$ does not blow up too quickly along the diagonal of $\mca\times\mca$, then we can write down the leading term in the asymptotic expansion of $\mce_f^{\rmp}(\mca,N)$.  This conclusion will also have implications for the macroscopic distribution as $\bnri$ of minimal energy configurations (which exist because $\mca$ is compact in $\bbR^d/\mcv$; see Corollary \ref{zeros} below).

It will be no trouble to prove Theorem \ref{intthrm} (using standard machinery) once we have established the following result:

\begin{theorem}\label{posdefker}
Suppose $f$ is a weak G-type potential and $\mu_f$ satisfies (\ref{gfin}).  If $\lambda=\lambda_1-\lambda_2$, where each $\lambda_i\in\mcm_{+,1}(\mca)$ and
\begin{align}\label{finite energy}
\normalfont\int_{\mca}\int_{\mca}K_f^{\rmp}(x,y)\,\textrm{d}\lambda_i(x)\textrm{d}\lambda_i(y)<\infty,\qquad\qquad i=1,2
\end{align}
then
\begin{align}\label{posconc}
\normalfont\int_{\mca}\int_{\mca}K_f^{\rmp}(x,y)\,\textrm{d}\lambda(x)\textrm{d}\lambda(y)\geq0,
\end{align}
with equality if and only if $\lambda$ is the zero measure.
\end{theorem}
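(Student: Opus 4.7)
The plan is to use Poisson summation to recast $K_f^{\rmp}$ as a Fourier series on the torus $\bbR^d/\mcv$, which converts $\iint K_f^{\rmp}\,d\lambda\otimes d\lambda$ into a manifestly non-negative combination of squared Fourier coefficients of $\lambda$.

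For each $t>0$, set $G_t(z):=\sum_{v\in\mcv}e^{-|z+v|^2t}$. Since $\det(V)=1$ and the Fourier transform of $x\mapsto e^{-|x|^2t}$ is $w\mapsto(\pi/t)^{d/2}e^{-\pi^2|w|^2/t}$, Poisson summation gives
\[
G_t(z)=\sum_{w\in\mcv^*}\frac{\pi^{d/2}}{t^{d/2}}\,e^{-\pi^2|w|^2/t}\,e^{2\pi iw\cdot z}.
\]
Inserting this identity (for $t\ge 1$) and its $w\ne 0$ part (for $t<1$) into \eqref{1def} recasts the kernel in the unified form
\[
K_f^{\rmp}(x,y)=\int_0^1\Bigl[G_t(x-y)-\frac{\pi^{d/2}}{t^{d/2}}\Bigr]d\mu_f(t)+\int_1^{\infty}G_t(x-y)\,d\mu_f(t).
\]

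Now write $\lambda=\lambda_1-\lambda_2$ and, viewing $\lambda$ as a finite signed measure on the torus, let $\hat\lambda(w):=\int e^{-2\pi iw\cdot x}\,d\lambda(x)$ for $w\in\mcv^*$; the total mass $\lambda(\mca)=0$ forces $\hat\lambda(0)=0$. Hence the constant $\pi^{d/2}/t^{d/2}$ integrates to zero against $d\lambda\otimes d\lambda$, and termwise integration of the Poisson series for $G_t$ yields
\[
\iint G_t(x-y)\,d\lambda(x)\,d\lambda(y)=\sum_{w\in\mcv^*\setminus\{0\}}\frac{\pi^{d/2}}{t^{d/2}}\,e^{-\pi^2|w|^2/t}\,|\hat\lambda(w)|^2,
\]
with the same right-hand side arising for $G_t-\pi^{d/2}/t^{d/2}$. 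Exchanging the $t$-integral against $d\mu_f$ with the $w$-sum (Tonelli separately on $\mu_f^+$ and $\mu_f^-$) then produces
\[
\iint K_f^{\rmp}(x,y)\,d\lambda(x)\,d\lambda(y)=\sum_{w\in\mcv^*\setminus\{0\}}c_w\,|\hat\lambda(w)|^2,\qquad c_w:=\int_0^{\infty}\frac{\pi^{d/2}}{t^{d/2}}\,e^{-\pi^2|w|^2/t}\,d\mu_f(t).
\]
Assumption \eqref{gfin} supplies $c_w>0$ for each $w\in\mcv^*\setminus\{0\}$, whence \eqref{posconc} is immediate. Equality forces $\hat\lambda(w)=0$ for every $w\in\mcv^*\setminus\{0\}$, and combined with $\hat\lambda(0)=0$ this means all Fourier coefficients of $\lambda$ vanish, forcing $\lambda=0$ by uniqueness of Fourier series on the torus (trigonometric polynomials with frequencies in $\mcv^*$ are uniformly dense in $C(\bbR^d/\mcv)$ by Stone--Weierstrass).

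The principal obstacle is the rigorous justification of the interchanges of sums and integrals when $\mu_f$ is signed and $\mu_f^+$ may be infinite, and in particular ensuring that the cross-term $\iint K_f^{\rmp}\,d\lambda_1\,d\lambda_2$ is well-defined. I would handle this by decomposing $\mu_f=\mu_f^+-\mu_f^-$. The $\mu_f^-$-part gives a bounded continuous kernel (by the argument in the proof of Theorem \ref{main}(a)), so all associated integrals and interchanges are trivial. For the $\mu_f^+$-part, the finite-energy hypothesis \eqref{finite energy} applied to each $\lambda_i$, together with the estimate $|\hat\lambda(w)|^2\le 2(|\hat\lambda_1(w)|^2+|\hat\lambda_2(w)|^2)$, yields absolute convergence of the non-negative series $\sum_{w\ne 0}c_w^+|\hat\lambda(w)|^2$ (where $c_w^+$ is $c_w$ with $\mu_f^+$ in place of $\mu_f$), which in turn licenses all the Fubini/Tonelli applications needed to derive the identity above.
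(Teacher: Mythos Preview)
Your approach is correct and is essentially the same as the paper's: both apply Poisson summation to convert $\iint K_f^{\rmp}\,d\lambda\,d\lambda$ into $\sum_{w\in\mcv^*\setminus\{0\}} c_w\,|\hat\lambda(w)|^2$, then invoke condition \eqref{gfin} for nonnegativity and Stone--Weierstrass (the paper's Lemma~\ref{noz}) for the equality case. The only difference is organizational: the paper treats the $\mcv$-sum and $\mcv^*$-sum in \eqref{1def} separately, computing $\iint e^{-|x-y+v|^2t}\,d\lambda\,d\lambda=\hat h_t(-v)$ for each $v$ before Poisson-summing, and handles the cross-term finiteness by splitting $\mcv$ into a finite ``dangerous'' set $\mcv_1$ and its complement, whereas you first unify the kernel via the theta identity and control the cross-term through the Fourier-side bound $|\hat\lambda(w)|^2\le 2(|\hat\lambda_1(w)|^2+|\hat\lambda_2(w)|^2)$---a slightly cleaner route to the same endpoint.
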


The proof will rely on our next lemma involving the Fourier transform.
\begin{lemma}\label{noz}
Let $\gamma$ be a signed measure on $\mca$ that can be written as the difference of two members of $\mcm_{+,1}(\mca)$.  If the Fourier transform $\hat{\gamma}(w)=0$ for all $w\in\mcv^*$, then $\gamma$ is the zero measure.
\end{lemma}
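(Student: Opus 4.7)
The plan is to push $\gamma$ forward to the compact torus $\bbR^d/\mcv$, use Stone--Weierstrass to show that $\int\phi\,d\gamma=0$ for every $\phi\in C(\bbR^d/\mcv)$, and then invoke the Riesz representation theorem to conclude $\gamma=0$. Morally, the lemma is nothing more than the uniqueness of Fourier coefficients on a torus, so the argument should be short.

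First, I would regard $\gamma=\lambda_1-\lambda_2$ as a finite signed Borel measure on the compact Hausdorff space $\bbR^d/\mcv$. This step is legitimate because $\lambda_1,\lambda_2$ are supported in $\mca\subseteq\Omega$ and $\Omega$ is a fundamental domain that injects into $\bbR^d/\mcv$, so the quotient map identifies no mass of $\lambda_1$ or $\lambda_2$. For each $w\in\mcv^*$, the character $\chi_w(t):=e^{2\pi iw\cdot t}$ is $\mcv$-periodic by definition of the dual lattice and therefore descends to a continuous function on $\bbR^d/\mcv$. The hypothesis $\hat\gamma(w)=0$ for all $w\in\mcv^*$ then says precisely that $\int\chi_w\,d\gamma=0$ for every $w\in\mcv^*$ (applied to both $w$ and $-w$, which both lie in $\mcv^*$).

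Next, I would apply the complex Stone--Weierstrass theorem to the $\bbC$-algebra $\mcp:=\mathrm{span}_{\bbC}\{\chi_w:w\in\mcv^*\}$ sitting inside $C(\bbR^d/\mcv)$. Since $\mcv^*$ is an additive group containing $0$, the identity $\chi_w\chi_{w'}=\chi_{w+w'}$ shows that $\mcp$ is closed under products and contains the constant functions; closure under complex conjugation comes from $w\mapsto -w$; and $\mcp$ separates points of $\bbR^d/\mcv$ because $\mcv^*$ spans $\bbR^d$ over $\bbR$, so any two distinct cosets $t_1+\mcv\neq t_2+\mcv$ are distinguished by some $\chi_w$. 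Consequently $\mcp$ is uniformly dense in $C(\bbR^d/\mcv)$, and by linearity together with the finiteness of the total variation $|\gamma|$, the vanishing $\phi\mapsto\int\phi\,d\gamma=0$ extends from $\mcp$ to all of $C(\bbR^d/\mcv)$.

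Finally, the Riesz representation theorem on the compact Hausdorff space $\bbR^d/\mcv$ identifies finite signed regular Borel measures with bounded linear functionals on $C(\bbR^d/\mcv)$; since $\gamma$ induces the zero functional, $\gamma=0$ on the torus and therefore on $\mca$. I do not anticipate a substantial obstacle: the only mildly technical point is verifying the point-separating hypothesis in Stone--Weierstrass, and that reduces to the elementary fact that the second dual $(\mcv^*)^*$ of a full-rank lattice is $\mcv$ itself, so the characters $\{\chi_w\}_{w\in\mcv^*}$ exhaust the continuous characters of the torus.
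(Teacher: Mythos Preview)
Your proof is correct and follows essentially the same approach as the paper: both arguments apply Stone--Weierstrass to the algebra spanned by the characters $\chi_w(x)=e^{2\pi i w\cdot x}$ with $w\in\mcv^*$, and then conclude that $\gamma$ annihilates all continuous functions. The only cosmetic difference is that the paper works directly on the compact set $\mca$ (in the quotient topology) rather than pushing forward to the full torus $\bbR^d/\mcv$, which spares the need to check that the pushforward identifies no mass and to invoke the Riesz representation theorem explicitly.
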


\begin{proof}
Throughout this proof, we keep in mind that $\mca$ is a (compact) subset of $\bbR^d/\mcv$.
Define
\[
\mcw:=\left\{k(x)=\sum_{j=0}^ma_je^{2\pi iw_j\cdot x}:a_j\in\bbR,\, w_j\in\mcv^*,\, m\in\bbN_0\right\}.
\]
Let $\overline{\mcw}$ be the closure of $\mcw$ in the uniform norm on $\mca$.  It is easily seen that $\overline{\mcw}$ is a closed algebra of continuous functions on $\mca$ that includes the constant functions and separates points.  The Stone-Weierstrass Theorem tells us that $\overline{\mcw}$ is all continuous functions on $\mca$.  Our hypotheses imply that $\gamma(k)=0$ for all $k\in\overline{\mcw}$, and hence $\gamma$ is the zero measure.
\end{proof}

\begin{proof}[Proof of Theorem \ref{posdefker}]
For the purpose of future reference, we initially only assume that $\lambda_1$ and $\lambda_2$  are positive finite measures on $\mca$ satisfying \eqref{finite energy}; i.e, we postpone the assumption that $\lambda(\mca)=\lambda_1(\mca)-\lambda_2(\mca)=0$ until later in the proof.
First notice that the proof of Theorem \ref{main}(a) shows that the sum over $\mcv^*$ in (\ref{1def}) is uniformly bounded in $x$ and $y$.  Therefore, we may apply the Fubini Theorem and switch the infinite sum with the integral to obtain
\begin{align}
\nonumber&\int_{\mca}\int_{\mca}\left(\sum_{w\in\mcv^*\setminus\{0\}}e^{2\pi iw\cdot(x-y)}\int_0^1\frac{\pi^{d/2}}{t^{d/2}}e^{-\pi^2|w|^2/t}\textrm{d}\mu_f(t)\right)\textrm{d}\lambda(x)\textrm{d}\lambda(y)\\
\label{wpart}&\qquad\qquad=\sum_{w\in\mcv^*\setminus\{0\}}|\hat{\lambda}(w)|^2\int_0^1\frac{\pi^{d/2}}{t^{d/2}}e^{-\pi^2|w|^2/t}\textrm{d}\mu_f(t).
\end{align}

Consider now the sum over $\mcv$ in (\ref{1def}).  Given a measure $\lambda$ as in the statement of the theorem, define $G_t(x):=e^{-t|x|^2}$ and $h_t(x):=\widehat{G}_t(x)|\hat{\lambda}(x)|^2=\widehat{G}_t(x)\hat{\lambda}(x)\hat{\lambda}(-x)$.  Since $\lambda$ has bounded support, it is easily seen that $\hat{\lambda}$ is infinitely differentiable and every derivative of $\hat{\lambda}$ is a bounded function on $\bbR^d$.  Recall that $\widehat{G}_t$ is  a Gaussian; i.e.,
\begin{equation}\label{GaussFT}
\widehat{G}_t(y)=\frac{\pi^{d/2}}{t^{d/2}}e^{-\pi^2|y|^2/t}, \qquad (y\in \bbR^d),
\end{equation} and so  $h_t(x)$ is a Schwartz function for every $t>0$.

Notice that $G_t*\lambda$ and its Fourier transform are in $L^1(\bbR^d)$ (see \cite[Proposition 8.49]{Folland}), so we may use the Fourier inversion formula and the Fubini Theorem to see that for any fixed $v\in\mcv$ and any fixed $t>0$ we have
\begin{align}
\nonumber\int_{\mca}\int_{\mca} e^{-|x-y+v|^2t}\textrm{d}\lambda(x)\textrm{d}\lambda(y)&=\int_{\mca}(G_t*\lambda)(x+v)\,\textrm{d}\lambda(x)\\
\nonumber&=\int_{\mca}\int_{\bbR^d} e^{2\pi i(x+v)\cdot q}(\widehat{G_t*\lambda})(q)\,\textrm{d}q\,\textrm{d}\lambda(x)\\
\nonumber&=\int_{\bbR^d}\int_{\mca} e^{2\pi i(x+v)\cdot q}(\widehat{G_t*\lambda})(q)\,\textrm{d}\lambda(x)\,\textrm{d}q\\
\nonumber&=\int_{\bbR^d} e^{2\pi iv\cdot q}h_t(q)\,\textrm{d}q\\
\label{hatv}&=\hat{h}_t(-v).
\end{align}

Let us split $\mcv$ into two subsets.  We define $\mcv_1$ to be all $v\in\mcv$ such that there exist two points $a,b\in\overline{\mca}$ with $a-b=v$ and we set $\mcv_2:=\mcv\setminus\mcv_1$ (here $\overline{\mca}$ means the closure of $\mca$ in $\bbR^d$).  As in the proof of Theorem \ref{main}(a), it is straightforward to show that
\[
\sum_{v\in\mcv_2}\int_1^{\infty}e^{-|x-y+v|^2t}\textrm{d}\mu_f(t)
\]
is uniformly bounded in $x,y\in\mca$.  Therefore, we may change the order of integration and summation and write
\begin{align}\label{v2part}
\int_{\mca}\int_{\mca}\left(\sum_{v\in\mcv_2}\int_1^{\infty}e^{-|x-y+v|^2t}\textrm{d}\mu_f(t)\right)\textrm{d}\lambda(x)\textrm{d}\lambda(y)=\int_1^{\infty}\sum_{v\in\mcv_2}\hat{h}_t(-v)\textrm{d}\mu_f(t),
\end{align}
which we know is finite.

It remains to deal with the finite collection $\mcv_1$.  For any $v\in\mcv_1$, the finiteness of $\mu_f^-$ implies that there is a constant $\sigma>0$ so that
\[
\mcl_f(\lambda_i)\geq\int_1^{\infty}\int_{\mca}\int_{\mca} e^{-|x-y+v|^2t}\textrm{d}\lambda_i(x)\textrm{d}\lambda_i(y)\textrm{d}\mu_f^+(t)-\sigma,\qquad i=1,2,
\]
where $\mcl_f$ is defined as in (\ref{elldef}).  Therefore, when we write
\begin{align*}
&\int_{\mca}\int_{\mca}\int_1^{\infty}e^{-|x-y+v|^2t}\textrm{d}\mu_f^+(t)\textrm{d}\lambda(x)\textrm{d}\lambda(y)=\int_{\mca}\int_{\mca}\int_1^{\infty}e^{-|x-y+v|^2t}\textrm{d}\mu_f^+(t)\textrm{d}\lambda_1(x)\textrm{d}\lambda_1(y)\\
&\qquad+\int_{\mca}\int_{\mca}\int_1^{\infty}e^{-|x-y+v|^2t}\textrm{d}\mu_f^+(t)\textrm{d}\lambda_2(x)\textrm{d}\lambda_2(y)-2\int_{\mca}\int_{\mca}\int_1^{\infty}e^{-|x-y+v|^2t}\textrm{d}\mu_f^+(t)\textrm{d}\lambda_1(x)\textrm{d}\lambda_2(y)\\
&\quad=\int_1^{\infty}\int_{\mca}\int_{\mca}e^{-|x-y+v|^2t}\textrm{d}\lambda_1(x)\textrm{d}\lambda_1(y)\textrm{d}\mu_f^+(t)+\int_1^{\infty}\int_{\mca}\int_{\mca}e^{-|x-y+v|^2t}\textrm{d}\lambda_2(x)\textrm{d}\lambda_2(y)\textrm{d}\mu_f^+(t)\\
&\qquad\qquad\qquad\qquad-2\int_1^{\infty}\int_{\mca}\int_{\mca}e^{-|x-y+v|^2t}\textrm{d}\lambda_1(x)\textrm{d}\lambda_2(y)\textrm{d}\mu_f^+(t),
\end{align*}
we know that the first two terms in this last expression are finite, while the third is in $[-\infty,0]$.  Therefore, for each $v\in\mcv_1$, we have
\begin{align*}
\int_1^{\infty}\hat{h}_t(-v)\textrm{d}\mu_f^+(t)\in[-\infty,\infty).
\end{align*}
Furthermore, from (\ref{hatv}), we obtain
\begin{align*}
\int_1^{\infty}\hat{h}_t(-v)\textrm{d}\mu_f^-(t)=\int_1^{\infty}\int_{\mca}\int_{\mca}e^{-|x-y+v|^2t}\textrm{d}\lambda(x)\textrm{d}\lambda(y)\textrm{d}\mu_f^-(t)\in(-\infty,\infty),
\end{align*}
because $\mu_f^-$ is finite.  Consequently,
\begin{align*}
\int_1^{\infty}\sum_{v\in\mcv_1}\hat{h}_t(-v)\textrm{d}\mu_f(t)\in[-\infty,\infty),
\end{align*}
which - together with (\ref{v2part}) - implies
\[
\int_1^{\infty}\sum_{v\in\mcv}\hat{h}_t(v)\textrm{d}\mu_f(t)
\]
exists as an extended real number in $\bbR\cup\{-\infty\}$.

Since $h_t$ is a Schwartz function, we may apply the Poisson summation formula (see Appendix A) to conclude that
\begin{align}\label{vsumpart}
\int_1^{\infty}\sum_{v\in\mcv}\hat{h}_t(v)\textrm{d}\mu_f(t)=\int_1^{\infty}\sum_{w\in\mcv^*}h_t(w)\textrm{d}\mu_f(t)=\int_1^{\infty}\sum_{w\in\mcv^*}|\hat{\lambda}(w)|^2\widehat{G}_t(w)\textrm{d}\mu_f(t).
\end{align}
In order to show that this quantity is finite, it suffices to show that it is not negative infinity, and for this it is enough to consider the integral with respect to $\mu_f^-$.  Indeed, another application of Poisson summation shows
\begin{align*}
\int_1^{\infty}\sum_{w\in\mcv^*}|\hat{\lambda}(w)|^2\widehat{G}_t(w)\textrm{d}\mu_f^-(t)&=\int_1^{\infty}\sum_{v\in\mcv}\int_{\mca}\int_{\mca}e^{-|x-y+v|^2t}\textrm{d}\lambda(x)\textrm{d}\lambda(y)\textrm{d}\mu_f^-(t),
\end{align*}
which is clearly finite because $\mu_f^-$ is finite.  Therefore, the integrand on the far right-hand side of (\ref{vsumpart}) is integrable with respect to $|\mu_f|$, which means we may bring the sum outside of the integral.  Combining (\ref{wpart}) and (\ref{vsumpart}), we obtain
\begin{align}\label{lastsum}
\iint K_f^{\rmp}(x,y)\textrm{d}\lambda(x)\textrm{d}\lambda(y)=\sum_{w\in\mcv^*\setminus\{0\} }\int_0^{1}|\hat{\lambda}(w)|^2\widehat{G}_t(w)\textrm{d}\mu_f(t)+\sum_{w\in\mcv^*}\int_1^{\infty}|\hat{\lambda}(w)|^2\widehat{G}_t(w)\textrm{d}\mu_f(t).
\end{align}

Now we impose the assumption  that $\lambda(\mca)=\lambda(\bbR^d)=0$ so  that $\hat{\lambda}(0)=0$ and the first summation can be taken over all of $\mcv^*$.  Since both of these sums are absolutely convergent, we may combine them to get
\begin{align*}
\int\int K_f^{\rmp}(x,y)\textrm{d}\lambda(x)\textrm{d}\lambda(y)=\sum_{w\in\mcv^*}|\hat{\lambda}(w)|^2\int_0^{\infty}\widehat{G}_t(w)\textrm{d}\mu_f(t).
\end{align*}
Appealing to  equation \eqref{GaussFT},  condition (\ref{gfin}), and Lemma \ref{noz}   completes the proof.
\end{proof}

Now we can prove Theorem \ref{intthrm}.

\begin{proof}[Proof of Theorem \ref{intthrm}]
Suppose $\lambda_1$ and $\lambda_2$ are both minimizers of $\mcl_f$.  The proof of Theorem \ref{posdefker} shows that
\[
\left|\int_{\mca}\int_{\mca}K_f^{\rmp}(x,y)\textrm{d}\lambda_1(x)\textrm{d}\lambda_2(y)\right|<\infty,
\]
so we may apply the parallelogram law to deduce that
\[
\mcl_f\left(\frac{\lambda_1+\lambda_2}{2}\right)=\mcl_f(\lambda_1)-\mcl_f\left(\frac{\lambda_1-\lambda_2}{2}\right)\leq\mcl_f(\lambda_1),
\]
where the inequality is strict unless $\lambda_1=\lambda_2$ (by Theorem \ref{posdefker}).  The minimality of $\lambda_1$ implies $\lambda_1=\lambda_2$ as desired.

In the case $\mathcal{A}=\Omega$, the translation invariance of the periodic problem implies that the unique equilibrium measure $\nu_f$ must be the Haar measure which, restricted to $\Omega$, is Lebesgue measure.
Finally, applying \eqref{lastsum} with $\lambda=\nu_f$ and noting that $\widehat\nu_f(w)=0$ for $w\in \mcv^*\setminus\{0\}$ and $\widehat\nu_f(0)=1$, gives \eqref{ellA}.
\end{proof}

Theorem \ref{intthrm} establishes that the set (\ref{minset}) has a unique element when $f$ satisfies the appropriate hypotheses.  Now we can turn to the proof of Theorem \ref{intasymp}, which we will prove using a standard argument (see \cite[Chapter 2]{Landkof}) that we provide for completeness.

\begin{proof}[Proof of Theorem \ref{intasymp}:]
(I):  Let $\nu_f$ be the unique element of the set (\ref{minset}).  Define
\[
H(x_1,\ldots,x_N):=\sum_{k\neq j}K_f^{\rmp}(x_k,x_j).
\]
Then $\mce_f^{\rmp}(\mca,N)$ is the minimum of $H$ on $\mca^N$.
Therefore, we have the upper bound
\[
\mce_f^{\rmp}(\mca,N)\leq\int_{\mca}\cdots\int_{\mca}H(x_1,\ldots,x_N)\textrm{d}\nu_f(x_1)\cdots \textrm{d}\nu_f(x_N).
\]
This last integral is easily evaluated and equals $N(N-1)$ times the expression on the right hand side of (\ref{n2lim}).  Therefore,
\begin{align}\label{n2upp}
\frac{\mce_f^{\rmp}(\mca,N)}{N(N-1)}\leq\int_{\mca}\int_{\mca}K_f^{\rmp}(x,y)\textrm{d}\nu_f(x)\textrm{d}\nu_f(y).
\end{align}

To get a lower bound, let $K_f^{\ell}(x,y)$ be a kernel that is continuous on $\mca\times\mca$ and satisfies $K_f^{\ell}(x,y)\leq K_f^{\rmp}(x,y)$ for all $x,y\in\mca$.  For each $N\in\bbN$, let $\omega_N\in\mca^N$ be a configuration satisfying $E_f^{\rmp}(\omega_N)=\mce_f^{\rmp}(\mca,N)$.  Let $\nu_N$ be the measure that assigns weight $N^{-1}$ to each point in $\omega_N$ and let $\mcn\subseteq\bbN$ be a subsequence so that $N^{-2}\mce_f^{\rmp}(\mca,N)$ converges to its $\liminf$ as $\bnri$ through $\mcn$.  By taking a further subsequence if necessary, we may assume that $\nu_N$ converges weakly to some probability measure $\nu_{\infty}\in\mcm_{+,1}(\mca)$ as $\bnri$ through $\mcn$.  The continuity of $K_f^{\ell}$ implies
\begin{align*}
\frac{\mce_{f}^{\rmp}(\mca,N)}{N^2}&\geq\int_{\mca}\int_{\mca}K_f^{\ell}(x,y)\textrm{d}\nu_N(x)\textrm{d}\nu_N(y)-\frac{1}{N^2}\sum_{j=1}^NK_f^{\ell}(x_j,x_j)\\
&=\int_{\mca}\int_{\mca}K_f^{\ell}(x,y)\textrm{d}\nu_{\infty}(x)\textrm{d}\nu_{\infty}(y)+o(1)
\end{align*}
as $\bnri$ through $\mcn$.  By taking a supremum over all such continuous $K_f^{\ell}$, we deduce
\[
\liminf_{\bnri}\frac{\mce_{f}^{\rmp}(\mca,N)}{N^2}\geq\int_{\mca}\int_{\mca}K_f^{\rmp}(x,y)\textrm{d}\nu_{\infty}(x)\textrm{d}\nu_{\infty}(y),
\]
where we used Theorem \ref{main}(a) to approximate the kernel $K_f^{\rmp}$ from below by finite, continuous kernels.  In the case that $\mcl_f(\lambda)$ is finite for some $\lambda\in\mcm_{+,1}(\mca)$, we would have a contradiction with (\ref{n2upp}) for large $N$ unless $\nu_{\infty}=\nu_f$, so we have proven the claim.

\vspace{2mm}

(II):  If $\mcl_f(\lambda)=\infty$ for every $\lambda\in\mcm_{+,1}(\mca)$, then our above arguments show that the limit in (\ref{n2lim}) is positive infinity as desired.
\end{proof}

We can also state the following corollary, which was proven in the proof of Theorem \ref{intasymp}:

\begin{corollary}\label{zeros}
Let $f$, $\mu_f$, and $K_f^{\rmp}$ satisfy the hypotheses of Theorem \ref{intasymp}(I) and for each $N\in\bbN$, let $\omega_N$ be a configuration satisfying $E_f^{\rmp}(\omega_N)=\mce_f^{\rmp}(\mca,N)$.  If $\nu_N$ is the measure that assigns weight $N^{-1}$ to each point in $\omega_N$, then $\nu_f$ is the unique weak limit of the measures $\{\nu_N\}_{N\geq2}$ as $\bnri$.
\end{corollary}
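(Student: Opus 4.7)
The plan is to deduce the corollary directly from the lower-bound argument already present in the proof of Theorem \ref{intasymp}(I), combined with the uniqueness assertion of Theorem \ref{intthrm}. The standard subsequence principle reduces the claim $\nu_N\to\nu_f$ to showing that every weakly convergent subsequence of $(\nu_N)_{N\geq 2}$ has limit $\nu_f$; this suffices because $\mcm_{+,1}(\mca)$ is sequentially weak-$\ast$ compact (since $\mca\subseteq\bbR^d/\mcv$ is compact) and metrizable in the weak topology.

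Given any subsequence $\mcn\subseteq\bbN$, Helly's selection theorem produces a further subsequence $\mcn'\subseteq\mcn$ and some $\nu_\infty\in\mcm_{+,1}(\mca)$ so that $\nu_N\to\nu_\infty$ weakly as $N\to\infty$ through $\mcn'$. I would then replay the continuous-minorant step from the proof of Theorem \ref{intasymp}(I) along $\mcn'$ rather than along the specific subsequence chosen there: for every continuous kernel $K_f^\ell\leq K_f^{\rmp}$ on $\mca\times\mca$, weak convergence of $\nu_N$ along $\mcn'$ yields
\[
\liminf_{N\to\infty,\,N\in\mcn'}\frac{\mce_f^{\rmp}(\mca,N)}{N^2}\geq\int_{\mca}\int_{\mca}K_f^\ell(x,y)\,\textrm{d}\nu_\infty(x)\,\textrm{d}\nu_\infty(y),
\]
and taking the supremum over such continuous minorants (which is justified by Theorem \ref{main}(a), since $K_f^{\rmp}$ is lower semicontinuous and bounded below because $\mu_f^-$ is finite and the $\mcv^*$ sum in \eqref{1def} is uniformly bounded) gives $\liminf\geq\mcl_f(\nu_\infty)$ along $\mcn'$.

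On the other hand, Theorem \ref{intasymp}(I) identifies the full limit of $N^{-2}\mce_f^{\rmp}(\mca,N)$ as $\mcl_f(\nu_f)$, so in particular the liminf along $\mcn'$ equals $\mcl_f(\nu_f)$. Thus $\mcl_f(\nu_\infty)\leq\mcl_f(\nu_f)$, and combined with the minimality of $\nu_f$ this gives equality; the uniqueness part of Theorem \ref{intthrm} then forces $\nu_\infty=\nu_f$. Since every weakly convergent subsequence of $(\nu_N)$ has limit $\nu_f$, the full sequence converges weakly to $\nu_f$.

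I do not anticipate a substantive obstacle here: the corollary is essentially a repackaging of the compactness argument buried inside Theorem \ref{intasymp}(I), promoting its subsequential conclusion into a statement about the full sequence. The only mild care needed is the monotone approximation of $K_f^{\rmp}$ from below by continuous kernels, which Theorem \ref{intasymp}(I) already handles.
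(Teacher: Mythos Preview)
Your proposal is correct and matches the paper's approach: the paper simply states that the corollary ``was proven in the proof of Theorem \ref{intasymp},'' relying on exactly the continuous-minorant lower bound and the uniqueness from Theorem \ref{intthrm} that you invoke. Your write-up is in fact slightly more complete than the paper's, since you make explicit the subsequence principle needed to pass from the subsequential conclusion inside that proof to convergence of the full sequence $(\nu_N)$.
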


We will apply Theorem \ref{intthrm} to some specific examples in the next section, where we discuss potential functions of special interest in more detail.


\section{The Periodic Riesz, log-Riesz, and Logarithmic Potentials}\label{exam}

In this section, we will apply Definition 2 to define the periodic energy associated to some particularly interesting potential functions, namely the Riesz potential, the log-Riesz potential, and the logarithmic potential.  In the case of the Riesz potential, we will also discuss the asymptotic behavior of the minimal energy.

\subsection{The Periodic Riesz Energy.}\label{rsec}  In the section we consider the potential function $f_s(w)=|w|^{-s}$ for any $s>0$.  We will refer to the corresponding energy as the \textit{periodic Riesz} $s$-\textit{energy}.  

First let us briefly describe the situation when $s>d$.  In this case, the sum (\ref{periodf}) converges and has a convenient description in terms of special functions.  Let us denote (as usual) the \textit{Epstein Zeta function} of $\mcv$ by
\[
\zeta_{\mcv}(s):=\sum_{v\in\mcv\setminus\{0\}}|v|^{-s},\qquad s>d,
\]
which is well-defined for $s>d$.  Similarly, we will denote the \textit{Epstein Hurwitz Zeta function} of $\mcv$ by
\[
\zeta_{\mcv}(s;q):=\sum_{v\in\mcv}|q+v|^{-s},\qquad\qquad q\not\in\mcv,\, s>d,
\]
which is also well-defined for $s>d$.  We will see shortly that $\zeta_{\mcv}(s;q)-\frac{2\pi^{d/2}}{\Gamma(\frac{s}{2})(s-d)}$ is actually an entire function of $s\in\bbC$ whenever $q\in\bbR^d\setminus\mcv$.  Now we can write
\[
E_{f_{s}}^{\cmp}((x_j)_{j=1}^N)=\sum_{k\neq j}\zeta_{\mcv}(s;x_{kj}),\qquad\qquad s>d,
\]
with the understanding that $\zeta_{\mcv}(s;v)=\infty$ when $v\in\mcv$.  Properties of the classical periodic Riesz $s$-energy for $s>d$ have been studied before in \cite{CKJAMS,CSIMRN}.

Definition 2 extends the definition of the periodic Riesz $s$-energy to allow for the possibility that $s\leq d$.  For simplicity, we will denote the periodic Riesz $s$-energy by $E_s^{\rmp}$ and the corresponding kernel, and minimal energy by $K_s^{\rmp}$, and $\mce_s^{\rmp}$, respectively.  The kernel that arises from the formula (\ref{1def}) takes the following form:

\begin{theorem}\label{esthm}
The kernel for the periodic Riesz $s$-energy associated with the lattice $\mcv$ is given by
\begin{align}\label{rconc}
K_{s}^{\rmp}(x,y)&=\zeta_{\mcv}(s;x-y)+\frac{2\pi^{d/2}}{\Gamma(\frac{s}{2})(d-s)},\qquad\qquad s>0.
\end{align}
Furthermore, for $0<s<d$,
\begin{align}\label{RieszInt}
 \int_{\Omega}\int_{\Omega} K_{s}^{\rmp}(x,y)\textrm{d}x\textrm{d}y=\frac{2\pi^{d/2}}{\Gamma(\frac{s}{2})(d-s)}.
\end{align}
\end{theorem}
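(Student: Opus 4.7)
The plan is to identify the representing measure for $f_s$ and execute the classical Mellin--Poisson calculation, then extend by analytic continuation. For $s>0$, the Gamma integral $|q|^{-s} = \frac{1}{\Gamma(s/2)}\int_0^\infty t^{s/2-1}e^{-|q|^2 t}\,dt$ shows that $f_s$ is a G-type potential with representing measure $d\mu_{f_s}(t) = \frac{t^{s/2-1}}{\Gamma(s/2)}\,dt$, a positive measure. Substituting this into (\ref{1def}) writes $K_s^{\rmp}(x,y) = A(q,s) + B(q,s)$ with $q = x-y$, where
\begin{align*}
A(q,s) &:= \frac{1}{\Gamma(s/2)} \sum_{v\in\mcv} \int_1^\infty t^{s/2-1} e^{-|q+v|^2 t}\,dt,\\
B(q,s) &:= \frac{\pi^{d/2}}{\Gamma(s/2)} \sum_{w \in \mcv^* \setminus \{0\}} e^{2\pi i w \cdot q} \int_0^1 t^{(s-d)/2-1} e^{-\pi^2|w|^2/t}\,dt.
\end{align*}
The Gaussian factors in both series force $A(q,\cdot)$ and $B(q,\cdot)$ to be entire functions of $s$ for each fixed $q \notin \mcv$ (dominated convergence on compact subsets of $\bbC$).

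For $s > d$ and $q \notin \mcv$ the defining sum $\zeta_\mcv(s;q) = \sum_v |q+v|^{-s}$ converges absolutely, and I would apply the Mellin representation
\begin{align*}
\zeta_{\mcv}(s;q) = \frac{1}{\Gamma(s/2)}\int_0^\infty t^{s/2-1}\sum_{v\in\mcv} e^{-|q+v|^2 t}\,dt.
\end{align*}
Splitting the outer integral at $t = 1$ identifies the $\int_1^\infty$ part as $A(q,s)$. For the $\int_0^1$ part I would apply Poisson summation, which is legitimate because $\det V = 1$ and the Fourier transform of $G_t(x) = e^{-t|x|^2}$ is given by (\ref{GaussFT}), converting $\sum_{v\in\mcv} e^{-|q+v|^2 t}$ into $\sum_{w\in\mcv^*} e^{2\pi i w\cdot q}(\pi/t)^{d/2}e^{-\pi^2|w|^2/t}$. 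Isolating the $w = 0$ mode contributes $\frac{\pi^{d/2}}{\Gamma(s/2)}\int_0^1 t^{(s-d)/2-1}dt = \frac{2\pi^{d/2}}{\Gamma(s/2)(s-d)}$, while the $w \neq 0$ modes assemble into $B(q,s)$. Rearranging gives (\ref{rconc}) for $s > d$.

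Since $A(q,\cdot) + B(q,\cdot)$ is entire in $s$ and equals $\zeta_{\mcv}(s;q) - \frac{2\pi^{d/2}}{\Gamma(s/2)(s-d)}$ on $\{s > d\}$, this identity supplies the promised meromorphic continuation of $\zeta_{\mcv}(s;q)$ to $\bbC$. Taking the resulting formula as the definition of $\zeta_\mcv(s;q)$ for $0 < s \le d$, formula (\ref{rconc}) extends verbatim to all $s > 0$ by uniqueness of analytic continuation.

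For (\ref{RieszInt}), I would invoke equation (\ref{ellA}) of Theorem~\ref{intthrm}. Its hypotheses hold: $\mu_{f_s}$ is positive so (\ref{gfin}) is automatic, and for $0 < s < d$ the near-diagonal behavior $K_s^{\rmp}(x,y) \asymp |x-y|^{-s}$ as $x \to y$ (visible from the $v = 0$ term of $A$ via the substitution $u = |q|^2 t$) is locally integrable in $\bbR^d$, so Lebesgue measure on $\Omega$ has finite $\mcl_{f_s}$. Then (\ref{ellA}) yields
\begin{align*}
\iint_{\Omega \times \Omega} K_s^{\rmp}(x,y)\,dx\,dy = \frac{\pi^{d/2}}{\Gamma(s/2)} \int_1^\infty t^{(s-d)/2-1}\,dt = \frac{2\pi^{d/2}}{\Gamma(s/2)(d-s)},
\end{align*}
the last equality using $s < d$ for convergence. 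The main obstacle is justifying the Fubini/Poisson exchanges in the $s > d$ calculation, specifically the absolute convergence of $\int_0^1 t^{(s-d)/2-1}\sum_{w\in\mcv^*}e^{-\pi^2|w|^2/t}\,dt$ needed to cleanly separate the $w=0$ contribution; this follows by comparison with the minimal-length term $w_0$ exactly as in the proof of Theorem~\ref{main}(a). Everything else is routine.
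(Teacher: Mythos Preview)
Your argument is correct, but the route to (\ref{rconc}) differs from the paper's. You establish the identity for $s>d$ by a direct Mellin--Poisson computation: write $\zeta_{\mcv}(s;q)$ as $\frac{1}{\Gamma(s/2)}\int_0^\infty t^{s/2-1}\theta(q,t)\,dt$ with $\theta(q,t)=\sum_{v}e^{-|q+v|^2t}$, split at $t=1$, and Poisson-sum the lower piece to isolate the $w=0$ pole term and match the remainder to $B(q,s)$. The paper instead invokes Theorem~\ref{cfthm2} from Section~\ref{cf2}: it uses the Gaussian convergence factor $g_a(x)=e^{-|ax|^2}$ to express $K_s^{\rmp}(x,y)$ as the limit $\lim_{a\to0}\bigl(\sum_v|x-y+v|^{-s}e^{-a^2|x-y+v|^2}-\frac{1}{\Gamma(s/2)}\int_0^1\pi^{d/2}t^{s/2-1}(t+a^2)^{-d/2}\,dt\bigr)$ and then passes to the limit by dominated convergence. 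Your approach is more elementary and self-contained---it is essentially the classical Riemann--Epstein analytic continuation via theta functions and does not rely on the convergence-factor machinery developed later in the paper---whereas the paper's route illustrates that the renormalization formalism of Section~\ref{cf2} reproduces the expected kernel. For (\ref{RieszInt}) both proofs appeal to (\ref{ellA}); your explicit verification that Lebesgue measure has finite $\mcl_{f_s}$ for $0<s<d$ is a detail the paper leaves implicit.
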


\noindent\textit{Remark.}  An immediate consequence of \eqref{RieszInt} is
\begin{equation}\label{zetaint}
\int_{\Omega}\zeta_{\mcv}(s;x)\textrm{d}x=0, \qquad  0<s<d.
\end{equation}

\begin{proof}
We begin by recalling
\begin{equation}\label{mufs}
y^{-s/2}=\frac{1}{\Gamma(s/2)}\int_0^{\infty}t^{s/2-1}e^{-ty}\textrm{d}t,\qquad y>0,\quad s>0,
\end{equation}
which shows (with $y=|q|^2$) that the Riesz potential is a G-type potential for any $s>0$.  Furthermore, an application of Fubini's Theorem and Morera's Theorem to (\ref{1def}) shows that each term in both sums defining the kernel $K_s^{\rmp}$ is an entire function of $s$.  The uniform convergence of the sums (which follows from the calculations in the proof of Theorem \ref{main}(a)) shows that for any fixed distinct $x,y\in\Omega$, the function $K_s^{\rmp}(x,y)$ is an entire function of $s$.

When $s>d$, we may invoke Theorem \ref{cfthm2} to write
\[
K_s^{\rmp}(x,y)=\lim_{\arz}\left(\sum_{v\in\mcv}|x-y+v|^{-s}e^{-|a(x-y+v)|^2}-\frac{1}{\Gamma(s/2)}\int_0^1\frac{\pi^{d/2}t^{s/2-1}}{(t+a^2)^{d/2}}\textrm{d}t\right).
\]
If we apply Dominated Convergence, then we arrive at the following formula:
\begin{align}\label{ancont}
K_{s}^{\rmp}(x,y)=\zeta_{\mcv}(s;x-y)-\frac{2\pi^{d/2}}{\Gamma(\frac{s}{2})(s-d)},\qquad s>d.
\end{align}
We conclude that $K_s^{\rmp}(x,y)$ provides an analytic continuation of the right-hand side of (\ref{ancont}) to the whole complex plane (see also (\ref{longform}) below).  Since both sides of (\ref{ancont}) are entire functions of $s$ and they are equal on $(d,\infty)$, we must have equality for all $s>0$ as desired.

Finally, by \eqref{ellA} and \eqref{mufs}, we have
$$
 \int_{\Omega}\int_{\Omega} K_{s}^{\rmp}(x,y)\textrm{d}x\textrm{d}y=\frac{\pi^{d/2}}{\Gamma(\frac{s}{2})}\int_1^\infty t^{\frac{s-d}{2}-1}\textrm{d}t=\frac{2\pi^{d/2}}{\Gamma(\frac{s}{2})(d-s)}.
$$
\end{proof}

\noindent\textit{Remark.}  It is worth noting that the fact that the right-hand side of (\ref{ancont}) is an entire function of $s$ also implies that $\zeta_{\mcv}(s,q)-\zeta_{\mcv}(s)$ is an entire function of $s$ (see \cite[page 59]{Terras}).

\vspace{2mm}

Recall the incomplete Gamma function, $\Gamma(\sigma,x)$, given by
\[
\Gamma(\sigma,x):=\int_x^{\infty}t^{\sigma-1}e^{-t}\textrm{d}t.
\]
Evaluating the integrals in the formula (\ref{1def}) yields
\begin{equation}\label{longform}
\begin{split}
K_s^{\rmp}(x,y)=&\frac{\pi^{d/2}}{\Gamma(\frac{s}{2})}\sum_{w\in\mcv^*\setminus\{0\}}e^{2\pi iw\cdot (x-y)}\left(\pi|w|\right)^{s-d}\Gamma\left(\frac{d-s}{2},|\pi w|^2\right)\\
&\qquad\qquad\qquad\qquad\qquad+\frac{1}{\Gamma\left(\frac{s}{2}\right)}\sum_{v\in\mcv}\frac{1}{|x-y+v|^{s}}\Gamma\left(\frac{s}{2},|x-y+v|^2\right)
\end{split}
\end{equation}
(formula (\ref{longform}) was previously known when $d\leq 3$; see \cite[Equation 30]{polytrope}).

This formula enables us to write an explicit expression for the meromorphic continuation of $\zeta_{\mcv}(s;x-y)$ to all of $\bbC$ (see also \cite[Section 10]{Crandall}).  Furthermore, consider the Coulomb case in three dimensions, which corresponds to $s=1$ and $d=3$.  This case is of particular interest because it describes the electrostatic interaction of ions in a three-dimensional crystal.  If we use the identities
\[
\Gamma(1,x)=e^{-x},\qquad\qquad \frac{\Gamma\left(\frac{1}{2},x^2\right)}{\sqrt{\pi}}=\mbox{erfc}(x),
\]
then the right-hand side of (\ref{longform}) becomes
\begin{align}\label{ewaldform}
\sum_{w\in\mcv^*\setminus\{0\}}e^{2\pi iw\cdot(x-y)}\frac{e^{-\pi^2|w|^2}}{\pi|w|^2}+\sum_{v\in\mcv}\frac{\mbox{erfc}(|x-y+v|)}{|x-y+v|},
\end{align}
which is Ewald's formula for the periodic Coulomb potential on a lattice (up to a choice of constants; see \cite[Equation 4]{NaCe}).  Thus we see that Definition 2 enables us to recover this classical result.  

In the non-periodic Riesz energy situation, it is known that if $\mcb\subset\bbR^d$ is a closed $t$-rectifiable set (i.e.; $\mcb$ is the image of a compact set in $\bbR^t$ under a Lipschitz mapping, see \cite{CPD}) and $s>t$, then there is a constant $\textit{C}_{s,t}$ that is independent of $\mcb$ such that
\begin{equation}\label{Cst1}
\lim_{\bnri}\frac{\mce_s(\mcb,N)}{N^{1+s/t}}=\textit{C}_{s,t}\mch_t(\mcb)^{-s/t},\qquad s>t,
\end{equation}
(see \cite{CPD,HSNot,HSAdv}).  When $t=1$, it is known that $\textit{C}_{s,1}=\zeta_{\bbZ}(s)$ (see \cite{MMRS}), however, the exact value of $\textit{C}_{s,t}$ is not known for any values of $s$ or $t$ when $t\geq2$.  It is conjectured that when $t=2$, the constant $\textit{C}_{s,t}$ is equal to the Epstein Zeta function of the equilateral triangular lattice in $\bbR^2$ (see \cite{KS}).  Similar conjectures exist in dimensions $8$ and $24$, where certain canonical lattices are conjectured to resemble the minimal energy configurations for any value of $s>t$ (see \cite[Conjecture 2]{BHS2}).  Indeed, it is these conjectures that motivate the special interest in the Riesz potential.  Our first result establishes a connection between the periodic and non-periodic Riesz energy problems.

\begin{theorem}\label{equalproblem}
Suppose $\mcb\subseteq\overline{\Omega}$ is a compact and $t$-rectifiable set and $s\geq t$, where if $s=t$, we further assume that $\mcb$ is a subset of a $t$-dimensional $C^1$-manifold.  If $0<\mch_t(\mcb)<\infty$ and $\mch_t(\mcb\cap(\overline{\Omega}\setminus\Omega))=0$, then
\begin{align}\label{sameconc}
\lim_{\bnri}\frac{\mce_{s}^{\rmp}(\mcb,N)}{\mce_s(\mcb,N)}=1.
\end{align}
In particular, it holds that
\begin{align}
\label{sgeqt}\lim_{\bnri}\frac{\mce_s^{\rmp}(\mcb,N)}{N^{1+s/t}}&=\frac{\textit{C}_{s,t}}{\mch_t(\mcb)^{s/t}},\qquad s>t,\\
\label{seqt}\lim_{\bnri}\frac{\mce_t^{\rmp}(\mcb,N)}{N^2\log(N)}&=\frac{2\pi^{t/2}}{t\,\Gamma\left(\frac{t}{2}\right)\mch_t(\mcb)}.
\end{align}
\end{theorem}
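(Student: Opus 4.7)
The plan is to decompose the periodic Riesz kernel as $K_s^{\rmp}(x,y)=|x-y|^{-s}+g_s(x,y)$ and show that the remainder $g_s$ is bounded below on $\overline{\Omega}\times\overline{\Omega}$ and bounded above on compact subsets of $\Omega\times\Omega$. This will force $\mce_s^{\rmp}(\mcb,N)$ and $\mce_s(\mcb,N)$ to agree to leading order, since the discrepancy will be $O(N^2)$---strictly lower order than both $N^{1+s/t}$ (for $s>t$) and $N^2\log N$ (for $s=t$).

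To establish the regularity of $g_s$, I would start from the explicit formula \eqref{longform}. The $v=0$ summand in the lattice sum equals $\Gamma(s/2,|x-y|^2)/(\Gamma(s/2)|x-y|^s)$; using $\Gamma(s/2,u^2)=\Gamma(s/2)-\int_0^{u^2}t^{s/2-1}e^{-t}\,\textrm{d}t$, this equals $|x-y|^{-s}$ plus a function continuous at the diagonal. All other terms in \eqref{longform} are continuous on $\overline{\Omega}\times\overline{\Omega}$ away from the shifted diagonals $\{(x,y):x-y=-v,\,v\in\mcv\setminus\{0\}\}$, where $K_s^{\rmp}$ blows up to $+\infty$ while $|x-y|^{-s}$ stays finite. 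Consequently $g_s$ extends to a lower semicontinuous function on the compact set $\overline{\Omega}\times\overline{\Omega}$ with values in $\bbR\cup\{+\infty\}$, and therefore attains a finite minimum; there is a constant $M\geq 0$ with $g_s\geq -M$ everywhere.

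For the lower bound, any configuration $\omega_N=(x_j)_{j=1}^N\subset\mcb$ satisfies
\[
E_s^{\rmp}(\omega_N)=E_s(\omega_N)+\sum_{i\neq j}g_s(x_i,x_j)\geq E_s(\omega_N)-MN(N-1),
\]
and minimizing gives $\mce_s^{\rmp}(\mcb,N)\geq\mce_s(\mcb,N)-MN(N-1)$. For the matching upper bound, fix $\epsilon>0$ and set $\mcb_\epsilon:=\mcb\cap\{x\in\bbR^d:\operatorname{dist}(x,\partial\Omega)\geq\epsilon\}$. Then $\mcb_\epsilon\times\mcb_\epsilon$ is bounded away from the shifted diagonals, so $g_s$ is continuous and hence bounded above by some $M_\epsilon$ on this set. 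Taking $\omega_N^*$ optimal for $\mce_s(\mcb_\epsilon,N)$ (which exists, with $\mcb_\epsilon$ inheriting $t$-rectifiability, or the $C^1$-manifold structure when $s=t$, from $\mcb$) gives $\mce_s^{\rmp}(\mcb,N)\leq\mce_s(\mcb_\epsilon,N)+M_\epsilon N(N-1)$.

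Dividing both bounds by the appropriate normalization ($N^{1+s/t}$ when $s>t$, $N^2\log N$ when $s=t$), the $O(N^2)$ corrections vanish in the limit because $1+s/t>2$ when $s>t$ and because $\log N\to\infty$ when $s=t$. The hypothesis $\mch_t(\mcb\cap(\overline{\Omega}\setminus\Omega))=0$ combined with monotone convergence yields $\mch_t(\mcb_\epsilon)\to\mch_t(\mcb)$ as $\epsilon\to 0$, so \eqref{Cst1} applied to $\mcb_\epsilon$ gives $\mce_s(\mcb_\epsilon,N)/N^{1+s/t}\to \textit{C}_{s,t}\,\mch_t(\mcb_\epsilon)^{-s/t}\to \textit{C}_{s,t}\,\mch_t(\mcb)^{-s/t}$, which proves \eqref{sgeqt} (and analogously \eqref{seqt} for $s=t$), and \eqref{sameconc} follows immediately. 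The main technical obstacle is the regularity analysis of $g_s$---the exact cancellation of the $|x-y|^{-s}$ singularity between the non-periodic kernel and the $v=0$ term in \eqref{longform}---which is where the care is concentrated.
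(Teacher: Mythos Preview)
Your argument is correct in spirit and follows the same two-sided squeeze as the paper: bound $K_s^{\rmp}$ above and below by $|x-y|^{-s}$ plus $O(1)$ terms so that the two minimal energies differ by $O(N^2)$, then shrink the domain for the upper bound and send the shrinking parameter to zero. Your lower bound is in fact tidier than the paper's: instead of comparing to the auxiliary kernel $K^*(x,y)=\Gamma(s/2,|x-y|^2)/(\Gamma(s/2)|x-y|^s)$ and invoking the weighted Riesz asymptotics of \cite{CPD}, you observe directly that $g_s:=K_s^{\rmp}-|x-y|^{-s}$ is lower semicontinuous on $\overline{\Omega}\times\overline{\Omega}$ (it blows up to $+\infty$ at the shifted diagonals) and hence bounded below, which gives $\mce_s^{\rmp}(\mcb,N)\geq\mce_s(\mcb,N)-MN(N-1)$ in one line.

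There is one small slip in the upper bound. Your set $\mcb_\epsilon=\mcb\cap\{x:\operatorname{dist}(x,\partial\Omega)\geq\epsilon\}$ retreats from the \emph{entire} boundary of the parallelepiped, so $\bigcup_{\epsilon>0}\mcb_\epsilon=\mcb\cap\mathrm{int}(\overline{\Omega})$, and the claimed convergence $\mch_t(\mcb_\epsilon)\to\mch_t(\mcb)$ would require $\mch_t(\mcb\cap\partial\Omega)=0$, which is strictly stronger than the hypothesis $\mch_t(\mcb\cap(\overline{\Omega}\setminus\Omega))=0$. (If $\mcb$ sits in a ``lower'' face of $\Omega$, your $\mcb_\epsilon$ is empty for every $\epsilon>0$.) The fix is immediate once you note that a shifted diagonal $\{(x,y):x-y=-v\}$ with $v\in\mcv\setminus\{0\}$ can meet $\overline{\Omega}\times\overline{\Omega}$ only when at least one of $x,y$ lies in $\overline{\Omega}\setminus\Omega$; hence it suffices to retreat only from that set. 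Replacing $\mcb_\epsilon$ by $\mcb\cap\{\sum_j a_jv_j:a_j\le 1-\epsilon\}$---exactly the paper's $\delta\mcb$ with $\delta=1-\epsilon$---restores the argument and uses precisely the stated hypothesis.
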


\noindent\textit{Remark.}  One potential use of this result is that it provides an additional path for deducing the value of the constant $\textit{C}_{s,t}$ mentioned above by studying the minimal energy problem in the periodic setting when $\mcb=\overline{\Omega}$.  See Subsection~\ref{conj} for further details.   In Section \ref{d1}, we use our calculations to again  verify that $\textit{C}_{s,1}=\zeta_{\bbZ}(s)$.

\vspace{2mm}

\noindent\textit{Remark.}  If $s<t$, then the leading order behavior of $\mce_s^{\rmp}(\mcb,N)$ is given by Theorem \ref{intasymp} (see Corollary \ref{slessdexam} below).

\vspace{2mm}

\noindent\textit{Remark.}  The assumption $\mch_t(\mcb\cap(\overline{\Omega}\setminus\Omega))=0$ is not a severe one and we will discuss its implications following the proof of Theorem \ref{equalproblem}.

\begin{proof}[Proof of Theorem \ref{equalproblem}]
The results of \cite{CPD,HSNot,HSAdv} imply that (\ref{sgeqt}) and (\ref{seqt}) hold with $\mce_s^{\rmp}$ replaced by $\mce_s$, so those conclusions will follow immediately once we establish (\ref{sameconc}).  Also, the proof of Theorem \ref{main}(a) shows that the sum
\[
\sum_{w\in\mcv^*\setminus\{0\}}e^{2\pi iw\cdot x}\int_0^1\frac{\pi^{d/2}}{t^{d/2}}e^{-\pi^2|w|^2/t}\textrm{d}\mu_f(t)
\]
is bounded by a constant that is independent of $x\in\bbR^d$.  Therefore, when we sum over all pairs $(k,j)$ with $k\neq j$, the contribution to the energy from this sum is at most a constant multiple of $N(N-1)$, which is negligible for large $N$ compared to $N^{2}\log(N)$.  Therefore, it suffices to consider only the sum over $\mcv$ in (\ref{1def}).

To this end, we resort to equation (\ref{longform}).  It is trivial to see that the entire sum over $\mcv$ is greater than the single term corresponding to $v=0$.  Therefore, by (\ref{longform}) we have
\[
\sum_{k\neq j}\sum_{v\in\mcv}\int_1^{\infty}e^{|x_{kj}+v|^2t}\textrm{d}\mu_f(t)>\frac{1}{\Gamma(s/2)}\sum_{k\neq j}\frac{\Gamma(s/2;|x_{kj}|^2)}{|x_{kj}|^s}.
\]
If we define a kernel $K^*$ on $\mcb$ by
\begin{align}\label{kstar}
K^*(x,y):=\frac{\Gamma(s/2;|x-y|^2)}{\Gamma(s/2)|x-y|^s},
\end{align}
then \cite[Theorems 2 $\&$ 3]{CPD} tell us that the corresponding minimal energy is asymptotically the same as $\mce_s(\mcb,N)$ to leading order as $\bnri$.  Therefore,
\[
\liminf_{\bnri}\frac{\mce_s^{\rmp}(\mcb,N)}{\mce_s(\mcb,N)}\geq1,\qquad\qquad s\geq t.
\]

To bound the $\limsup$, we choose $\delta\in(0,1)$ and define
\[
\delta\mcb:=\mcb\bigcap\left\{x=\sum_{j=1}^da_jv_j:a_j\in[0,\delta],\, j=1,\ldots,d\right\}.
\]
Let us also assume that $\delta$ is large enough so that $\mch_t(\delta\mcb)>0$.  Suppose $\omega_N^{\sharp}\in(\delta\mcb)^N$ is a non-periodic energy minimizing configuration.  Then
\begin{align*}
\mce_s^{\rmp}(\mcb,N)\leq E_s^{\rmp}(\omega_N^{\sharp})&\leq\sum_{{x,y\in\omega_N^{\sharp}}\atop{x\neq y}}|x-y|^{-s}+\sum_{{x,y\in\omega_N^{\sharp}}\atop{x\neq y}}\sum_{v\in\mcv\setminus\{0\}}\frac{\Gamma(s/2;|x-y+v|^2)}{\Gamma(s/2)|x-y+v|^s}\\
&=\mce_s(\delta\mcb,N)+\sum_{{x,y\in\omega_N^{\sharp}}\atop{x\neq y}}\sum_{v\in\mcv\setminus\{0\}}\frac{\Gamma(s/2;|x-y+v|^2)}{\Gamma(s/2)|x-y+v|^s}.
\end{align*}
Since $\delta<1$, this last infinite sum is uniformly bounded in $x,y\in\delta\mcb$, so we can bound the above expression by $\mce_s(\delta\mcb,N)+\mco(N^2)$.  Since $\mch_t(\delta\mcb)>0$, we know that $\mce_s(\delta\mcb,N)$ grows at least as fast as $N^2\log(N)$ as $\bnri$.  Therefore,
\[
\frac{\mce_s^{\rmp}(\mcb,N)}{\mce_s(\delta\mcb,N)}\leq1+o(1),
\]
as $\bnri$.  To complete the proof, we invoke \cite[Theorems A $\&$ B]{CPD} to see that
\[
\lim_{\bnri}\frac{\mce_s(\delta\mcb,N)}{\mce_s(\mcb,N)}=\left(\frac{\mch_t(\mcb)}{\mch_t(\delta\mcb)}\right)^{s/t},\qquad s\geq t.
\]
Since $\delta\in(0,1)$ can be taken arbitrarily close to $1$ and we are assuming $\mch_t(\mcb\cap(\overline{\Omega}\setminus\Omega))=0$, this is the desired result.
\end{proof}

The assumption $\mch_t(\mcb\cap(\overline{\Omega}\setminus\Omega))=0$ in Theorem \ref{equalproblem} prevents the double counting of portions of $\overline{\Omega}$ that differ by an element of $\mcv$.  Indeed, if $\mcv$ is the square lattice in $\bbR^2$ and $\mcb=\partial\Omega$, then it is straightforward to check that $\mce_s^{\rmp}(\mcb,N)=\mce_s^{\rmp}(\partial\Omega\cap\Omega,N)$ and hence
\[
\lim_{\bnri}\frac{\mce_s^{\rmp}(\mcb,N)}{\mce_s(\partial\Omega\cap\Omega,N)}=1,
\]
which shows that Theorem \ref{equalproblem} fails without the assumption $\mch_t(\mcb\cap(\overline{\Omega}\setminus\Omega))=0$.  In some sense, we want $\mcb\subseteq\Omega$, yet we also want $\mcb$ to be compact.  It is not possible to insist on both of these requirements, especially since $\mcb=\overline{\Omega}$ is a meaningful example.  Our assumption implies that ``most" of $\mcb$ is contained in $\Omega$ in an appropriate sense.

If we combine Theorem \ref{equalproblem} with \cite[Theorems 2 $\&$ 3]{CPD}, we deduce the following corollary:

\begin{corollary}\label{equidist}
Suppose $\mcb$ is as in Theorem \ref{equalproblem} and let $\{\omega_N\}_{N\geq2}$ be a sequence of configurations so that $\omega_N\in\mcb^N$ and $E_{s}^{\rmp}(\omega_N)=\mce_{s}^{\rmp}(\mcb,N)$.  If $s\geq t$, then in the weak-$*$ topology, it holds that
\[
\lim_{\nri}\frac{1}{N}\sum_{x\in\omega_N}\delta_x=\frac{\mch_t(\cdot\cap\mcb)}{\mch_t(\mcb)}.
\]
\end{corollary}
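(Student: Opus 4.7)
My plan is to run the standard partition-plus-Jensen argument of \cite{CPD,HSNot,HSAdv} for non-periodic Riesz minimizers, using Theorem \ref{equalproblem} to transport the required energy asymptotics to the periodic setting. Set $\nu_N := \tfrac{1}{N}\sum_{x\in\omega_N}\delta_x$. Since $\mcb$ is compact in $\bbR^d/\mcv$, the family $\{\nu_N\}$ is tight, so every subsequence has a weak-$*$ limit, and it suffices to show that any such limit $\mu$ equals $\mu^\ast := \mch_t(\cdot\cap\mcb)/\mch_t(\mcb)$. After passing to a subsequence along which $\nu_N\to\mu$, the key preliminary fact I would record is that, by Theorem \ref{main} and the periodicity of $K_s^{\rmp}$, this kernel descends to a lower semicontinuous function on the compact space $(\bbR^d/\mcv)^2$ that is real-valued and continuous off the diagonal and equals $+\infty$ on it; consequently there exists a constant $C_0\geq 0$ with $K_s^{\rmp}\geq -C_0$ on $\overline{\Omega}\times\overline{\Omega}$.

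Next I would fix $\epsilon>0$ and decompose $\mcb$ into finitely many Borel pieces $B_1,\ldots,B_m$ of diameter less than $\epsilon$ such that $\mch_t(B_i)>0$, $\mch_t(B_i\cap(\overline{\Omega}\setminus\Omega))=0$, and $\mu(\partial B_i)=0$; each $B_i$ then inherits the hypotheses of Theorem \ref{equalproblem} from $\mcb$. Writing $N_i:=|\omega_N\cap B_i|$ and splitting the sum defining $E_s^{\rmp}(\omega_N)$ into intra-piece and cross-piece contributions, the uniform lower bound $K_s^{\rmp}\geq -C_0$ gives
\[
\mce_s^{\rmp}(\mcb,N) = E_s^{\rmp}(\omega_N) \;\geq\; \sum_{i=1}^m E_s^{\rmp}(\omega_N\cap B_i) - C_0 N(N-1) \;\geq\; \sum_{i=1}^m \mce_s^{\rmp}(B_i,N_i) - C_0 N(N-1).
\]
Dividing by $N^{1+s/t}$ when $s>t$, or by $N^2\log N$ when $s=t$, applying Theorem \ref{equalproblem} to $\mcb$ and each $B_i$, and letting $N\to\infty$ (so that $N_i/N\to\alpha_i:=\mu(B_i)$ thanks to the boundary condition), the error $C_0 N^2$ becomes negligible and one obtains
\[
\frac{1}{\mch_t(\mcb)^{s/t}} \;\geq\; \sum_{i=1}^m \frac{\alpha_i^{1+s/t}}{\mch_t(B_i)^{s/t}} \qquad (s>t), \qquad \frac{1}{\mch_t(\mcb)} \;\geq\; \sum_{i=1}^m \frac{\alpha_i^2}{\mch_t(B_i)} \qquad (s=t).
\]

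Setting $\beta_i:=\mch_t(B_i)/\mch_t(\mcb)$, I would close the argument by invoking Jensen's inequality for the strictly convex function $x\mapsto x^{1+s/t}$ (or Cauchy-Schwarz when $s=t$) in the form $\sum_i\beta_i(\alpha_i/\beta_i)^{1+s/t}\geq(\sum_i\alpha_i)^{1+s/t}=1$, with equality iff $\alpha_i/\beta_i$ is constant. Combined with the energy inequality, equality must hold, and since $\sum\alpha_i=\sum\beta_i=1$ this forces $\mu(B_i)=\mch_t(B_i)/\mch_t(\mcb)$ on every cell of every such partition; letting $\epsilon\downarrow 0$ then yields $\mu=\mu^\ast$. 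The main obstacle I anticipate is technical rather than conceptual: ensuring that the partitions $\{B_i\}$ can be chosen to respect all of the hypotheses of Theorem \ref{equalproblem} simultaneously ($t$-rectifiability, lying inside a $C^1$ $t$-manifold in the borderline case $s=t$, the null-set condition on $\overline{\Omega}\setminus\Omega$, and $\mu$-negligible boundaries); the construction is standard in the non-periodic theory of \cite{CPD}, and the only genuinely new ingredient required in the periodic setting is the uniform lower bound on $K_s^{\rmp}$ supplied by Theorem \ref{main}.
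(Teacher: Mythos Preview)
Your argument is correct, but it takes a genuinely different route from the paper. The paper does not redo the partition-plus-Jensen machinery; instead it introduces the auxiliary kernel $K^*(x,y)=\Gamma(s/2;|x-y|^2)/\bigl(\Gamma(s/2)|x-y|^s\bigr)$ from (\ref{kstar}), observes from the proof of Theorem \ref{equalproblem} that $E_s^{\rmp}(\omega_N)>E_s^*(\omega_N)+\alpha_s N(N-1)$ for a configuration-independent constant $\alpha_s$, and combines this with Theorem \ref{equalproblem} and \cite[Theorems 2 \& 3]{CPD} to conclude that the periodic minimizers $\omega_N$ are asymptotically optimal for the $K^*$-energy. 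Since $K^*$ falls within the scope of \cite{CPD}, the weak-$*$ limit follows immediately from those theorems. The paper's approach is shorter and sidesteps the technical obstacle you flag (arranging partitions $\{B_i\}$ that are simultaneously compact, $t$-rectifiable, contained in a $C^1$ $t$-manifold when $s=t$, $\mu$-continuity sets, and satisfy the null-set condition on $\overline{\Omega}\setminus\Omega$), because all of that work is already done once in \cite{CPD} for the non-periodic kernel $K^*$. Your approach, by contrast, is more self-contained and makes explicit exactly which structural features of $K_s^{\rmp}$ are used (only the uniform lower bound and the piecewise asymptotics), at the cost of reproving the combinatorial core.
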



\begin{proof}
Let $E_s^*$ be the energy functional associated to the kernel (\ref{kstar}) and let $\mce_s^*$ denote the corresponding minimal energy.  The proof of Theorem \ref{equalproblem} shows that $E_{s}^{\rmp}(\omega_N)>E_s^*(\omega_N)+N(N-1)\alpha_s$ for some $s$-dependent constant $\alpha_s$.  Theorem \ref{equalproblem} and \cite[Theorems 2 $\&$ 3]{CPD} imply that $E_s^*(\omega_N)=\mce_s^*(\mcb,N)(1+o(1))$ as $\bnri$.  The desired conclusion now follows from \cite[Theorems 2 $\&$ 3]{CPD}.
\end{proof}

We can also state a result related to Theorem \ref{equalproblem} that requires fewer geometric assumptions on the set $\mcb$, but assumes a certain separation between translates of $\mcb$ by elements of the lattice.

\begin{theorem}\label{septhrm}
Let $\mcb\subseteq\Omega$ be infinite and compact in $\bbR^d$ and suppose $s>0$ satisfies
\[
\lim_{\bnri}\frac{\mce_s(\mcb,N)}{N^2}=\infty.
\]
Then (\ref{sameconc}) is true.
\end{theorem}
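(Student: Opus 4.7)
The plan is to reduce Theorem~\ref{septhrm} to the estimate $|\mce_s^{\rmp}(\mcb,N)-\mce_s(\mcb,N)|=O(N^2)$ and then exploit the super-quadratic growth of $\mce_s(\mcb,N)$. Concretely, if one can establish the kernel-level identity $K_s^{\rmp}(x,y)=|x-y|^{-s}+B(x,y)$ with $B$ uniformly bounded on $\mcb\times\mcb$, then summing over the $N(N-1)$ ordered pairs of a configuration $\omega_N\in\mcb^N$ gives $|E_s^{\rmp}(\omega_N)-E_s(\omega_N)|\le CN(N-1)$. Applying this inequality to a minimizer of $E_s^{\rmp}$ (to get one direction) and to a minimizer of $E_s$ (for the other) yields $|\mce_s^{\rmp}(\mcb,N)-\mce_s(\mcb,N)|\le CN(N-1)$; dividing by $\mce_s(\mcb,N)$ and using the hypothesis that $\mce_s(\mcb,N)/N^2\to\infty$ then produces (\ref{sameconc}).

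The substantive work is establishing the uniform bound on $B$. First, since $\mcb\subseteq\Omega$ is compact in $\bbR^d$ and $\Omega$ is a fundamental domain for $\bbR^d/\mcv$, the compact set $\mcb-\mcb$ is disjoint from $\mcv\setminus\{0\}$, so there exists $\delta>0$ with $|x-y+v|\ge\delta$ for every $x,y\in\mcb$ and every $v\in\mcv\setminus\{0\}$. Now use the explicit formula (\ref{longform}) to split $K_s^{\rmp}(x,y)$ into three pieces, subtract $|x-y|^{-s}$ (which is precisely the singular part of the $v=0$ summand in the $\mcv$-sum, since $\Gamma(s/2;|x-y|^2)/\Gamma(s/2)=1-\gamma(s/2;|x-y|^2)/\Gamma(s/2)$), and verify boundedness of what remains:
\begin{itemize}
\item The Fourier-side sum over $\mcv^*\setminus\{0\}$ is uniformly bounded on $\bbR^d\times\bbR^d$ by the argument already carried out in the proof of Theorem~\ref{main}(a).
\item The $v=0$ remainder is $-\gamma(s/2;|x-y|^2)/(\Gamma(s/2)|x-y|^s)$, which extends continuously (and hence boundedly) to the closed compact set $\mcb\times\mcb$ thanks to the asymptotic $\gamma(s/2;r^2)\sim 2r^s/s$ as $r\to 0^+$ and its continuity elsewhere.
\item The sum over $v\in\mcv\setminus\{0\}$ is uniformly bounded on $\mcb\times\mcb$: each term is at most $\delta^{-s}$ in the denominator, while $\Gamma(s/2;|x-y+v|^2)\le\Gamma(s/2)$ and decays like $|v|^{s-2}e^{-|v|^2}$ for $|v|$ large, so the sum converges at a rate uniform in $x,y\in\mcb$.
\end{itemize}

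The main obstacle is this near-diagonal cancellation: one must carefully isolate the singular behaviour of $K_s^{\rmp}$ and check that it is precisely $|x-y|^{-s}$ plus a remainder with a finite limit as $x\to y$. This is the one place where the specific form of the incomplete Gamma function enters in an essential way; the rest of the argument is a mechanical application of the separation constant $\delta$ and the boundedness of the Fourier-side sum. Compared with the proof of Theorem~\ref{equalproblem}, no rectifiability or dimension-counting is needed: the assumption $\mce_s(\mcb,N)/N^2\to\infty$ is precisely what is required to absorb the $O(N^2)$ discrepancy between the periodic and non-periodic kernels.
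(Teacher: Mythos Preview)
Your argument is correct and in fact slightly more direct than the paper's. The paper proceeds by sandwiching $\mce_s^{\rmp}(\mcb,N)$ between $\mce_s^*(\mcb,N)+O(N^2)$ and $\mce_s(\mcb,N)+O(N^2)$, where $\mce_s^*$ is the minimal energy for the intermediate kernel $K^*(x,y)=\Gamma(s/2;|x-y|^2)/(\Gamma(s/2)|x-y|^s)$ of (\ref{kstar}); it then needs a separate $\epsilon$--$\delta$ argument, splitting pairs into near and far, to show $(1-\epsilon)\mce_s(\mcb,N)\le \mce_s^*(\mcb,N)+O(N^2)$. You instead establish the sharper pointwise statement that $K_s^{\rmp}(x,y)-|x-y|^{-s}$ extends to a bounded function on the compact set $\mcb\times\mcb$, which delivers the two-sided bound $|\mce_s^{\rmp}(\mcb,N)-\mce_s(\mcb,N)|\le CN(N-1)$ in one stroke and avoids both the intermediate kernel and the near/far dichotomy. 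The paper's route has the minor advantage of reusing more of the machinery already developed for Theorem~\ref{equalproblem}; your route is shorter and isolates exactly what the compactness-in-$\bbR^d$ hypothesis buys, namely the uniform separation $\inf_{x,y\in\mcb,\ v\in\mcv\setminus\{0\}}|x-y+v|>0$.
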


\begin{proof}
The compactness assumption on $\mcb$ assures us that $\mcb=\delta\mcb$ for some $\delta<1$.  Let $K^*$ be given by (\ref{kstar}) and let $\mce_s^*$ be the corresponding minimal energy.  The proof of Theorem \ref{equalproblem} shows that
\[
\mce_s^*(\mcb,N)+\mco(N^2)\leq\mce_s^{\rmp}(\mcb,N)\leq\mce_s(\mcb,N)+\mco(N^2).
\]
Therefore, we need to show that
\[
\liminf_{\bnri}\frac{\mce_s^*(\mcb,N)}{\mce_s(\mcb,N)}\geq1.
\]
Consider an $N$-tuple   $\omega_N=(x_j)_{j=1}^N$  of distinct points in $\mcb$, fix some $i\in\{1,\ldots,N\}$, and choose any $\delta>0$.  We can write
\[
\sum_{{1\leq j\leq N}\atop{j\neq i}}K^*(x_i,x_j)=\sum_{\{j:0<|x_j-x_i|<\delta\}}K^*(x_i,x_j)+\sum_{\{j:|x_j-x_i|\geq\delta\}}K^*(x_i,x_j).
\]
The second of these sums is $\mco(N)$, where the implied constant depends on $\delta$.  Suppose $\epsilon>0$ is given.  The continuity of the incomplete Gamma function in the second argument implies that if $\delta$ is small enough, then
\[
\sum_{\{j:0<|x_j-x_i|<\delta\}}K^*(x_i,x_j)>\sum_{\{j:0<|x_j-x_i|<\delta\}}\frac{1-\epsilon}{|x_i-x_j|^{s}}.
\]
Therefore,
\begin{align*}
\sum_{{1\leq j\leq N}\atop{j\neq i}}K^*(x_i,x_j)&>\sum_{\{j:0<|x_j-x_i|<\delta\}}\frac{1-\epsilon}{|x_i-x_j|^{s}}+\mco(N)=\sum_{j\neq i}\frac{1-\epsilon}{|x_i-x_j|^{s}}+\mco(N).
\end{align*}
If we now sum this relation over $i$, we get
\[
\sum_{{1\leq i,j\leq n}\atop{i\neq j}}K^*(x_i,x_j)>(1-\epsilon)\sum_{{1\leq i,j\leq n}\atop{i\neq j}}|x_i-x_j|^{-s}+\mco(N^2).
\]
Taking the infimum of the left hand side over all $\omega_N\in\mcb^N$ shows
\[
(1-\epsilon)\mce_s(\mcb,N)<\mce_s^*(\mcb,N)+\mco(N^2).
\]
Dividing through by $\mce_s(\mcb,N)$, taking $\bnri$, and then taking $\epsilon\rightarrow0$ completes the proof.
\end{proof}

If $0<s<d$, the form of the kernel given in equation (\ref{longform}) shows it is integrable with respect to Lebesgue measure on the set $\Omega$, so using the last assertion of Theorem~\ref{intthrm} together with Theorems \ref{intasymp} and \ref{esthm},  we  deduce the leading order term in $\mce_s^{\rmp}(\Omega,N)$ as $\bnri$.

\begin{corollary}\label{slessdexam}
If $0<s<d$, then
\begin{align}\label{slessdcase}
\lim_{\bnri}\frac{\mce_s^{\rmp}(\Omega,N)}{N^2}=\int_{\Omega}\int_{\Omega}K_s^{\rmp}(x,y)\,{\rm d}x\,{\rm d}y=\frac{2\pi^{d/2}}{\Gamma(\frac{s}{2})(d-s)}.
\end{align}

\end{corollary}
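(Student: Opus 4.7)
The plan is to invoke Theorem \ref{intasymp}(I) directly with $\mca = \Omega$ and $f = f_s$, then identify the resulting integral using Theorem \ref{esthm}. There is very little new work to do; the task reduces to checking that the hypotheses of Theorem \ref{intasymp}(I) are met and then reading off the limit.

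First I would verify that $f_s(q) = |q|^{-s}$ is a weak G-type potential. This is exactly what equation \eqref{mufs} provides: with $y = |q|^2$ one has
$$|q|^{-s} = \frac{1}{\Gamma(s/2)}\int_0^\infty t^{s/2-1} e^{-|q|^2 t}\,dt,$$
so $f_s$ is even a G-type potential with associated measure $d\mu_{f_s}(t) = \Gamma(s/2)^{-1} t^{s/2-1}\,dt$. In particular $\mu_{f_s}$ is a positive measure, so $\mu_{f_s}^- = 0$ and condition \eqref{gfin} holds automatically by the remark following Theorem \ref{intthrm}.

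Next I would produce a measure $\lambda \in \mcm_{+,1}(\Omega)$ with $\mcl_{f_s}(\lambda) < \infty$. Since $\det(V) = 1$, normalized $d$-dimensional Lebesgue measure on $\Omega$ coincides with Lebesgue measure on $\Omega$ and is an element of $\mcm_{+,1}(\Omega)$. Taking $\lambda$ to be this measure and applying Theorem \ref{esthm} (specifically \eqref{RieszInt} together with the fact that $0 < s < d$) gives
$$\mcl_{f_s}(\lambda) = \int_\Omega \int_\Omega K_s^{\rmp}(x,y)\,dx\,dy = \frac{2\pi^{d/2}}{\Gamma(s/2)(d-s)} < \infty,$$
as required.

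With both hypotheses verified, Theorem \ref{intasymp}(I) yields
$$\lim_{\bnri} \frac{\mce_{f_s}^{\rmp}(\Omega,N)}{N^2} = \int_\Omega \int_\Omega K_s^{\rmp}(x,y)\,d\nu_{f_s}(x)\,d\nu_{f_s}(y),$$
where $\nu_{f_s}$ is the unique minimizer of $\mcl_{f_s}$ on $\mcm_{+,1}(\Omega)$. By the second assertion of Theorem \ref{intthrm}, $\nu_{f_s}$ is Lebesgue measure on $\Omega$, and the right-hand side equals $2\pi^{d/2}/[\Gamma(s/2)(d-s)]$ by \eqref{RieszInt}. There is no genuine obstacle in this proof; the only point requiring a moment's care is noting that the normalization $\det(V) = 1$ makes Lebesgue measure on $\Omega$ already a probability measure so that one need not insert an extra normalization constant when applying Theorem \ref{intasymp}.
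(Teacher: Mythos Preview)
Your proof is correct and follows essentially the same route as the paper. The paper's argument is the sentence immediately preceding the corollary: integrability of $K_s^{\rmp}$ for $0<s<d$ together with the last assertion of Theorem~\ref{intthrm}, Theorem~\ref{intasymp}, and Theorem~\ref{esthm} yields the result; you have simply unpacked this in full detail. The only cosmetic difference is that the paper points to equation~\eqref{longform} to see directly that the kernel is integrable over $\Omega\times\Omega$, whereas you cite the already-computed value~\eqref{RieszInt} from Theorem~\ref{esthm} for the same purpose.
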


\medskip

\subsection{Conjectures for Optimal Periodic Riesz $s$-Energy\label{conj}}

Regarding the constant $C_{s,t}$ for $s>t$ appearing in \eqref{Cst1} and \eqref{sgeqt}, it is known (cf. \cite{BHS2}) that
\begin{equation}\label{Clatt}
C_{s,t}\le \zeta_{\mathcal{V}}(s),
\end{equation}
for any $t$-dimensional lattice $\mcv$ of co-volume $1$.   For dimensions $t=2, 4, 8$, and $24$,  it has been conjectured (cf. \cite{BHS2} and references therein) that equality respectively  holds in \eqref{Clatt} for the   equilateral triangular (hexagonal) lattice, the $D_4$ lattice, the $E_8$ lattice, and the Leech lattice.
These conjectures   in turn lead to the conjectured numerical values for the asymptotic energy  expressions in \eqref{sgeqt}.

Denoting the above lattices by $\mcv_2, \mcv_4, \mcv_8$, and $\mcv_{24}$,
we further conjecture that, for all $s>0$, optimal configurations $\omega_N^*$ for the periodic Riesz $s$-energy when ${\mathcal{B}}$ equals the fundamental domain $\Omega=\Omega_t$ for $\mcv_t$ and  $N=m^t$,   $m=2, 3, 4, \ldots$, are given by scaled versions of the lattices restricted to $\Omega$; that is,   $\omega_N^*:=(1/m)\mcv_t\cap \Omega$.   Note that verification  of the optimality of these configurations would confirm the formulas conjectured above for $C_{s,t}$  for $s>t$ and $t=2, 4, 8, 24$.
For $0<s<t$, such optimality would further imply that the following asymptotic formula holds
\begin{equation}
\mce_s^{\rmp}(\Omega_t, N)= \mcl_s^{\rmp} N^2 + \zeta_{\mcv_t}(s)N^{1+s/t} +o(N^{1+s/t}), \qquad (N\to \infty),
\end{equation}
where $ \mcl_s^{\rmp}$ denotes the constant on the right-hand side of \eqref{slessdcase}. (Compare with Conjecture 2 of \cite{KS} and Conjecture 3 of \cite{BHS2} for the non-periodic case of the sphere.)

\subsection{The Periodic Log-Riesz Energy.}\label{logrieszgen}   The log-Riesz $s$-potential is given by $f(x):=|x|^{-s}\log(|x|^{-2})$ for some $s>0$.  The formula (see \cite[page 26]{LapTab})
\[
\frac{-\log(y)}{y^{s/2}}=\frac{1}{\Gamma(s/2)}\int_0^{\infty}t^{s/2-1}(\log(t)-\psi(s/2))e^{-yt}\textrm{d}t,\qquad s,y>0,
\]
(where $\psi$ is the digamma function) shows that the log-Riesz $s$-potential is indeed a G-type potential (with $y=|q|^2$).  One can verify - as in the proof of Theorem \ref{esthm} - that the corresponding periodic kernel $K_{\textrm{log-Riesz},s}^{\rmp}(x,y)$ is analytic as a function of $s\in\{z:\mbox{Re}[z]>0\}$ for any fixed $x$ and $y$ satisfying $x-y\not\in\mcv$.

In the non-periodic setting, the log-Riesz kernel is the derivative of the Riesz kernel with respect to the parameter $s$ and we can extend this notion to the periodic situation.  Indeed, if $s>d$, then we may invoke Theorem \ref{cfthm2} as in the proof of Theorem \ref{esthm} to write (where the prime denotes the derivative with respect to the variable $s$)
\begin{align*}
K_{\textrm{log-Riesz},s}^{\rmp}(x,y)&=\sum_{v\in\mcv}\frac{-2\log(|x-y+v|)}{|x-y+v|^{s}}-\frac{1}{\Gamma(\frac{s}{2})}\int_0^1\frac{\pi^{d/2}(\log(t)-\psi(\frac{s}{2}))}{t^{d/2-s/2+1}}\textrm{d}t\\
&=2\zeta_{\mcv}'(s;x-y)-\frac{1}{\Gamma(\frac{s}{2})}\int_0^1\frac{\pi^{d/2}(\log(t)-\psi(\frac{s}{2}))}{t^{d/2-s/2+1}}\textrm{d}t,\\
&=2\zeta_{\mcv}'(s;x-y)+\frac{2\psi(s/2)\pi^{d/2}}{\Gamma(s/2)(s-d)}+\frac{4\pi^{d/2}}{\Gamma(s/2)(s-d)^2}\\
&=2\frac{d}{ds}K_{s}^{\rmp}(x,y)\qquad\qquad\qquad\qquad\qquad\qquad\qquad\qquad\qquad s>d.
\end{align*}
Since both sides of this equality are analytic functions on the domain $\{z:\mbox{Re}[z]>0\}$, we have proven the following result:

\begin{theorem}\label{eslogthm}
The kernel for the periodic log-Riesz $s$-energy is given by,
\begin{align}\label{logsconc}
\normalfont K_{\textrm{log-Riesz},s}^{\rmp}(x,y)&=2\frac{d}{ds}K_{s}^{\rmp}(x,y)\qquad\qquad s>0.
\end{align}
\end{theorem}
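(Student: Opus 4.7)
The plan is to follow the analytic continuation strategy already used to prove Theorem \ref{esthm}: establish the identity on the half-line $s>d$, where both sides admit direct classical interpretations, and then extend it to all $s>0$ using that both sides are analytic in $s$ on the right half-plane for any fixed $x,y$ with $x-y\notin\mcv$.

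First, I would verify that $\mu_f(dt)=\tfrac{1}{\Gamma(s/2)}t^{s/2-1}(\log t-\psi(s/2))\,dt$ is a valid choice of signed measure for the G-type representation of $f(q)=|q|^{-s}\log(|q|^{-2})$, and that its negative part is finite on $(0,\infty)$ for each $s>0$. Substituting this $\mu_f$ into the definition (\ref{1def}) yields a concrete expression for $K_{\textrm{log-Riesz},s}^{\rmp}(x,y)$. To see that this expression depends analytically on $s$ in the right half-plane $\{\operatorname{Re} s>0\}$, I would revisit the uniform-convergence arguments in the proof of Theorem \ref{main}(a) and observe that the bounds are uniform for $s$ in any compact subset of this half-plane; combining this with Morera's theorem (exactly as was done for $K_s^{\rmp}$ in the proof of Theorem \ref{esthm}) gives analyticity in $s$ for each fixed $(x,y)$ with $x-y\notin\mcv$. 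The same argument shows that $s\mapsto K_s^{\rmp}(x,y)$ is analytic on $\{\operatorname{Re} s>0\}$, and standard estimates on its $s$-derivative (differentiating under the sum and integral) show that $s\mapsto 2\tfrac{d}{ds}K_s^{\rmp}(x,y)$ is also analytic on the same half-plane.

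Next, I would establish the identity (\ref{logsconc}) in the range $s>d$. Here both sides have a classical meaning: the log-Riesz kernel coincides with the lattice sum $\sum_{v\in\mcv}-2\log|x-y+v|/|x-y+v|^s$ plus a renormalization constant computed as in the derivation preceding the theorem statement, while $K_s^{\rmp}(x,y)$ is given by (\ref{rconc}). The hard computational step is the term-by-term differentiation in $s$ of (\ref{rconc}) (equivalently, the explicit form (\ref{longform})), which I would justify by using the uniform convergence of the defining sums on compact subsets of $\{s: \operatorname{Re} s > d\}$ and by differentiating the Gamma factor $\frac{1}{\Gamma(s/2)}$ and the prefactor $\frac{2\pi^{d/2}}{\Gamma(s/2)(d-s)}$ separately. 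Matching coefficients then produces exactly the same expression as the direct evaluation of $K_{\textrm{log-Riesz},s}^{\rmp}(x,y)$ via Theorem \ref{cfthm2}, confirming (\ref{logsconc}) on $(d,\infty)$.

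Finally, since both sides of (\ref{logsconc}) are analytic in $s$ on $\{\operatorname{Re} s>0\}$ for each fixed $(x,y)$ with $x-y\notin\mcv$, and they agree on the nonempty open half-line $(d,\infty)$, the identity theorem for analytic functions forces agreement throughout $\{\operatorname{Re} s>0\}$, and in particular for all $s>0$. The main obstacle is the justification of interchanging the $s$-derivative with the double summation over $\mcv$ and $\mcv^*$ and the $t$-integral against $\mu_f$; I expect this to reduce to the same uniform-bound estimates (driven by $e^{-\pi^2|w_0|^2/t}$ on $[0,1]$ and by the $\Theta_\mcv(t)e^{-\delta t}$ bound on $[1,\infty)$) already developed in the proof of Theorem \ref{main}(a), now applied with the measure $\mu_f$ replaced by its $s$-derivative.
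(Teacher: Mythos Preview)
Your proposal is correct and follows essentially the same approach as the paper: verify the G-type representation, establish analyticity of both sides in $s$ on the right half-plane via the uniform bounds from Theorem \ref{main}(a) and Morera's theorem, prove the identity on $(d,\infty)$ by invoking Theorem \ref{cfthm2} and computing directly, and then extend by analytic continuation. The paper carries out the $s>d$ computation more explicitly (writing out $2\zeta_{\mcv}'(s;x-y)+\frac{2\psi(s/2)\pi^{d/2}}{\Gamma(s/2)(s-d)}+\frac{4\pi^{d/2}}{\Gamma(s/2)(s-d)^2}$ and recognizing it as $2\frac{d}{ds}K_s^{\rmp}$), whereas you outline the matching step, but the logical structure is identical.
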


\subsection{The Periodic Logarithmic Energy}\label{logen}

The logarithmic potential is given by $f(x)=\log|x|^{-2}$.  This potential is especially important when $d=2$, where it represents the Coulomb interaction.  Previous attempts have been made to define the logarithmic energy in two dimension (see \cite{log2}), but our result is more general and arises from the same methods used for G-type potentials.  In the non-periodic setting, the logarithmic interaction can be realized as a limiting case of the log-Riesz interaction as the parameter $s$ tends to zero.  We will extend this notion to the periodic setting.

Recall that the logarithmic potential is a weak G-type potential (see (\ref{logiden}) above).  Therefore, equation (\ref{1def}) implies that the corresponding kernel is
\begin{align*}
\nonumber K_{\log}^{\rmp}(x,y)&=\sum_{v\in\mcv}\int_1^{\infty}\frac{e^{-t|x-y+v|^2}}{t}\textrm{d}t+\sum_{w\in\mcv^*\setminus\{0\}}e^{2\pi iw\cdot(x-y)}\int_0^1\frac{\pi^{d/2}}{t^{1+d/2}}e^{-\pi^2|w|^2/t}\textrm{d}t\\
&=\lim_{s\rightarrow0^+}\Gamma\left(\frac{s}{2}\right)\left(\sum_{v\in\mcv}\int_1^{\infty}\frac{e^{-t|x-y+v|^2}}{\Gamma(\frac{s}{2})t^{1-s/2}}\textrm{d}t+\sum_{w\in\mcv^*\setminus\{0\}}e^{2\pi iw\cdot(x-y)}\int_0^1\frac{\pi^{d/2}e^{-\pi^2|w|^2/t}}{\Gamma(\frac{s}{2})t^{1+d/2-s/2}}\textrm{d}t\right)\\
&=\lim_{s\rightarrow0^+}\Gamma\left(\frac{s}{2}\right)K_s^{\rmp}(x,y)\\
&=2\left(\frac{d}{ds}K_{s}^{\rmp}(x,y)\right)\bigg|_{s=0}\\
&=\lim_{s\rightarrow0^+}K_{\textrm{log-Riesz},s}^{\rmp}(x,y)
\end{align*}
where we used the fact that $K_0^{\rmp}(x,y)$ is identically $0$ by (\ref{longform}).  We formally state this conclusion in the following theorem.

\begin{theorem}\label{logforma}
The kernel for the periodic logarithmic energy is given by,
\begin{align}\label{logconc}
\normalfont K_{\textrm{log}}^{\rmp}(x,y)&=\normalfont \lim_{s\rightarrow0^+}K_{\textrm{log-Riesz},s}^{\rmp}(x,y).
\end{align}
\end{theorem}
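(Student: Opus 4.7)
The plan is to verify the chain of equalities
\begin{align*}
K_{\log}^{\rmp}(x,y) &= \lim_{s\to 0^+}\Gamma(s/2)K_s^{\rmp}(x,y) \\
&= 2\,\frac{d}{ds}K_s^{\rmp}(x,y)\bigg|_{s=0} = \lim_{s\to 0^+}K_{\textrm{log-Riesz},s}^{\rmp}(x,y),
\end{align*}
for $x,y\in\bbR^d$ with $x-y\not\in\mcv$. The first equality expresses the logarithmic kernel as a suitably rescaled limit of Riesz kernels; the second identifies that rescaled limit with a derivative at $s=0$; and the third matches the derivative (up to the factor 2) with the log-Riesz kernel via Theorem \ref{eslogthm}. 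Reading off the outermost members of the chain then gives the theorem.

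For the first equality I would start from the measure identifications $d\mu_{f_s}(t) = \Gamma(s/2)^{-1}t^{s/2-1}\,dt$ from (\ref{mufs}) and $d\mu_f(t) = t^{-1}\,dt$ from (\ref{logiden}). Substituting $\mu_{f_s}$ into (\ref{1def}) and multiplying through by $\Gamma(s/2)$ to cancel its normalizing denominator produces an explicit integral representation of $\Gamma(s/2)K_s^{\rmp}(x,y)$ whose integrands differ from those in $K_{\log}^{\rmp}(x,y)$ only by a factor $t^{s/2-1}$ in place of $t^{-1}$. Passing to the limit $s\to 0^+$ inside each integral and each sum recovers $K_{\log}^{\rmp}(x,y)$; the interchange is handled by dominated convergence, using the bound $t^{s/2-1}\le 1$ for $t\ge 1$, $s\in[0,2]$ on the $\mcv$-side, and the bound $t^{s/2-1}\le t^{-1}$ for $t\in(0,1)$, $s\ge 0$ on the $\mcv^*$-side, together with the summability estimates already developed in the proof of Theorem \ref{main}(a).

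For the remaining two equalities I would use the real-analyticity of $K_s^{\rmp}(x,y)$ in $s$ on a neighborhood of the origin, together with the observation that $K_0^{\rmp}(x,y)=0$. The latter is immediate from formula (\ref{longform}): at $s=0$ the prefactor $1/\Gamma(s/2)$ vanishes, while both series remain finite (the incomplete Gamma functions being continuous in their first argument). Consequently $K_s^{\rmp}(x,y) = s\,\frac{d}{ds}K_s^{\rmp}(x,y)|_{s=0} + O(s^2)$, and combined with the expansion $\Gamma(s/2) = 2/s + O(1)$ this yields $\lim_{s\to 0^+}\Gamma(s/2)K_s^{\rmp}(x,y) = 2\frac{d}{ds}K_s^{\rmp}(x,y)|_{s=0}$. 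By Theorem \ref{eslogthm} and continuity of $K_{\textrm{log-Riesz},s}^{\rmp}$ in $s$, this common value equals $\lim_{s\to 0^+}K_{\textrm{log-Riesz},s}^{\rmp}(x,y)$, closing the chain. The main technical obstacle throughout is justifying the interchange of the $s\to 0^+$ limit with the infinite lattice sums over $\mcv$ and $\mcv^*\setminus\{0\}$, but this is controlled by the same estimates that proved convergence in Theorem \ref{main}(a), so the remainder of the argument reduces to routine bookkeeping.
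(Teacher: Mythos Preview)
Your proposal is correct and follows essentially the same approach as the paper: the paper establishes the identical chain $K_{\log}^{\rmp}=\lim_{s\to0^+}\Gamma(s/2)K_s^{\rmp}=2\frac{d}{ds}K_s^{\rmp}|_{s=0}=\lim_{s\to0^+}K_{\textrm{log-Riesz},s}^{\rmp}$, using the explicit form of $\mu_{f_s}$ and $\mu_f$, the vanishing $K_0^{\rmp}=0$ from (\ref{longform}), and Theorem \ref{eslogthm}. Your version is slightly more explicit about the dominated convergence bounds justifying the limit-sum interchanges, but the argument is otherwise the same.
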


\vspace{2mm}

One can similarly define the periodic energy for many other potentials by considering the Laplace transform formulas in \cite[pages 24-26]{LapTab}.  We will investigate the minimal periodic energy associated to the Riesz, log-Riesz, and logarithmic kernels in one dimension in Section \ref{d1}.

\section{Convergence Factors and Renormalization}\label{cf2}

In this section, we will revisit and generalize the computational methods used to derive the expression (\ref{1def}).  While never explicitly stated, the formula in Definition 2 is related to other formulas used to sum divergent and conditionally convergent series (see \cite{MP,polytrope,PPdL}).  The method that we will use to derive the formula (\ref{1def}) is that of a \textit{convergence factor} (as in \cite{LPS,MP,polytrope,PPdL}), which is a family of functions $\{g_a\}_{a>0}$ parametrized by the positive real numbers.

If a convergence factor $\{g_a\}_{a>0}$ is given, then for any particular $g_a$ let us define
\begin{align}\label{efg1}
\tilde{E}_{f}^{\rmp}((x_j)_{j=1}^N;g_a):=\sum_{k\neq j}\left(\sum_{v\in\mcv}f(x_k-x_j+v)g_a(x_k-x_j+v)\right).
\end{align}
We will assume that our convergence factors are such that the sum (\ref{efg1}) converges absolutely for all $a>0$.  We will also assume that $\lim_{\arz}g_a(w)=1$ for all $w\in\bbR^d\setminus\{0\}$ and realize our energy functional as a renormalized limit of expressions of the form (\ref{efg1}) as $\arz$.  

Our requirement that $\lim_{\arz}g_a(w)=1$ for all $w\in\bbR^d\setminus\{0\}$ implies that if (\ref{periodf}) is infinite, then as $\arz$, the sum (\ref{efg1}) may tend to infinity.  We will see that in many cases - indeed in all the cases we consider - the sum (\ref{efg1}) can be rewritten as $A_1(a)+A_2(a;(x_j)_{j=1}^N)$, where $A_2(a;(x_j)_{j=1}^N)$ approaches a finite limit as $\arz$ for any (non-degenerate) configuration $(x_j)_{j=1}^N$ and $A_1(a)$ is independent of the configuration.  By writing the sum (\ref{efg1}) in this way, we see that the configurations that minimize $\tilde{E}_{f}^{\rmp}(\cdot;g_a)$ are really minimizing $A_2(a;\cdot)$.  Since we are interested in minimal energy configurations and $A_2(a;\cdot)$ approaches a limit as $\arz$ (call it $A_2(0;\cdot)$), we will define our energy as $A_2(0;\cdot)$.  We will use the Laplace transform and Poisson summation to identify the quantity $A_1(a)$ that we must subtract off of the sum (\ref{efg1}) in order to renormalize it to get a finite limit as $\arz$.  This kind of renormalization procedure has been used previously by applied scientists; indeed the process of renormalizing the Coulomb interaction by subtracting off the quantity $A_1(a)$ is described in \cite{Heyes} as neutralizing each cell in the lattice with a uniform ``background charge." See also \cite{Pay}.

The procedure just outlined begs the question of the dependence of $A_2(0;\cdot)$ on the convergence factor $\{g_a\}_{a>0}$.  We will show that if the convergence factor satisfies some very reasonable smoothness and decay conditions, then the limit $A_2(0;\cdot)$ does not depend on the convergence factor used and is given by the formula (\ref{1def}).  More precisely, we will derive (\ref{1def}) using convergence factors that are Laplace transforms of positive measures.  This generalizes the methods of \cite{polytrope,PPdL}, where only Gaussian convergence factors and G-type potentials are considered.

We will divide our calculations into two parts.  The first will consider the case in which $f$ is a G-type potential.  In fact, we will consider potential functions $f$ and convergence factors $\{g_a\}_{a>0}$ that satisfy the following conditions:
\begin{enumerate}
\item[(CF1)] $f$ is a G-type potential,
\item[(CF2)] for each $a>0$, $g_a(z)$ is finite for all $z\in\bbR^d\setminus\{0\}$ and can be expressed as
\[
g_a(z):=\int_0^{\infty}e^{-|z|^2t}\textrm{d}\mu_{g_a}(t),
\]
for some positive measure $\mu_{g_a}$ on $(0,\infty)$,
\item[(CF3)]  if
\[
f^{\pm}(z):=\int_0^{\infty}e^{-|z|^2t}\textrm{d}\mu_f^{\pm}(t),
\]
then for every $a>0$, the series $\sum_{v\in\mcv}f^{\pm}(q+v)g_a(q+v)$ both converge absolutely for all $q\not\in\mcv$,
\item[(CF4)] $\lim_{\arz}g_a(x)=1$ for all $x\in\bbR^d\setminus\{0\}$.
\end{enumerate}



\vspace{2mm}

For a lattice $\mcv$ generated by $V$ satisfying $\det(V)=1$ and a potential-convergence factor pair $(f,\{g_a\}_{a>0})$ satisfying (CF1-CF4) above, let us define
\begin{align}\label{ef2def}
\normalfont E_{f,1}^{\rmp}((x_j)_{j=1}^N;\{g_a\}_{a>0}):=\limsup_{\arz}\sum_{k\neq j}\left(\sum_{v\in\mcv}f(x_{kj}+v)g_a(x_{kj}+v)-\int_0^1\frac{\pi^{d/2}}{t^{d/2}}\textrm{d}(\mu_f*\mu_{g_a})(t)\right).
\end{align}
To make sure this is well-defined, we must show that the last integral in (\ref{ef2def}) is finite for every $a>0$.  For this, it suffices to consider the case in which $\mu_f$ is positive; the general case follows by considering the positive and negative parts of $\mu_f$ separately.  It is easy to see that
\[
f(z)g_a(z)=\int_0^{\infty}e^{-|z|^2t}\textrm{d}\mu_{a,f}(t)
\]
where $\mu_{a,f}:=\mu_f*\mu_{g_a}$, so the condition (CF3) implies that if $x\not\in\mcv$, then
\[
\sum_{v\in\mcv}\int_0^1e^{-|x+v|^2t}\textrm{d}\mu_{a,f}(t)<\infty.
\]
Since the integrand is positive, we may bring the sum inside the integral and apply Poisson summation to get
\[
\int_0^1\frac{\pi^{d/2}}{t^{d/2}}\sum_{w\in\mcv^*\setminus\{0\}}e^{-\pi^2|w|^2/t}e^{2\pi iw\cdot x}\textrm{d}\mu_{a,f}(t)+\int_0^1\frac{\pi^{d/2}}{t^{d/2}}\textrm{d}\mu_{a,f}(t).
\]
The proof of Theorem \ref{main}(a) shows that the infinite sum converges to an integrable function, so the second integral must also be finite, which is what we wanted to show.

\vspace{2mm}

Our main result for G-type potentials is the following theorem, which shows that this method produces an energy functional that coincides with (\ref{e1def}):

\begin{theorem}\label{cfthm2}
If $f$ and $\{g_a\}_{a>0}$ satisfy conditions (CF1-CF4) stated above and $(x_j)_{j=1}^N$ is a non-degenerate\footnote{Recall this means that $x_i-x_j\not\in\mcv$ for all $i\neq j$.} configuration, then the $\limsup$ in (\ref{ef2def}) is a limit and
\begin{align*}
&E_{f,1}^{\rmp}\left((x_j)_{j=1}^N;\{g_a\}_{a>0})\right)=\sum_{k\neq j}K_f^{\rmp}(x_k,x_j).
\end{align*}
\end{theorem}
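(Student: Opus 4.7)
The plan is to recognize that, for each fixed $a > 0$, the bracketed summand in (\ref{ef2def}) can be rewritten, via a single application of Gaussian Poisson summation, into a form whose zero-frequency term cancels precisely the subtracted integral. The result is exactly the expression (\ref{1def}) but with $\mu_f$ replaced by the positively regularized measure $\mu_{a,f} := \mu_f * \mu_{g_a}$; one then lets $a \to 0^+$ and uses a dominated convergence argument to recover $K_f^{\rmp}$ itself.

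In detail, conditions (CF1) and (CF2) yield the Laplace representation
\[
f(z)\, g_a(z) = \int_0^\infty e^{-|z|^2 t}\,\mathrm{d}\mu_{a,f}(t).
\]
Using (CF3) one may apply Fubini--Tonelli separately to the positive and negative parts of $\mu_f$ to interchange the sum over $v \in \mcv$ with the $t$-integration. Splitting the $t$-integral at $t = 1$ and invoking on the subinterval $(0,1)$ the Gaussian Poisson summation identity
\[
\sum_{v \in \mcv} e^{-|x_{kj}+v|^2 t} = \sum_{w \in \mcv^*} \frac{\pi^{d/2}}{t^{d/2}}\, e^{-\pi^2 |w|^2/t}\, e^{2\pi i w \cdot x_{kj}},
\]
which is valid because $\det V = 1$, the isolated $w=0$ summand integrates to $\int_0^1 \pi^{d/2} t^{-d/2}\,\mathrm{d}\mu_{a,f}(t)$, exactly cancelling the subtracted term in (\ref{ef2def}). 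Hence for every $a > 0$ the bracketed quantity equals
\[
K^{(a)}(x_{kj}) := \sum_{v \in \mcv} \int_1^\infty e^{-|x_{kj}+v|^2 t}\,\mathrm{d}\mu_{a,f}(t) + \sum_{w \in \mcv^* \setminus \{0\}} e^{2\pi i w \cdot x_{kj}} \int_0^1 \frac{\pi^{d/2}}{t^{d/2}} e^{-\pi^2 |w|^2/t}\,\mathrm{d}\mu_{a,f}(t),
\]
which is the analogue of (\ref{1def}) with $\mu_f$ replaced by $\mu_{a,f}$.

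To send $a \to 0^+$, observe that (CF4) together with the Laplace representation forces $\int_0^\infty e^{-s t}\,\mathrm{d}\mu_{g_a}(t) \to 1$ for every $s > 0$, so by standard continuity theorems for Laplace transforms $\mu_{g_a}$ converges weakly to $\delta_0$ on $[0,\infty)$, and hence $\mu_{a,f}$ converges vaguely to $\mu_f$ on $(0,\infty)$. For each fixed $v \in \mcv$ and each fixed $w \in \mcv^* \setminus \{0\}$ the corresponding single integral converges to its $\mu_f$-analogue. To pass the limit through the two infinite sums one invokes dominated convergence with uniform-in-$a$ majorants furnished by reapplying the bounds from the proof of Theorem \ref{main}(a) --- the function $\Theta_\mcv(t) e^{-\delta t}$ on the real-space side and $t^{-d/2} e^{-\pi^2 |w_0|^2/t}$ on the Fourier-dual side --- transferred from $\mu_f^\pm$ to $\mu_{a,f}^\pm = \mu_f^\pm * \mu_{g_a}$ by Fubini, using the positivity of $\mu_{g_a}$ from (CF2). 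Summing over $k \neq j$ delivers the theorem and, in particular, shows that the $\limsup$ in (\ref{ef2def}) is a genuine limit.

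The main obstacle is precisely this final domination step: one must check that the tail estimates that made the proof of Theorem \ref{main}(a) work for $\mu_f$ remain uniform in $a$ for the entire family $\{\mu_{a,f}\}_{a > 0}$, with no loss as $a \to 0^+$. The algebraic heart of the argument --- the Laplace representation, Poisson summation, and cancellation of the $w=0$ term --- is forced by the shape of (\ref{1def}), so the real work is in the analytic bookkeeping of this limit.
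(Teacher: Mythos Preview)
Your proposal is correct and follows essentially the same route as the paper. Both arguments write $fg_a$ via the convolution measure $\mu_{a,f}=\mu_f*\mu_{g_a}$, split the $t$-integral at $1$, apply Poisson summation on $(0,1)$ so that the $w=0$ term cancels the subtracted integral, and then pass to the limit $a\to 0^+$ termwise using dominated convergence with the majorants from the proof of Theorem~\ref{main}(a). The only substantive difference is presentational: where you invoke a ``standard continuity theorem for Laplace transforms'' to get $\mu_{g_a}\to\delta_0$, the paper proves this directly as Lemma~\ref{munot}, and where you appeal to vague convergence of $\mu_{a,f}$ to $\mu_f$ for the single-term limits, the paper unfolds the convolution explicitly (splitting the $\mu_{g_a}$-integral at a small $\epsilon$) to verify those limits by hand. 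One small notational point: your $\mu_{a,f}^{\pm}=\mu_f^{\pm}*\mu_{g_a}$ is a convenient positive decomposition of $\mu_{a,f}$ but not in general its Jordan decomposition; this is harmless since it still dominates $|\mu_{a,f}|$.
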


The proof will require the following lemma:

\begin{lemma}\label{munot}
Let $\{g_a\}_{a>0}$ satisfy conditions (CF2) and (CF4) in the above list.  For any $0<\epsilon<M<\infty$, the following conclusions hold:
\begin{itemize}[before=\itshape,font=\normalfont]
\item[I)] $\lim_{\arz}\mu_{g_a}([\epsilon,M])=0$.
\item[II)] $\lim_{\arz}\mu_{g_a}((0,\epsilon))=1$.
\end{itemize}
\end{lemma}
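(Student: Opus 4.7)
The heuristic is that $g_a$ converging pointwise to $1$ (away from $0$) forces the representing Laplace-transform measures $\mu_{g_a}$ to accumulate all of their mass at $t = 0$ in the limit. The plan is to make this precise in two steps: first get part (I) by comparing $g_a$ at two different norms, then bootstrap to part (II).

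For part (I), I would fix $0 < y_1 < y_2$ and pick $x_i \in \bbR^d$ with $|x_i|^2 = y_i$. By (CF2),
$$
g_a(x_1) - g_a(x_2) = \int_0^{\infty}\bigl(e^{-y_1 t} - e^{-y_2 t}\bigr)\,\mathrm{d}\mu_{g_a}(t),
$$
and the left-hand side tends to $0$ by (CF4). Since $\mu_{g_a}\ge 0$ and the integrand is strictly positive on $(0,\infty)$ with a positive lower bound $\delta=\delta(y_1,y_2,\epsilon,M)$ on the compact set $[\epsilon,M]$, restricting the integral gives $\delta\cdot\mu_{g_a}([\epsilon,M]) \le g_a(x_1)-g_a(x_2)\to 0$, which is (I).

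For part (II), the upper bound is immediate: for any $y>0$, choosing $x\in\bbR^d$ with $|x|^2=y$,
$$
\mu_{g_a}((0,\epsilon)) \le e^{y\epsilon}\int_0^{\epsilon} e^{-yt}\,\mathrm{d}\mu_{g_a}(t) \le e^{y\epsilon}\,g_a(x),
$$
so letting $\arz$ and then $y\to 0^+$ gives $\limsup\mu_{g_a}((0,\epsilon))\le 1$. For the matching lower bound, I would split $g_a(x)=I_1+I_2+I_3$ into integrals over $(0,\epsilon)$, $[\epsilon,M]$, and $(M,\infty)$. Part (I) forces $I_2\to 0$. For the tail $I_3$, I pick an auxiliary exponent $0<y'<y$ and bound
$$
I_3 \le e^{-(y-y')M}\int_0^{\infty}e^{-y't}\,\mathrm{d}\mu_{g_a}(t) = e^{-(y-y')M}\,g_a(x'),
$$
where $|x'|^2=y'$; since $g_a(x')\to 1$, this stays below $2e^{-(y-y')M}$ for $a$ small, hence can be made arbitrarily small by taking $M$ large first. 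Therefore $I_1 = g_a(x)-I_2-I_3 \to 1$ up to arbitrarily small error, and since $I_1\le \mu_{g_a}((0,\epsilon))$ this yields $\liminf\mu_{g_a}((0,\epsilon))\ge 1$.

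The main obstacle is that we have no a priori control on the total mass $\mu_{g_a}((0,\infty))$, which could in principle be infinite, so one cannot directly appeal to dominated convergence or weak compactness on $[0,\infty]$. The auxiliary-exponent trick in the tail estimate sidesteps this: even without knowing the total mass is finite, the single finite Laplace value $g_a(x')$ combined with the geometric factor $e^{-(y-y')M}$ suffices to control $I_3$ uniformly in small $a$, and this is really the only nontrivial input beyond (CF4).
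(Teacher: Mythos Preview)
Your proposal is correct and follows essentially the same approach as the paper's proof: Part (I) via the difference of two Laplace values bounding $\mu_{g_a}$ on a compact interval, and Part (II) via a three-piece split of a single Laplace value with the tail controlled by the auxiliary-exponent trick. The only cosmetic point is that in your lower bound for (II) you reuse the symbol $M$ for a large auxiliary cutoff (the paper calls it $P$), which risks confusion with the fixed $M$ in the lemma statement; otherwise the arguments are the same.
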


\begin{proof}
Define
\[
G_a(x):=\int_0^{\infty}e^{-xt}\textrm{d}\mu_{g_a}(t),\qquad x>0.
\]
To prove part (I), we calculate (for $0<s<r$):
\begin{align*}
0&=\lim_{\arz}\left(G_a(s)-G_a(r)\right)=\lim_{\arz}\int_0^{\infty}(e^{-st}-e^{-rt})\textrm{d}\mu_{g_a}(t)\\&\geq\limsup_{\arz}\int_{\epsilon}^{M+\epsilon}(e^{-st}-e^{-rt})\textrm{d}\mu_{g_a}(t)
\geq\left[\min_{\tau\in[\epsilon,M+\epsilon]}(e^{-s\tau}-e^{-r\tau})\right]\left[\limsup_{\arz}\mu_{g_a}([\epsilon,M+\epsilon))\right],
\end{align*}
which proves the claim.

To prove part (II), we choose some $P>0$ very large and keep the notation that $0<s<r$ and notice
\begin{align*}
1&=\lim_{\arz}\int_0^{\infty}e^{-rt}\textrm{d}\mu_{g_a}(t)\\
&=\lim_{\arz}\left(\int_0^{\epsilon}e^{-rt}\textrm{d}\mu_{g_a}(t)+\int_{\epsilon}^Pe^{-rt}\textrm{d}\mu_{g_a}(t)+\int_P^{\infty}e^{-st}e^{(s-r)t}\textrm{d}\mu_{g_a}(t)\right)\\
&\leq\liminf_{\arz}\left(\mu_{g_a}((0,\epsilon))+e^{-\epsilon r}\mu_{g_a}([\epsilon,P])+e^{(s-r)P}\int_P^{\infty}e^{-st}\textrm{d}\mu_{g_a}(t)\right).
\end{align*}
By part (I), the second term converges to $0$ as $\arz$, and by choosing $P$ large, we can make the last term as small as we want (using the fact that $G_a(s)$ converges to $1$ as $\arz$).  This shows
\[
\liminf_{\arz}\mu_{g_a}((0,\epsilon))\geq1.
\]
To bound the $\limsup$, we calculate
\begin{align*}
1&=\lim_{\arz}\int_0^{\infty}e^{-rt}\textrm{d}\mu_{g_a}(t)\geq\limsup_{\arz}e^{-\epsilon r}\mu_{g_a}((0,\epsilon)),
\end{align*}
which means
\[
\limsup_{\arz}\mu_{g_a}((0,\epsilon))\leq e^{\epsilon r},
\]
for any $r>0$.  Letting $r$ tend to zero proves the result.
\end{proof}

Now we are ready to prove our result for G-type potentials.  The main idea is to establish convergence as $\arz$ of the measures $\mu_{g_a}$ to $\delta_0$ in an appropriate weak sense.

\begin{proof}[Proof of Theorem \ref{cfthm2}]
We will split the proof into two cases.

\vspace{1mm}

\noindent\underline{Case 1:} $\mu_f^-=0$. 

\vspace{1mm}

We write
\begin{align*}
\sum_{v\in\mcv}f(x_{kj}+v)g_a(x_{kj}+v)-\int_0^1\frac{\pi^{d/2}}{t^{d/2}}\textrm{d}\mu_{a,f}(t)=\sum_{v\in\mcv}\int_0^{\infty}e^{-|x_{kj}+v|^2t}\textrm{d}\mu_{a,f}(t)-\int_0^1\frac{\pi^{d/2}}{t^{d/2}}\textrm{d}\mu_{a,f}(t).
\end{align*}
To evaluate the integrals in the infinite sum, we split them into two integrals, one ranging from $0$ to $1$ and the other from $1$ to infinity.  It follows immediately from the definition of convolution (see \cite[page 270]{Folland}) that
\begin{align}
\nonumber&\int_1^{\infty}e^{-|x_{kj}+v|^2t}\textrm{d}\mu_{a,f}(t)=\int_0^{\infty}\int_0^{\infty}\chi_{\{z+y\geq1\}}(z+y)e^{-|x_{kj}+v|^2(z+y)}\textrm{d}\mu_f(z)\textrm{d}\mu_{g_a}(y)\\
\nonumber&\qquad\qquad=\int_0^{\infty}\left(\int_{(1-y)^+}^{\infty}e^{-|x_{kj}+v|^2z}\textrm{d}\mu_f(z)\right)e^{-|x_{kj}+v|^2y}\textrm{d}\mu_{g_a}(y)\\
\label{ltint}&\qquad\qquad=\int_0^{\epsilon}\left(\int_{(1-y)^+}^{\infty}e^{-|x_{kj}+v|^2z}\textrm{d}\mu_f(z)\right)e^{-|x_{kj}+v|^2y}\textrm{d}\mu_{g_a}(y)\\
\label{ltint2}&\qquad\qquad\qquad\qquad\qquad+\int_\epsilon^{\infty}\left(\int_{(1-y)^+}^{\infty}e^{-|x_{kj}+v|^2z}\textrm{d}\mu_f(z)\right)e^{-|x_{kj}+v|^2y}\textrm{d}\mu_{g_a}(y),
\end{align}
where $\epsilon$ is some small positive number and $(1-y)^+=\max\{1-y,0\}$.  The integral (\ref{ltint}) is easy to understand as $\arz$.  Indeed, Lemma \ref{munot} implies that the restriction of $\mu_{g_a}$ to $[0,\epsilon]$ converges weakly to $\delta_0$ as $\arz$ and the $y$-integrand in (\ref{ltint}) is right-continuous at $0$, so as $\arz$, the integral converges to
\[
\int_1^{\infty}e^{-|x_{kj}+v|^2z}\textrm{d}\mu_f(z).
\]
The integral (\ref{ltint2}) can be bounded above in absolute value by
\begin{align}\label{gtozero}
\left(\int_0^{\infty}e^{-|x_{kj}+v|^2z}\textrm{d}\mu_f(z)\right)\left(\int_{\epsilon}^{\infty}e^{-|x_{kj}+v|^2y}\textrm{d}\mu_{g_a}(y)\right).
\end{align}
The second factor in (\ref{gtozero}) can be rewritten as
\[
g_a(x_{kj}+v)-\int_0^{\epsilon}e^{-|x_{kj}+v|^2y}\textrm{d}\mu_{g_a}(y)
\]
Lemma \ref{munot} implies the restriction of $\mu_{g_a}$ converges weakly to $\delta_0$ as $\arz$, so condition (CF4) implies this expression converges to $0$ as $\arz$.  Therefore the limit of (\ref{gtozero}) as $\arz$ is zero, which implies
\[
\lim_{\arz}\int_1^{\infty}e^{-|x_{kj}+v|^2t}\textrm{d}\mu_{a,f}(t)=\int_1^{\infty}e^{-|x_{kj}+v|^2t}\textrm{d}\mu_f(t),\qquad v\in\mcv.
\]
We may apply Dominated Convergence (using calculations from the proof of Theorem \ref{main}(a)) to conclude that
\begin{align}\label{piece1a}
\lim_{\arz}\sum_{v\in\mcv}\int_1^{\infty}e^{-|x_{kj}+v|^2t}\textrm{d}\mu_{a,f}(t)=\sum_{v\in\mcv}\int_1^{\infty}e^{-|x_{kj}+v|^2t}\textrm{d}\mu_f(t).
\end{align}

It remains to evaluate
\begin{align}\label{01piece}
\lim_{\arz}\left(\sum_{v\in\mcv}\int_0^1e^{-|x_{kj}+v|^2t}\textrm{d}\mu_{a,f}(t)-\int_0^1\frac{\pi^{d/2}}{t^{d/2}}\textrm{d}\mu_{a,f}(t)\right)
\end{align}
We apply Poisson summation:
\[
\sum_{v\in\mcv}e^{-t|x_{kj}+v|^2}=\frac{\pi^{d/2}}{t^{d/2}}\sum_{w\in\mcv^*}e^{-\pi^2|w|^2/t}e^{2\pi iw\cdot x_{kj}},
\]
to rewrite (\ref{01piece}) as
\[
\lim_{\arz}\left(\sum_{w\in\mcv^*\setminus\{0\}}e^{2\pi iw\cdot x_{kj}}\int_0^1\frac{\pi^{d/2}}{t^{d/2}}e^{-\pi^2|w|^2/t}\textrm{d}\mu_{a,f}(t)\right).
\]
For each term in this sum, we rewrite the integral as
\begin{align}
\label{secterm}&\int_0^{\epsilon}\int_0^{1-y}\frac{\pi^{d/2}}{(z+y)^{d/2}}e^{-\pi^2|w|^2/(z+y)}\textrm{d}\mu_f(z)\textrm{d}\mu_{g_a}(y)\\
\nonumber&\qquad\qquad\qquad\qquad\qquad+\int_{\epsilon}^1\int_{0}^{1-y}\frac{\pi^{d/2}}{(z+y)^{d/2}}e^{-\pi^2|w|^2/(z+y)}\textrm{d}\mu_f(z)\textrm{d}\mu_{g_a}(y).
\end{align}
The second term in (\ref{secterm}) is easily bounded above by a constant multiple of $\mu_{g_a}([\epsilon,1])$, which tends to zero as $\arz$ by Lemma \ref{munot}(I).  To calculate the limit as $\arz$ of the first term, we again notice that the $y$-integrand is right continuous at $0$ and Lemma \ref{munot} implies that the restriction of $\mu_{g_a}$ to $[0,\epsilon]$ converges weakly to $\delta_0$ as $\arz$, so the first integral in (\ref{secterm}) converges as $\arz$ to
\[
\int_0^1\left(\frac{\pi}{z}\right)^{d/2}e^{-\pi^2|w|^2/z}\textrm{d}\mu_f(z).
\]
It follows that for each $w\in\mcv^*\setminus\{0\}$ it holds that
\[
\lim_{\arz}e^{2\pi ix_{kj}\cdot w}\int_0^1\frac{\pi^{d/2}}{t^{d/2}}e^{-\pi^2|w|^2/t}\textrm{d}\mu_{a,f}(t)=e^{2\pi ix_{kj}\cdot w}\int_0^1\frac{\pi^{d/2}}{t^{d/2}}e^{-\pi^2|w|^2/t}\textrm{d}\mu_f(t),
\]
and Dominated Convergence again allows us to make the same conclusion for the sum over all $w\in\mcv^*\setminus\{0\}$.

\vspace{2mm}

\noindent\underline{Case 2:}  $\mu_f=\mu_f^+-\mu_f^-$ with $\mu_f^-$ finite.

\vspace{2mm}

Case 1 implies that if we replace $f$ by $f^{\pm}$ and $\mu_f$ by $\mu_f^{\pm}$ in (\ref{ef2def}), then the the conclusion of the theorem is valid.  Therefore, the same must be true if we replace $f$ by $f^+-f^-$ and $\mu_f$ by $\mu_f^+-\mu_f^-$ (condition (CF3) allows us to rearrange the sums).  This is the desired conclusion.  We see that Case 2 is the separate application of Case 1 to the potentials $f^+$ and $f^-$.
\end{proof}

Now we will apply Theorem \ref{cfthm2} to derive (\ref{1def}) for weak G-type potentials.  We will require the convergence factor $\{g_a\}_{a>0}$ to satisfy conditions (CF2) and (CF4) above and we replace condition (CF3) with the stronger requirement that
\begin{align}\label{gasum}
\sum_{v\in\mcv}g_a(u+v)<\infty,\qquad u\in\bbR^d\setminus\mcv,\quad a>0.\tag{CF5}
\end{align}
Recall that $\{v_j\}_{j=1}^d$ are the columns of the matrix $V$ that determine the lattice $\mcv$ and set $u^*:=\frac{1}{2}(v_1+\cdots+v_d)$.  If $f$ is a weak G-type potential, then we define $\mu_{f,\alpha}$ to be the restriction of $\mu_f$ to the interval $[\alpha,\infty)$ and
\begin{align}\label{weakform}
\nonumber&\qquad\qquad\qquad\qquad\qquad\qquad f_{\alpha}(z):=\int_{0}^{\infty}e^{-|z|^2t}\textrm{d}\mu_{f,\alpha}(t),\\
&K_{f,2}^{\rmp}(x,y):=\limsup_{\alrz}\limsup_{\arz}\bigg(\sum_{v\in\mcv}(f_{\alpha}(x-y+v)+f^*(\alpha))g_a(x-y+v)\\
\nonumber&\qquad\qquad\qquad\qquad\qquad\qquad\qquad\qquad-f^*(\alpha)\sum_{v\in\mcv}g_a(u^*+v)-\int_0^1\frac{\pi^{d/2}}{t^{d/2}}\textrm{d}(\mu_{f,\alpha}*\mu_{g_a})(t)\bigg),
\end{align}
and
\[
E_{f,2}^{\rmp}((x_j)_{j=1}^N;\{g_a\}_{a>0}):=\sum_{k\neq j}K_{f,2}^{\rmp}(x_k,x_j).
\]
Our result for weak G-type potentials takes the following form:

\begin{theorem}\label{cfweakthm}
If $f$ is a weak G-type potential with measure $\mu_f$; $\{g_a\}_{a>0}$ satisfies (CF2), (CF4), and (\ref{gasum}); and $(x_j)_{j=1}^N$ is a non-degenerate configuration, then the $\limsup$'s in (\ref{weakform}) are both limits and
\begin{align*}
&E_{f,2}^{\rmp}\left((x_j)_{j=1}^N;\{g_a\}_{a>0})\right)=\sum_{k\neq j}K_f^{\rmp}(x_k,x_j).
\end{align*}
\end{theorem}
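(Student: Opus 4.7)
The strategy is a two-stage regularization: take the inner limit $a\to 0^+$ first by applying Theorem~\ref{cfthm2}, then take $\alpha\to 0^+$ using condition (W2). For each fixed $\alpha>0$, the truncated potential $f_\alpha(z)=\int_{\alpha}^{\infty}e^{-|z|^2 t}\,\textrm{d}\mu_f(t)$ is a genuine G-type potential with associated measure $\mu_{f,\alpha}$: its negative part is bounded by $\mu_f^-$ and so is finite, while condition (W1) guarantees finiteness away from the origin. One then checks that the pair $(f_\alpha,\{g_a\}_{a>0})$ satisfies (CF1)--(CF4); in particular, (CF3) follows from (CF5) by bounding $f_\alpha^{\pm}(q+v)$ uniformly on all but finitely many lattice translates (where $|q+v|$ can be small but stays nonzero since $q\notin\mcv$).

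The plan is to split the expression inside the $\limsup$'s defining $K_{f,2}^{\rmp}(x,y)$ as $A(a,\alpha,x,y)+B(a,\alpha,x,y)$, where
\begin{align*}
A&:=\sum_{v\in\mcv}f_\alpha(x-y+v)\,g_a(x-y+v)-\int_0^1\frac{\pi^{d/2}}{t^{d/2}}\,\textrm{d}(\mu_{f,\alpha}*\mu_{g_a})(t),\\
B&:=f^*(\alpha)\Bigl[\sum_{v\in\mcv}g_a(x-y+v)-\sum_{v\in\mcv}g_a(u^*+v)\Bigr].
\end{align*}
Applying Theorem~\ref{cfthm2} to $(f_\alpha,\{g_a\})$ immediately yields $\lim_{\arz}A=K_{f_\alpha}^{\rmp}(x,y)$, where $K_{f_\alpha}^{\rmp}$ is the kernel \eqref{1def} with $\mu_f$ replaced by $\mu_{f,\alpha}$. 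The hard step is to show $\lim_{\arz}B=0$. Using Poisson summation (legitimate since $g_a$ is a Laplace transform of a positive measure and is summable over $\mcv$ by (CF5)),
\[
\sum_{v}g_a(z+v)-\sum_{v}g_a(u^*+v)=\sum_{w\in\mcv^*\setminus\{0\}}\widehat{g}_a(w)\bigl(e^{2\pi iw\cdot z}-e^{2\pi iw\cdot u^*}\bigr),
\]
where the divergent $w=0$ contribution, which is the sole source of the blowup of each lattice sum as $\arz$, has cancelled. For each $w\neq 0$ the kernel $\phi_w(t):=\pi^{d/2}t^{-d/2}e^{-\pi^2|w|^2/t}$ vanishes super-exponentially as $t\to 0^+$ and decays like $t^{-d/2}$ as $t\to\infty$; splitting $(0,\infty)$ into $(0,\epsilon)\cup[\epsilon,M]\cup[M,\infty)$ and invoking Lemma~\ref{munot} (which gives $\mu_{g_a}([\epsilon,M])\to 0$ and $\mu_{g_a}((0,\epsilon))\to 1$) forces $\widehat{g}_a(w)\to 0$ as $\arz$. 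Dominated convergence in the sum over $w$ then gives $B\to 0$, because $f^*(\alpha)$ is a fixed constant.

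For the outer limit, observe that for $\alpha\in(0,1)$,
\[
K_{f_\alpha}^{\rmp}(x,y)=\sum_{v\in\mcv}\int_1^{\infty}e^{-|x-y+v|^2 t}\,\textrm{d}\mu_f(t)+\sum_{w\in\mcv^*\setminus\{0\}}e^{2\pi iw\cdot(x-y)}\int_{\alpha}^{1}\frac{\pi^{d/2}}{t^{d/2}}e^{-\pi^2|w|^2/t}\,\textrm{d}\mu_f(t).
\]
The first sum is independent of $\alpha$ and agrees with the first sum in $K_f^{\rmp}(x,y)$; as $\alpha\to 0^+$, the lower endpoints of the integrals in the second sum extend to $0$. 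Using condition (W2) together with the uniform bound on $\sum_{w\neq 0}e^{\pi^2(|w_0|^2-|w|^2)/t}$ for $t\in(0,1]$ established in the proof of Theorem~\ref{main}(a), Dominated Convergence yields $\lim_{\alrz}K_{f_\alpha}^{\rmp}(x,y)=K_f^{\rmp}(x,y)$. Summing over pairs $k\neq j$ completes the proof.

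The main obstacle is the vanishing of $B$: the hypotheses on $\{g_a\}$ are mild and do not directly bound the total mass of $\mu_{g_a}$, so dominated convergence in the Poisson-transformed sum has to be carried out via a careful splitting of the $t$-integral that exploits both the decay of $\phi_w$ at the two endpoints and the concentration of $\mu_{g_a}$ near $t=0$ provided by Lemma~\ref{munot}.
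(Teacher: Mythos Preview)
Your overall architecture matches the paper's proof exactly: split into $A+B$, apply Theorem~\ref{cfthm2} to $A$ (after verifying (CF1)--(CF4) for the truncated potential), show $B\to 0$, then let $\alpha\to 0^+$ using (W2). The paper packages the $B$ step as Lemma~\ref{someg}, and your outer-limit argument is essentially how the paper intends ``take $\alpha\to 0^+$'' to be carried out.

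There is, however, a genuine gap in your argument that $B\to 0$. You claim that for each $w\in\mcv^*\setminus\{0\}$ the quantity $\widehat{g}_a(w)=\int_0^\infty\phi_w(t)\,\textrm{d}\mu_{g_a}(t)$ tends to zero as $a\to 0^+$, and then invoke dominated convergence in the sum over $w$. But $\widehat{g}_a(w)\to 0$ is \emph{false} under (CF2), (CF4), (CF5) alone. For instance, take $\mu_{g_a}=\delta_a+\nu_a$ with $\textrm{d}\nu_a(t)=\frac{1}{\log(1/a)}\mathbf{1}_{[1,1/a]}(t)\,t^{d/2-1}\,\textrm{d}t$. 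One checks (CF2), (CF4), (CF5) directly, yet the $\nu_a$-contribution to $\widehat{g}_a(w)$ equals $\frac{\pi^{d/2}}{\log(1/a)}\int_1^{1/a}t^{-1}e^{-\pi^2|w|^2/t}\,\textrm{d}t\to\pi^{d/2}$ as $a\to 0^+$. The point is that the $t^{-d/2}$ decay of $\phi_w$ on $[M,\infty)$ is only polynomial, while the hypotheses permit $\mu_{g_a}$ to carry substantial mass drifting out to infinity; Lemma~\ref{munot} gives no control there.

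The paper avoids this by splitting at $t=1$ \emph{before} Poisson summation (this is the proof of Lemma~\ref{someg}). On $[1,\infty)$ one keeps the spatial form $\sum_v\int_1^\infty(e^{-|u+v|^2t}-e^{-|u^*+v|^2t})\,\textrm{d}\mu_{g_a}(t)$ and bounds each term by $C_v\int_1^\infty e^{-\delta t}\,\textrm{d}\mu_{g_a}(t)$ with $\sum_vC_v<\infty$; the remaining factor $\int_1^\infty e^{-\delta t}\,\textrm{d}\mu_{g_a}(t)\to 0$ follows from (CF4) and Lemma~\ref{munot} exactly as in the (\ref{gtozero}) argument. Only on $[0,1]$ does one apply Poisson summation, where the integrand $\sum_{w\neq 0}\phi_w(t)(e^{2\pi iw\cdot u}-e^{2\pi iw\cdot u^*})$ is continuous and vanishes at $t=0$, so weak convergence of $\mu_{g_a}|_{[0,1]}$ to $\delta_0$ suffices. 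Your three-interval split of the \emph{Fourier-side} integral cannot substitute for this, because after Poisson summation the tail no longer has exponential decay in $t$.
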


As in the case of G-type potentials, the proof will require a technical lemma.

\begin{lemma}\label{someg}
Suppose $\{g_a\}_{a>0}$ is a convergence factor satisfying (CF2), (CF4), and (\ref{gasum}).  Then
\[
\lim_{\arz}\sum_{v\in\mcv}(g_a(u+v)-g_a(u^*+v))=0,\qquad u\in\bbR^d\setminus\mcv.
\]
\end{lemma}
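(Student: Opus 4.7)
The plan is to convert the sums over $\mcv$ into integrals against $\mu_{g_a}$ via the Laplace representation (CF2), apply Poisson summation to the theta function in order to isolate the cancellation between the $u$- and $u^*$-contributions, and then control the remaining integral using Lemma~\ref{munot} together with an exponential tail estimate for the theta function.

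First, since $g_a\ge 0$ and both sums $\sum_v g_a(u+v)$, $\sum_v g_a(u^*+v)$ are finite by (CF5), Tonelli's theorem gives
\[
\sum_{v \in \mcv} \bigl(g_a(u+v) - g_a(u^*+v)\bigr) = \int_0^\infty \Phi(t)\, \mathrm{d}\mu_{g_a}(t),
\]
where $\Phi(t) := \Theta_\mcv(u,t) - \Theta_\mcv(u^*,t)$ and $\Theta_\mcv(q,t) := \sum_{v\in\mcv} e^{-|q+v|^2 t}$. Applying Poisson summation to each theta function, the singular ``$w=0$'' term $\pi^{d/2}/t^{d/2}$ cancels in the difference, yielding
\[
\Phi(t) = \frac{\pi^{d/2}}{t^{d/2}} \sum_{w \in \mcv^*\setminus\{0\}} e^{-\pi^2|w|^2/t}\bigl(e^{2\pi i w\cdot u} - e^{2\pi i w\cdot u^*}\bigr),
\]
from which the same dominant-term argument used in the proof of Theorem~\ref{main}(a) shows that $\Phi(t) \to 0$ as $t\to 0^+$. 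Next, because $u,u^*\notin \mcv$, the constant $\beta := \min\bigl(\min_v|u+v|^2,\ \min_v|u^*+v|^2\bigr)$ is strictly positive, and the standard monotonicity bound for theta series (factor out the nearest-lattice term and estimate the remainder at $t=1$) yields $\Theta_\mcv(u,t),\,\Theta_\mcv(u^*,t)\le C\,e^{-\beta t}$ for all $t\ge 1$; in particular, $|\Phi(t)|\le 2C\,e^{-\beta t}$ on $[1,\infty)$.

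To conclude, split the integral as $\int_0^\epsilon + \int_\epsilon^T + \int_T^\infty$. The tail obeys
\[
\Bigl|\int_T^\infty \Phi(t)\, \mathrm{d}\mu_{g_a}(t)\Bigr| \le 2C\,e^{-\beta T/2}\int_0^\infty e^{-\beta t/2}\, \mathrm{d}\mu_{g_a}(t) = 2C\,e^{-\beta T/2}\,g_a(x_\beta),
\]
with $|x_\beta|^2 = \beta/2$, and (CF4) forces $g_a(x_\beta)\to 1$, so this piece is uniformly $O(e^{-\beta T/2})$ for $a$ near $0$. The middle piece is at most $\|\Phi\|_{L^\infty([\epsilon,T])}\,\mu_{g_a}([\epsilon,T])$, which tends to $0$ by Lemma~\ref{munot}(I). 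For the first piece, Lemma~\ref{munot}(II) guarantees $\mu_{g_a}((0,\epsilon))\le 2$ for all small $a$, while $\sup_{(0,\epsilon)} |\Phi|\to 0$ as $\epsilon\to 0^+$. A standard $\eta/3$ argument then gives the desired limit.

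The main obstacle is that conditions (CF2), (CF4), (CF5) do not a priori bound the tail mass $\mu_{g_a}([T,\infty))$ uniformly in $a$, and in general it can even be infinite. This difficulty is bypassed by the exponential decay of $\Phi$ on $[1,\infty)$, a consequence of $u,u^*\notin \mcv$, which together with the pointwise bound $g_a(x_\beta)\to 1$ supplies the required uniform tail estimate.
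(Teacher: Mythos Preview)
Your proof is correct and follows essentially the same route as the paper: represent the sum via the Laplace transform, apply Poisson summation so that the $w=0$ terms cancel in the difference, use the estimate from Theorem~\ref{main}(a) to see that $\Phi(t)\to 0$ as $t\to 0^+$, and control the large-$t$ contribution via the exponential decay of the theta series together with (CF4) and Lemma~\ref{munot}. The only cosmetic difference is that the paper uses a fixed split at $t=1$ (handling $[1,\infty)$ by the same argument that disposed of (\ref{gtozero}) and $[0,1]$ by weak convergence of $\mu_{g_a}|_{[0,1]}$ to $\delta_0$), whereas you use a three-way $\eta/3$ split with variable cutoffs $\epsilon$ and $T$; your tail estimate is somewhat more explicit, but the underlying ideas are the same.
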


\begin{proof}
We write
\[
\sum_{v\in\mcv}(g_a(u+v)-g_a(u^*+v))=\sum_{v\in\mcv}\int_0^{\infty}e^{-|u+v|^2t}-e^{-|u^*+v|^2t}\textrm{d}\mu_{g_a}(t).
\]
We split the integral at $1$ and notice that the integral from $1$ to infinity converges to $0$ as $\arz$ by the same reasoning that showed that the expression (\ref{gtozero}) converges to $0$ as $\arz$.  To calculate the integral from $0$ to $1$, we bring the sum inside the integral (which is justified by (\ref{gasum})) and apply Poisson summation to rewrite the integral as
\[
\int_0^1\sum_{w\in\mcv^*}(e^{2\pi iw\cdot u}-e^{2\pi iw\cdot u^*})\frac{\pi^{d/2}}{t^{d/2}}e^{-\pi^2|w|^2/t}\textrm{d}\mu_{g_a}(t).
\]
The $w=0$ term contributes $0$ to this sum, while the remaining sum converges for all $t\in[0,1]$ to a continuous function that is $0$ at $0$ (as in the proof of Theorem \ref{main}(a)).  Lemma \ref{munot} implies $\mu_{g_a}$ restricted to $[0,1]$ converges to $\delta_0$ as $\arz$, so this integral converges to $0$ as $\arz$ as desired.
\end{proof}

\begin{proof}[Proof of Theorem \ref{cfweakthm}]
We write
\begin{align*}
&\sum_{v\in\mcv}(f_{\alpha}(x_{kj}+v)+f^*(\alpha))g_a(x_{kj}+v)-f^*(\alpha)\sum_{v\in\mcv}g_a(u^*+v)-\int_0^1\frac{\pi^{d/2}}{t^{d/2}}\textrm{d}(\mu_{f,\alpha}*\mu_{g_a})(t)\\
&\qquad\qquad\qquad=\sum_{v\in\mcv}f_{\alpha}(x_{kj}+v)g_a(x_{kj}+v)-\int_0^1\frac{\pi^{d/2}}{t^{d/2}}\textrm{d}(\mu_{f,\alpha}*\mu_{g_a})(t)\\
&\qquad\qquad\qquad\qquad\qquad\qquad\qquad\qquad-f^*(\alpha)\sum_{v\in\mcv}\left(g_a(u^*+v)-g_a(x_{kj}+v)\right).
\end{align*}
Lemma \ref{someg} implies that the last term in this expression tends to $0$ as $\arz$, while Theorem \ref{cfthm2} implies the rest converges to $K_{f_{\alpha}}^{\rmp}(x_k,x_j)$ as $\arz$.  This kernel has the form (\ref{1def}) but with $\mu_f$ replaced by $\mu_{f,\alpha}$.  If we then take $\alrz$, we recover the desired result.
\end{proof}

It will be beneficial to have some concrete examples to consider to help us understand the above calculations.

\vspace{2mm}

\noindent\textbf{Example: Riesz Convergence Factors.}
This example highlights the fact that for G-type potentials, we do not need the convergence factor to be absolutely summable on its own (as in (CF5)), but only require that the weaker condition (CF3) be satisfied.  Consider the Riesz potential $f(x)=|x|^{-s}$ for $s\geq d$ and the Riesz convergence factor $g_a(x)=|x|^{-a}$.  In this case,
\[
\textrm{d}\mu_{g_a}(t)=\frac{1}{\Gamma(a/2)}t^{a/2-1}\textrm{d}t.
\]
Since $f$ and $g_a$ are both positive, the condition (CF3) reduces to
\[
\sum_{v\in\mcv}f(q+v)g_a(q+v)<\infty,\qquad q\not\in\mcv,
\]
which is true in this case because $s\geq d$.  We have already seen that the Riesz potential is a G-type potential when $s\geq d$, and so this potential and convergence factor satisfy conditions (CF1-CF4) listed above.  Therefore, Theorem \ref{cfthm2} tells us that we will recover (\ref{1def}) as our energy by this method.

\vspace{2mm}

\noindent\textbf{Example: Gaussian Convergence Factors.}
Consider the logarithmic potential $f(x)=-\log|x|^2$ and the Gaussian convergence factor $g_a(x)=e^{-|ax|^2}$, which was utilized in \cite{polytrope}.  In this case,
\[
\textrm{d}\mu_{g_a}(t)=\delta_{a^2},
\]
so it is clear that this choice of convergence factor satisfies the conditions of Lemma \ref{someg}.  We have already seen that the logarithmic potential is a weak G-type potential, so Theorem \ref{cfweakthm} tells us that we will recover (\ref{1def}) as our energy by using this convergence factor.

\vspace{2mm}

We have shown that the formula (\ref{1def}) appears naturally as a definition of the periodic energy for a variety of potentials and results from the natural process of using a convergence factor with the appropriate renormalization.  It may be possible to work out an exact set of hypotheses on the pair $(f,\{g_a\}_{a>0})$ for the resulting energy to coincide with (\ref{e1def}), but that is not our purpose here.  The generality of our current result combined with the nice properties of the energy given by (\ref{e1def}) are sufficient to justify our use of (\ref{e1def}) as a definition of a periodic energy functional.

\begin{proof}[Proof of Theorem \ref{main}(c)]
Theorem \ref{cfthm2} shows that
\[
K_f^{\rmp}(x,y)=\lim_{\arz}\left(\sum_{v\in\mcv}f(x-y+v)e^{-|a(x-y+v)|^2}-\int_0^{1}\frac{\pi^{d/2}}{(t+a^2)^{d/2}}\textrm{d}\mu_f(t)\right).
\]
However, since (\ref{periodf}) is absolutely convergent we can bring the limit inside the sum over $\mcv$.  Therefore,
\[
E_f^{\rmp}((x_j)_{j=1}^N)=\sum_{k\neq j}\left(\sum_{v\in\mcv}f(x_{kj}+v)\right)-N(N-1)\lim_{\arz}\int_0^1\frac{\pi^{d/2}}{(t+a^2)^{d/2}}\textrm{d}\mu_f(t).
\]
Since this last limit - which must be finite in this case - does not depend on the configuration, we have proven the result.
\end{proof}

We will conclude this section with an application of our results to the Laplace transform.  Suppose that $f$ is a G-type potential, and $\{g_a\}_{a>0}$ is the Gaussian convergence factor $g_a(x)=e^{-|ax|^2}$.  The proof of Theorem \ref{main}(c) shows that if the sum (\ref{periodf}) is absolutely convergent, then the following exists and is finite:
\[
\lim_{\arz}\int_0^{1}\frac{1}{(t+a^2)^{d/2}}\textrm{d}\mu_f(t).
\]
In other words, if the potential has sufficiently fast decay at infinity, then its inverse Laplace transform must have a certain minimum amount of decay at $0$.  We will state this conclusion as the following proposition:

\begin{prop}\label{l1sum}
Suppose $F(r):(0,\infty)\rightarrow\bbR$ is given by
\[
\normalfont F(r):=\int_0^{\infty}e^{-rt}\textrm{d}\mu(t)
\]
for some signed measure $\mu$ with $\mu^-$ finite.  Suppose further that $\sum_{v\in\mcv}F\left(|q+v|^2\right)$ converges absolutely for some $q\in\bbR^d\setminus\mcv$.  Then the following exists and is finite:
\[
\normalfont\lim_{\arz}\int_0^{1}\frac{1}{(t+a^2)^{d/2}}\textrm{d}\mu(t).
\]
\end{prop}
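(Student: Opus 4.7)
The plan is to realize Proposition \ref{l1sum} as an immediate corollary of the proof of Theorem \ref{main}(c) already given above, applied to a potential built from $F$. The key observation is that the proof of part (c) identifies the limit in question as precisely the per-pair discrepancy between the renormalized and classical periodic energies, so finiteness of the latter forces finiteness of the limit.

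First, I would promote $F$ to a G-type potential. Define $f:\bbR^d\to\bbR\cup\{\infty\}$ by $f(x):=F(|x|^2)$ for $x\neq 0$ and $f(0):=\mu((0,\infty))$. Writing $\mu=\mu^+-\mu^-$, the function $f$ decomposes as the difference of the Laplace transforms of $\mu^+$ and $\mu^-$ evaluated at $|x|^2$; the second is continuous on $\bbR^d$ because $\mu^-$ is finite, and the first is lower semi-continuous (even continuous away from $0$) by monotone convergence. Hence $f$ satisfies condition (G) of Definition 1 with $\mu_f=\mu$, making it a G-type potential.

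Next I would exhibit a configuration on which the classical periodic sum converges absolutely. Take the two-point configuration $\omega_2=(0,-q)\in(\bbR^d)^2$. Since $q\notin\mcv$ the configuration is non-degenerate, and for each ordered pair $i\neq j$ the relevant lattice sum equals $\sum_{v\in\mcv}F(|q+v|^2)$ (after $v\mapsto -v$ for the reversed pair), which converges absolutely by hypothesis. Therefore $E_f^{\cmp}(\omega_2)$ is absolutely convergent.

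The final step is to invoke the identity appearing in the proof of Theorem \ref{main}(c), which for this configuration reads
\[
E_f^{\rmp}(\omega_2)=E_f^{\cmp}(\omega_2)-2\pi^{d/2}\lim_{\arz}\int_0^1\frac{1}{(t+a^2)^{d/2}}\,\textrm{d}\mu(t).
\]
The left-hand side is finite by Theorem \ref{main}(a) since $q\notin\mcv$, and the first term on the right is finite by the previous paragraph, so the limit on the right must exist as a finite real number, which is the conclusion of the proposition. The only point requiring real care is lower semi-continuity of $f$ at the origin when $f(0)=\infty$, but this follows cleanly from the monotone-convergence argument applied to $\mu^+$; aside from that, the proposition is a direct reformulation of work already carried out in Theorem \ref{main}(c).
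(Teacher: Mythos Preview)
Your proposal is correct and follows essentially the same approach as the paper: the authors derive the proposition directly from the identity established in the proof of Theorem~\ref{main}(c), observing that the finiteness of both $E_f^{\rmp}$ and $E_f^{\cmp}$ forces the limit to be finite. Your write-up is simply a more explicit version of this, supplying the two-point configuration and verifying that $f(x)=F(|x|^2)$ is a G-type potential, which the paper leaves implicit.
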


\section{The $d=1$ case}\label{d1}

In this section we consider the minimal energy of the unit interval $\Omega=[0,1)$ for the periodic Riesz kernel for all positive values of $s$, the log-Riesz kernel for all positive $s$, and the logarithmic kernel.   Of fundamental importance to our calculations is the following result, which follows from a standard ``winding number argument'' of L. Fejes T\'{o}th (see \cite[Proposition 1]{BHS}).

\begin{prop}\label{thmEqSpaced}
Suppose $K$ is a kernel on $[0,1)\times[0,1)$ of the form $K(x,y)=\phi(|x-y|)$ for
some lower semi-continuous function $\phi:[0,1)\to \bbR\cup\{+\infty\}$ that is (a) strictly convex on $(0,1)$ and (b) satisfies $\phi(x)=\phi(1-x)$ for $x\in(0,1)$.  Then an ordered configuration of $N$ points $0\le x_1^*\le\cdots \le x_N^*<1$ minimizes the $N$-point $K$-energy given by
$$\sum_{{1\leq i,j\leq N}\atop{i\neq j}}K(x_i,x_j),$$
over all $N$ point configurations in $[0,1)$ if and only if there is some $0\le \alpha<1/N$ such that $x_j=\alpha+\frac{j-1}{N}$ for all $j=1,\ldots, N$.
\end{prop}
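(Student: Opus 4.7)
The plan is to execute the classical winding-number argument of L.~Fejes T\'{o}th in a form tailored to the kernel $\phi$.  The two hypotheses on $\phi$ play dual roles: the symmetry $\phi(x)=\phi(1-x)$ lets one rewrite the energy in a cyclically symmetric form, while the strict convexity of $\phi$ on $(0,1)$ drives both the lower bound and its rigidity statement via Jensen's inequality.

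Given an ordered configuration $0\leq x_1\leq\cdots\leq x_N<1$, I would first introduce the cyclic gaps $\delta_i:=x_{i+1}-x_i$ for $i=1,\ldots,N-1$ together with the wrap-around gap $\delta_N:=1+x_1-x_N$, so that $\sum_{i=1}^N\delta_i=1$.  For each $i\in\{1,\ldots,N\}$ and $m\in\{1,\ldots,N-1\}$, set
\[
a_{i,m}:=\delta_i+\delta_{i+1}+\cdots+\delta_{i+m-1}
\]
(with indices taken modulo $N$).  The first substantive step is to verify, using $\phi(x)=\phi(1-x)$, the identity
\[
\sum_{{1\leq i,j\leq N}\atop{i\neq j}}\phi(|x_i-x_j|)\;=\;\sum_{i=1}^N\sum_{m=1}^{N-1}\phi(a_{i,m}).
\]
This reduces to two elementary cases: for $j>i$ one has $|x_i-x_j|=a_{i,j-i}$, while for $j<i$ one has $a_{i,N-(i-j)}=1-|x_i-x_j|$, so that the symmetry of $\phi$ makes the two expressions match.

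The second step is to apply Jensen's inequality separately in each cyclic distance $m$.  A direct count gives $\sum_{i=1}^N a_{i,m}=m$, because each $\delta_k$ appears in exactly $m$ of the partial sums.  Convexity of $\phi$ then yields $\sum_{i=1}^N\phi(a_{i,m})\geq N\phi(m/N)$, with equality iff all the $a_{i,m}$ coincide.  Summing over $m$ produces the configuration-independent lower bound $N\sum_{m=1}^{N-1}\phi(m/N)$, and a direct check verifies that any equally spaced configuration $x_j=\alpha+(j-1)/N$ attains equality throughout, settling the ``if'' direction.  Conversely, for a minimizer, equality in the $m=1$ instance of Jensen, together with strict convexity of $\phi$ at the interior value $1/N$, forces $\delta_1=\cdots=\delta_N=1/N$, which is precisely the equally spaced condition, giving the ``only if'' direction.

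The main technical obstacle is the boundary behavior when some $\delta_i$ vanishes (two consecutive points coincide, so some $a_{i,m}$ lie at $0$ or $1$).  If $\phi(0)=+\infty$ such configurations have infinite energy and are ruled out automatically; otherwise one must justify that strict convexity of $\phi$ on $(0,1)$ still forces a strict Jensen inequality whenever the $\delta_i$ fail to all be equal.  This is handled by the observation that only the $m=1$ instance is needed for uniqueness and that any configuration competing with the equally spaced one must have at least one $\delta_i$ strictly interior to $(0,1)$, so strict convexity on $(0,1)$ supplies the required strict inequality.
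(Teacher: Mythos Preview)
Your proposal is correct and is precisely the ``winding number argument'' of L.~Fejes T\'{o}th that the paper invokes: the paper itself gives no proof but simply cites \cite[Proposition~1]{BHS}, and what you have written is exactly that argument (cyclic gaps, the identity via the symmetry $\phi(x)=\phi(1-x)$, and Jensen applied at each cyclic distance $m$). One small caveat worth tightening: as stated, $\phi$ is only assumed strictly convex on the open interval $(0,1)$ and lower semicontinuous on $[0,1)$, so when some $\delta_i=0$ (and hence some $a_{j,N-1}=1$) one needs $\phi$ to extend convexly to the closed interval for the Jensen step even to yield the non-strict inequality; in every application in this paper $\phi(0)=+\infty$ and the issue evaporates, but your closing paragraph could be sharpened by noting that convexity on $(0,1)$ plus lower semicontinuity forces $\phi(0)\ge\lim_{x\to 0^+}\phi(x)$ only when the latter is $+\infty$, so the general case really does require either that extra endpoint hypothesis or the blow-up at $0$.
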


In all of our examples, we will verify that the kernels satisfy the hypotheses of Proposition \ref{thmEqSpaced} and so deduce the minimal energy configurations.  This will allow us to compute exact formulas for the minimal energy.

\subsection{The Riesz kernel.}

Since we always assume that $\det(V)=1$, we must have $\Omega=[0,1)$.  Next, let us recall the Hurwitz Zeta function
\[
\zeta(s;q)=\sum_{n=0}^{\infty}(q+n)^{-s},\qquad q>0,\,\, s>1.
\]
Recall the form of the periodic Riesz kernel
\begin{align}\label{newkdef}
K_s^{\rmp}(x,y):=\zeta_{\bbZ}(s;x-y)-\frac{2\sqrt{\pi}}{\Gamma(s/2)(s-1)}.
\end{align}
Notice that the Epstein Zeta function for the integer lattice is just twice the Riemann Zeta function $\zeta(s)$.  Therefore, we can use (\ref{newkdef}) to write the energy functional in this setting as
\begin{align}\label{edef}
E_{s}^{\rmp}((x_j)_{j=1}^N)=\sum_{{k,j=1}\atop{k\neq j}}^N\bigg(\zeta(s;|x_{kj}|)+\zeta(s;1-|x_{kj}|)-\frac{2\sqrt{\pi}}{\Gamma(s/2)(s-1)}\bigg),\qquad s\neq1.
\end{align}
The case $s=1$ will require special attention, but we have already seen that the Riesz kernel is an entire function of $s$, so we will be able to make sense of the periodic Riesz 1-energy.

Define the function $J_s(q)=\zeta(s;q)+\zeta(s,1-q)-\frac{2\sqrt{\pi}}{\Gamma(s/2)(s-1)}$.  Notice that $J_s(q)=J_s(1-q)$ and since
\[
\frac{\partial}{\partial q}\zeta(s;q)=-s\zeta(s+1;q),
\]
we have
\begin{align}
\nonumber J_s''(q)&=\frac{\partial^2}{\partial q^2}\left(\zeta(s;q)+\zeta(s;1-q)\right)=s(s+1)\left(\zeta(s+2;q)+\zeta(s+2;1-q)\right)>0,\quad q\in(0,1).
\end{align}
This shows that the function $J_s$ is convex on $(0,1)$ and so Proposition \ref{thmEqSpaced} implies that the energy minimizing configuration is $N$ equally spaced points in the unit interval.  This fact and a simple calculation allow us to write
\begin{align}\label{kerform}
\mce_{s}^{\rmp}([0,1),N)=2N\sum_{j=1}^{N-1}\zeta\left(s;\frac{j}{N}\right)-N(N-1)\frac{2\sqrt{\pi}}{\Gamma(s/2)(s-1)}.
\end{align}
We need the following formula, the proof of which can be deduced from \cite[p. 249]{Apostol}:

\begin{lemma}\label{multform}
For any $n\in\bbN$, it holds that
\begin{align*}
\sum_{j=1}^n\zeta\left(s;\frac{j}{n}\right)=n^s\zeta(s).
\end{align*}
\end{lemma}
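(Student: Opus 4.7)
The plan is to prove the identity by expanding the Hurwitz zeta functions as their defining series and recognizing that the resulting double sum ranges over all positive integers exactly once. By definition,
\[
\zeta\!\left(s;\frac{j}{n}\right) = \sum_{k=0}^{\infty}\left(\frac{j}{n}+k\right)^{-s} = n^{s}\sum_{k=0}^{\infty}(j+kn)^{-s},
\]
for $j \in \{1,\ldots,n\}$ and $s > 1$. I would then sum this over $j = 1, \ldots, n$ and (after factoring out $n^s$) swap the order of summation — this is legitimate because all terms are positive, so Tonelli's theorem applies. The resulting double sum is
\[
\sum_{j=1}^{n}\sum_{k=0}^{\infty}(j+kn)^{-s}.
\]

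The key observation — and really the only substantive point — is that the map $(j,k)\mapsto j+kn$ from $\{1,\ldots,n\}\times\{0,1,2,\ldots\}$ onto $\{1,2,3,\ldots\}$ is a bijection. This is just the uniqueness of the representation of a positive integer $m$ in the form $m = j + kn$ with $1 \le j \le n$ and $k \ge 0$, i.e., division with remainder (taking representatives in $\{1,\ldots,n\}$ rather than $\{0,\ldots,n-1\}$). Once that bijection is invoked, the double sum collapses to $\sum_{m=1}^{\infty} m^{-s} = \zeta(s)$, yielding $\sum_{j=1}^n \zeta(s;j/n) = n^s \zeta(s)$.

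There is no real obstacle here: the only thing to be mildly careful about is the indexing convention (starting $j$ at $1$ rather than $0$, and correspondingly $k$ at $0$), which matches the definition of $\zeta(s;q)$ used in the preceding discussion since $q = j/n$ is positive for $j \ge 1$. The identity as stated is for $s$ in the region of absolute convergence ($s > 1$), but since both sides extend meromorphically to $\mathbb{C}$ with the same simple pole at $s=1$, the formula persists as a meromorphic identity — though this extension is not needed for its use in deriving \eqref{kerform}.
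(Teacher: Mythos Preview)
Your argument is correct and is precisely the standard derivation of the Hurwitz zeta multiplication formula. The paper does not supply its own proof but simply cites \cite[p.~249]{Apostol}, where the same bijection argument (division with remainder, rearrangement of an absolutely convergent double series) is given; so your approach matches the intended one.
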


By invoking the lemma, we arrive at the following:

\begin{align}\label{sneq1a}
\mce_{s}^{\rmp}([0,1),N)=2N^{1+s}\zeta(s)-2N\zeta(s)-N(N-1)\frac{2\sqrt{\pi}}{\Gamma(s/2)(s-1)},\qquad\qquad s\neq1.
\end{align}

However, we have already seen that the energy minimizing configurations are independent of $s$ and that the energy of a fixed configuration is an analytic (and hence continuous) function of $s$.  Therefore, the formula (\ref{sneq1a}) is also valid when $s=1$.  We have therefore proven the following:

\begin{theorem}\label{snotone}
If $s\in(0,\infty)$, then the minimal periodic Riesz $s$-energy of the unit interval is given by
\begin{align}\label{slessconc}
\mce_{s}^{\rmp}([0,1),N)=
\begin{cases}
2N^{1+s}\zeta(s)-2N\zeta(s)-N(N-1)\frac{2\sqrt{\pi}}{\Gamma(s/2)(s-1)},\qquad& s\neq 1\\
2N^2\log(N)+2N(N-1)\gamma&s=1,
\end{cases}
\end{align}
where $\gamma$ is the Euler-Mascheroni constant.
\end{theorem}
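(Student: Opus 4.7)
The plan is to follow the strategy sketched in the paragraphs preceding the theorem: use the winding-number Proposition \ref{thmEqSpaced} to reduce the problem to evaluating the energy of the equally spaced configuration, close the resulting sum in closed form with Lemma \ref{multform}, and treat $s=1$ by analytic continuation.

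For $s>0$ with $s\neq 1$ and $q\in(0,1)$, I would split $\zeta_{\bbZ}(s;q) = \zeta(s;q)+\zeta(s;1-q)$ and write the kernel as $K_s^{\rmp}(x,y) = \phi_s(\{x-y\})$, where
$$\phi_s(q) := \zeta(s;q) + \zeta(s;1-q) - \frac{2\sqrt{\pi}}{\Gamma(s/2)(s-1)}.$$
Strict convexity of $\phi_s$ on $(0,1)$ follows from term-by-term differentiation using $\partial_q\zeta(s;q) = -s\zeta(s+1;q)$, giving $\phi_s''(q) = s(s+1)[\zeta(s+2;q)+\zeta(s+2;1-q)] > 0$, while the symmetry $\phi_s(q)=\phi_s(1-q)$ is immediate. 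Proposition \ref{thmEqSpaced} then forces any minimizer to be the equally spaced set $x_j=\alpha+(j-1)/N$. Substituting $x_j=(j-1)/N$, using that for each fixed $j$ the residues $(k-j)\bmod N$ run over $\{1,\ldots,N-1\}$ together with the identity $\zeta_{\bbZ}(s;m/N) = \zeta(s;m/N)+\zeta(s;1-m/N)$, yields (\ref{kerform}); Lemma \ref{multform} applied as $\sum_{m=1}^{N-1}\zeta(s;m/N)=N^s\zeta(s)-\zeta(s)$ then produces (\ref{sneq1a}).

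To handle $s=1$, I would appeal to the analyticity of $K_s^{\rmp}(x,y)$ in $s$ for fixed $x-y\notin\mcv$ established in the proof of Theorem \ref{esthm}. The energy of the ($s$-independent) equally spaced configuration is therefore entire in $s$, and since it is the minimizer for all $s$ in a punctured neighborhood of $1$, continuity in $s$ forces it to remain a minimizer at $s=1$; the $s=1$ value is obtained by letting $s\to 1$ in (\ref{sneq1a}). The main obstacle will be the careful Laurent-expansion bookkeeping: the simple pole of $\zeta(s)$ at $s=1$ has to pair with $N^{1+s}-N = N(N-1)+N^2(s-1)\log N+O((s-1)^2)$ to produce the $2N^2\log N$ term, while the finite parts from $\zeta(s)=(s-1)^{-1}+\gamma+O(s-1)$ and the Taylor expansion of $\frac{2\sqrt{\pi}}{\Gamma(s/2)(s-1)}$ at $s=1$ (which brings in the Taylor coefficient of $1/\Gamma(s/2)$ involving $\psi(1/2)$) must combine so that the divergent parts cancel and the stated constant $2N(N-1)\gamma$ emerges.
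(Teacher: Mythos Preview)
Your proposal is correct and follows the paper's argument step for step: verify convexity and symmetry of $\phi_s$ to invoke Proposition~\ref{thmEqSpaced}, close the resulting sum with Lemma~\ref{multform}, and handle $s=1$ by continuity of the (entire-in-$s$) energy of the fixed equally spaced configuration.  The paper's own proof is in fact terser than yours at the last step---it simply asserts that (\ref{sneq1a}) remains valid at $s=1$ without performing the limit.

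Your caution about the $\psi(1/2)$ contribution is well placed.  Carrying the Laurent expansion through, the first term gives $\tfrac{2N(N-1)}{s-1}+2N(N-1)\gamma+2N^2\log N+O(s-1)$ while the subtracted term is $\tfrac{2N(N-1)}{s-1}-N(N-1)\psi(1/2)+O(s-1)$, so the limit is
\[
2N^2\log N + N(N-1)\bigl(2\gamma+\psi(\tfrac12)\bigr)=2N^2\log N + N(N-1)(\gamma-2\log 2),
\]
using $\psi(1/2)=-\gamma-2\log 2$.  A direct check at $N=2$ via (\ref{longform}) confirms $K_1^{\rmp}(0,1/2)=\gamma+2\log 2$, consistent with this and not with the stated $2N(N-1)\gamma$.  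So the $\psi(1/2)$ term does \emph{not} disappear, and the $s=1$ constant in the theorem as printed appears to be in error; your strategy is the right one and would expose this upon execution.
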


Notice that in the expression (\ref{slessconc}), there are no limits or error terms; we have an exact formula.

\subsection{The Log-Riesz Kernel and the Logarithmic Kernel}\label{lriesz}

As mentioned in Section \ref{logrieszgen}, the log-Riesz kernel is given by the derivative of the Riesz kernel with respect to the parameter $s$.  Consider the kernel given by
\[
R_s(x)=2\frac{\partial}{\partial s}\left(\zeta(s;x)+\zeta(s;1-x)-\frac{2\sqrt{\pi}}{\Gamma\left(\frac{s}{2}\right)(s-1)}\right),\qquad s\in(0,\infty)\setminus\{1\}.
\]
For simplicity, we will presently only consider the case $s\neq1$; we will obtain our results for $s=1$ by continuity as in the previous section.  Our first step is to verify that the minimal energy configuration is equally spaced points in the interval.  We again proceed by a derivative calculation.  Indeed, we have (where $'$ indicates a derivative with respect to $s$)
\begin{align*}
\frac{\partial^2}{\partial q^2}\frac{R_s(q)}{2}&=s(s+1)(\zeta'(s+2;q)+\zeta'(s+2;1-q))+(2s+1)(\zeta(s+2;q)+\zeta(s+2;1-q)).
\end{align*}

It is clear that $\zeta(s+2;q)+\zeta(s+2;1-q)$ is positive, so let us turn our attention to the terms involving derivatives.  Let us write
\[
\zeta(s+2;q)+\zeta(s+2;1-q)=\frac{1}{q^{s+2}}+\frac{1}{(1-q)^{s+2}}+\sum_{n=1}^{\infty}\left(\frac{1}{(q+n)^{s+2}}+\frac{1}{(1-q+n)^{s+2}}\right).
\]
Differentiating either of the first two terms with respect to $s$ will yield a positive result, while differentiating the infinite sum will yield a negative result.  More precisely, we have
\begin{align}\label{q1}
\frac{\partial}{\partial s}\left(\frac{1}{q^{s+2}}+\frac{1}{(1-q)^{s+2}}\right)=\frac{\log(1/q)}{q^{s+2}}+\frac{\log(1/(1-q))}{(1-q)^{s+2}}.
\end{align}
A straightforward calculation reveals that
\begin{align*}
&\frac{\partial^2}{\partial q^2}\left(\frac{\log(1/q)}{q^{s+2}}+\frac{\log(1/(1-q))}{(1-q)^{s+2}}\right)=\\
&\qquad=\frac{5+2s-(s^2+5s+6)\log(q)}{q^{s+4}}+\frac{5+2s-(s^2+5s+6)\log(1-q)}{(1-q)^{s+4}}>0.
\end{align*}
Therefore, the symmetry of the expression (\ref{q1}) implies that the absolute minimum  of (\ref{q1}) is obtained when $q=1/2$, where it takes the value $2^{s+3}\log(2)>8\log(2)\approx5.542$.  Therefore, the positive contribution to the derivative of $\zeta(s+2;q)+\zeta(s+2;1-q)$ is at least this large.

The negative contribution to the derivative can be bounded above in absolute value by
\begin{align*}
\sum_{n=1}^{\infty}\left(\frac{\log(q+n)}{(q+n)^{s+2}}+\frac{\log(1-q+n)}{(1-q+n)^{s+2}}\right)\leq2\sum_{n=1}^{\infty}\frac{\log(n+1)}{n^{s+2}}\leq2\sum_{n=1}^{\infty}\frac{\log(n+1)}{n^2}.
\end{align*}
This last sum is easily evaluated numerically, and it is in fact less than $4$.

If we combine the positive and negative contributions to the derivative of $\zeta(s+2;q)+\zeta(s+2;1-q)$, then we see that $\zeta'(s+2;q)+\zeta'(s+2;1-q)$ is positive for all $q\in(0,1)$.  It follows that the second derivative of $R_s$ is positive for all $q\in(0,1)$.  Therefore, we invoke Proposition \ref{thmEqSpaced} to conclude that the minimal energy configuration is given by equally spaced points in the interval.

Since (\ref{slessconc}) is an exact formula, we can obtain an exact formula for the minimal energy corresponding to the log-Riesz kernel on $[0,1)$ by differentiating both sides of (\ref{slessconc}) with respect to $s$.  Theorem \ref{eslogthm} implies the log-Riesz kernel is continuous as a function of $s$, so we get the desired result for $s=1$ also.

\begin{theorem}\label{logriesz1}
If $s>0$, then the minimal periodic log-Riesz $s$-energy of the unit interval is given by
\begin{align}\label{logsmin}
\normalfont \mce_{\textrm{log-Riesz},s}^{\rmp}([0,1),N)\\
\nonumber&\hspace{-1.17in}=
\begin{cases}
4\left[N^{1+s}\log(N)\zeta(s)+\zeta'(s)N(N^s-1)+\sqrt{\pi}N(N-1)\frac{\Gamma'\left(\frac{s}{2}\right)\frac{s-1}{2}+\Gamma\left(\frac{s}{2}\right)}{\Gamma\left(\frac{s}{2}\right)^2(s-1)^2}\right],\quad &s\neq1\\
2(N\log(N))^2+4\gamma N^2\log(N)-N(N-1)\left(4\gamma_1+\frac{1}{2}(\psi(\frac{1}{2})^2-\psi'(\frac{1}{2}))\right),\qquad &s=1,
\end{cases}
\end{align}
where $\psi(z)$ is the polygamma function, $\gamma$ is the Euler-Mascheroni constant, and
\[
\gamma_1=\lim_{m\rightarrow\infty}\left(-\frac{\log(m)^2}{2}+\sum_{k=1}^m\frac{\log(k)}{k}\right),
\]
is the negative of the coefficient of $(s-1)$ in the Laurent expansion of $\zeta(s)$ around $1$.
\end{theorem}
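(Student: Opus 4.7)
The plan is to combine the convexity argument already carried out in the paragraph preceding the theorem with the kernel identity $K_{\textrm{log-Riesz},s}^{\rmp}=2\partial_s K_s^{\rmp}$ provided by Theorem~\ref{eslogthm}. The convexity calculation together with Proposition~\ref{thmEqSpaced} shows that $N$ equally spaced points on $[0,1)$ minimize both the Riesz $s$-energy and the log-Riesz $s$-energy for every $s>0$. Since the same minimizing configuration works for both kernels and the energy of a fixed non-degenerate configuration depends analytically on $s$, the pointwise relation $K_{\textrm{log-Riesz},s}^{\rmp}=2\partial_s K_s^{\rmp}$ descends to the minimal energies:
\[
\mce_{\textrm{log-Riesz},s}^{\rmp}([0,1),N)\;=\;2\,\frac{d}{ds}\,\mce_{s}^{\rmp}([0,1),N).
\]
The right-hand side is then to be evaluated by differentiating the closed form (\ref{slessconc}) from Theorem~\ref{snotone}.

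For $s\neq 1$, I would simply differentiate (\ref{slessconc}) term by term. The $\zeta$-part contributes $2N^{1+s}\log(N)\zeta(s)+2\zeta'(s)N(N^s-1)$, and applying the quotient rule to $[\Gamma(s/2)(s-1)]^{-1}$ produces the remaining piece whose denominator is $\Gamma(s/2)^2(s-1)^2$ and whose numerator is $\tfrac{s-1}{2}\Gamma'(s/2)+\Gamma(s/2)$. Multiplying through by the factor of $2$ coming from Theorem~\ref{eslogthm} reproduces the top case of (\ref{logsmin}). No limits or cancellations are required in this regime.

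For $s=1$, the periodic Riesz kernel $K_s^{\rmp}(x,y)$ extends analytically in $s$ across $s=1$ (this is essentially the analyticity argument used in the proof of Theorem~\ref{esthm}), and hence so does the minimal energy $\mce_s^{\rmp}([0,1),N)$. Its derivative at $s=1$ therefore equals the coefficient of $(s-1)$ in its Taylor expansion, and the $s=1$ log-Riesz energy is twice this coefficient. I would compute the expansion by substituting into (\ref{slessconc}) the standard Laurent series $\zeta(s)=(s-1)^{-1}+\gamma-\gamma_1(s-1)+O((s-1)^2)$, the expansion $N^s=N+N\log(N)(s-1)+\tfrac12 N\log^2(N)(s-1)^2+O((s-1)^3)$, and the Taylor expansion of $1/\Gamma(s/2)$ around $s=1$, whose low-order coefficients involve $\psi(1/2)$ and $\psi'(1/2)$ (via $\Gamma'(1/2)=\sqrt{\pi}\psi(1/2)$ and $\Gamma''(1/2)=\sqrt{\pi}[\psi(1/2)^2+\psi'(1/2)]$).

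The main obstacle is the bookkeeping of pole cancellations. The $\zeta$-terms of (\ref{slessconc}) produce a simple pole $2N(N-1)/(s-1)$ at $s=1$, and the $\Gamma$-term produces a matching simple pole $-2N(N-1)/(s-1)$; these must cancel exactly, as must the subleading singular contributions. Once the constant and $(s-1)$ coefficients are collected, the $(s-1)$-coefficient reads $N^2\log^2(N)+2N^2\gamma\log(N)-2N(N-1)\gamma_1-\tfrac14 N(N-1)(\psi(1/2)^2-\psi'(1/2))$, and doubling this (by the factor of $2$ from Theorem~\ref{eslogthm}) reproduces the bottom case of (\ref{logsmin}).
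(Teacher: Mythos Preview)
Your proposal is correct and follows essentially the same approach as the paper: the paper's argument is precisely to invoke the convexity calculation (already done in the paragraph preceding the theorem) together with Proposition~\ref{thmEqSpaced} to identify equally spaced points as minimizers, and then to obtain the formula by differentiating the exact expression~(\ref{slessconc}) with respect to $s$, handling $s=1$ by continuity of the log-Riesz kernel in $s$ (Theorem~\ref{eslogthm}). Your write-up supplies more of the explicit bookkeeping for the $s=1$ Laurent expansion than the paper does, but the strategy is identical.
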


Since equally spaced points minimize the periodic log-Riesz $s$-energy for all $s>0$, it follows easily from Theorem \ref{logforma} that the same is true of the periodic logarithmic energy.  If we combine this with Theorem \ref{logriesz1}, we get the following:

\begin{theorem}\label{log1}
The minimal periodic logarithmic energy of the unit interval satisfies
\[
\normalfont \mce_{\textrm{log}}^{\rmp}([0,1),N)=2N\left(\sqrt{\pi}(N-1)-\log(N)\right).
\]
\end{theorem}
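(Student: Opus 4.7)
\textbf{Proof proposal for Theorem \ref{log1}.}

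The plan has two components: (i) show equally spaced points are optimal, and (ii) evaluate the energy of this optimal configuration.

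For (i), I will invoke what was just proved for the log-Riesz kernel at the end of Subsection \ref{lriesz}: for every $s>0$, the configuration $\omega_N^*=\{j/N:j=0,1,\dots,N-1\}$ minimizes the periodic log-Riesz $s$-energy over all $N$-point configurations in $[0,1)$. By Theorem \ref{logforma}, $K_{\textrm{log}}^{\rmp}(x,y)=\lim_{s\to 0^+}K_{\textrm{log-Riesz},s}^{\rmp}(x,y)$ pointwise for $x-y\notin\mcv$. Applied term by term to the finite sum defining the energy of any non-degenerate $\omega_N$, this gives
\[
E_{\textrm{log}}^{\rmp}(\omega_N)=\lim_{s\to 0^+}E_{\textrm{log-Riesz},s}^{\rmp}(\omega_N),
\]
and the same for $\omega_N^*$. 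Passing to the limit in $E_{\textrm{log-Riesz},s}^{\rmp}(\omega_N^*)\le E_{\textrm{log-Riesz},s}^{\rmp}(\omega_N)$ yields the minimality of $\omega_N^*$ for the periodic logarithmic energy.

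For (ii), rather than differentiating the Riesz formula in $s$, I will use the identity from the proof of Theorem \ref{logforma} that $K_{\textrm{log}}^{\rmp}(x,y)=\lim_{s\to 0^+}\Gamma(s/2)K_s^{\rmp}(x,y)$. Since the sum defining the energy is finite, this gives
\[
E_{\textrm{log}}^{\rmp}(\omega_N^*)=\lim_{s\to 0^+}\Gamma(s/2)E_s^{\rmp}(\omega_N^*).
\]
Now I plug in the closed form from Theorem \ref{snotone} valid for $s\neq 1$:
\[
\Gamma(s/2)E_s^{\rmp}(\omega_N^*)=2N\,\Gamma(s/2)\zeta(s)(N^s-1)-N(N-1)\frac{2\sqrt{\pi}}{s-1}.
\]

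The second term is elementary: at $s=0$ it equals $2\sqrt{\pi}\,N(N-1)$. The first term uses the standard expansions $\Gamma(s/2)=2/s+O(1)$, $\zeta(s)=-\tfrac12+O(s)$, and $N^s-1=s\log N+O(s^2)$ as $s\to 0^+$, so that the product $\Gamma(s/2)\zeta(s)(N^s-1)\to -\log N$, contributing $-2N\log N$ in the limit. Adding the two pieces produces the claimed value $2N(\sqrt{\pi}(N-1)-\log N)$.

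The only real substance is the asymptotic calculation of the $s\to 0^+$ limit, and that is routine once the two preparatory identities are in hand. The mildly delicate point is the first step: justifying that optimality transfers from log-Riesz to logarithmic through the pointwise limit of the kernel. Since we only deal with non-degenerate configurations and the energy is a finite sum, term-by-term passage to the limit is unambiguous, so no additional technical machinery is required.
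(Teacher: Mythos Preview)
Your argument is correct. Part (i) coincides with the paper's reasoning: optimality of equally spaced points for the log-Riesz kernel, together with the pointwise limit in Theorem~\ref{logforma}, transfers optimality to the logarithmic kernel; degenerate configurations are harmless since their energy is $+\infty$.

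For part (ii) you take a slightly different route than the paper. The paper obtains the value by letting $s\to 0^+$ in the log-Riesz minimal energy formula of Theorem~\ref{logriesz1}, which requires handling the terms involving $\zeta'(s)$ and $\Gamma'(s/2)$. You instead use the intermediate identity $K_{\log}^{\rmp}=\lim_{s\to 0^+}\Gamma(s/2)K_s^{\rmp}$ established in Section~\ref{logen} and plug in the Riesz formula from Theorem~\ref{snotone} directly. This is cleaner: multiplying \eqref{slessconc} by $\Gamma(s/2)$ cancels the Gamma function in the last term and leaves only the elementary expansions $\Gamma(s/2)\sim 2/s$, $\zeta(0)=-\tfrac12$, and $N^s-1\sim s\log N$, so the limit drops out with no derivative computations. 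Both approaches rest on the same analytic facts from Section~\ref{logen}, but yours bypasses Theorem~\ref{logriesz1} entirely.
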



\section*{Appendix A: Poisson Summation on Bravais Lattices}\label{psl}

Here we will state and prove some of the necessary formulas for Poisson summation.  The methods and ideas here are not new, but in the literature there is widespread inconsistency concerning notation and proper normalization, so some calculation is required for clarity.  For a function $f:\bbR^d\rightarrow\bbR$ that is in $L^1(\bbR^n)$, we define its Fourier transform by
\[
\hat{f}(y)=\int_{\bbR^d}f(t)e^{-2\pi i y\cdot t}\textrm{d}t.
\]
The Poisson summation formula states that if $f$ and $\hat{f}$ have sufficient decay at infinity, then
\begin{align}\label{PF}
\sum_{k\in\bbZ^d}f(k)=\sum_{m\in\bbZ^d}\hat{f}(m)
\end{align}
(see \cite[page 254]{Folland}).

Given a lattice $\mcv$ determined by a matrix $V$ as in our above results, let us fix some $x\in\bbR^d$ and $\omega\in(0,\infty)$ and define
\[
f(z)=e^{-\omega|x+Vz|^2}.
\]
This function $f$ has sufficient decay at infinity to apply the Poisson summation formula, so we have
\[
\sum_{v\in\mcv}e^{-\omega|x+v|^2}=\sum_{k\in\bbZ^d}f(k)=\sum_{m\in\bbZ^d}\hat{f}(m).
\]
Therefore, we need to calculate the Fourier transform of $f$.  We have
\begin{align*}
\hat{f}(y)&=\int_{\bbR^d}e^{-\omega|V(V^{-1}x+t)|^2}e^{-2\pi i y\cdot t}\textrm{d}t\\
&=e^{2\pi iy\cdot V^{-1}x}\int_{\bbR^d}e^{-\omega|Vu|^2}e^{-2\pi i y\cdot u}du\\
&=\frac{e^{2\pi iy\cdot V^{-1}x}}{\det(V)}\int_{\bbR^d}e^{-\omega|u|^2}e^{-2\pi i y\cdot V^{-1}u}du,
\end{align*}
where we used \cite[Theorem 2.44]{Folland}.  If we denote the adjoint of a matrix $A$ by $A^*$, then we can rewrite this as
\[
\frac{e^{2\pi i(V^*)^{-1}y\cdot x}}{\det(V)}\int_{\bbR^d}e^{-\omega|u|^2}e^{-2\pi i(V^*)^{-1}y\cdot u}du.
\]
This integral is now just the Fourier transform of a standard Gaussian in $\bbR^d$.  The result is
\begin{align}\label{fhat}
\hat{f}(y)=\frac{e^{2\pi i(V^*)^{-1}y\cdot x}\pi^{d/2}}{\det(V)\omega^{d/2}}e^{-\pi^2|(V^*)^{-1}y|^2/\omega}
\end{align}
(see \cite[Proposition 8.24]{Folland}).  We can now state our desired conclusion.

\begin{PT}
For any $\omega\in(0,\infty)$ and $x\in\bbR^d$, it holds that
\[
\sum_{v\in\mcv}e^{-\omega|x+v|^2}=\frac{\pi^{d/2}}{\det(V)\omega^{d/2}}\sum_{s\in\mcv^*}e^{2\pi is\cdot x}e^{-\pi^2|s|^2/\omega}.
\]
\end{PT}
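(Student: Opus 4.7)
The plan is to apply the classical Poisson summation formula \eqref{PF} on $\bbZ^d$ to a judiciously chosen Gaussian and then pull the lattice $\mcv$ and its dual $\mcv^*$ out via a linear change of variables. Specifically, for fixed $x\in\bbR^d$ and $\omega>0$, I would set
\[
f(z):=e^{-\omega|x+Vz|^2},\qquad z\in\bbR^d,
\]
so that $\sum_{v\in\mcv}e^{-\omega|x+v|^2}=\sum_{k\in\bbZ^d}f(k)$. Because $f$ and $\widehat f$ are rapidly decreasing Schwartz functions, the standard Poisson summation formula applies, giving $\sum_{k\in\bbZ^d}f(k)=\sum_{m\in\bbZ^d}\widehat f(m)$.

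The second step is to compute $\widehat f$, which is essentially already carried out in the excerpt: writing $x+Vz=V(V^{-1}x+z)$, pulling the translation out via the shift formula for the Fourier transform, changing variables $u=Vz$ with Jacobian $1/\det(V)$, and then invoking the standard formula for the Fourier transform of a Gaussian yields
\[
\widehat f(y)=\frac{\pi^{d/2}}{\det(V)\,\omega^{d/2}}\,e^{2\pi i\,(V^*)^{-1}y\cdot x}\,e^{-\pi^2|(V^*)^{-1}y|^2/\omega},
\]
which is exactly formula \eqref{fhat}.

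The final step, which I expect to be the only subtle point, is to reindex the right-hand sum so that it is a sum over $\mcv^*$ rather than $\bbZ^d$. For this I would verify that the map $m\mapsto s:=(V^*)^{-1}m$ is a bijection $\bbZ^d\to\mcv^*$. Indeed, $w\in\mcv^*$ means $w\cdot Vk\in\bbZ$ for every $k\in\bbZ^d$, equivalently $V^*w\in\bbZ^d$, equivalently $w\in(V^*)^{-1}\bbZ^d$. Substituting $s=(V^*)^{-1}m$ in the expression for $\widehat f(m)$ converts the sum $\sum_{m\in\bbZ^d}\widehat f(m)$ into the claimed sum $\frac{\pi^{d/2}}{\det(V)\,\omega^{d/2}}\sum_{s\in\mcv^*}e^{2\pi is\cdot x}e^{-\pi^2|s|^2/\omega}$, completing the proof. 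The main (minor) obstacle is simply being careful with the adjoint: one must distinguish $V^{-1}$ from $(V^*)^{-1}$ when pulling the phase out, since it is the adjoint that governs the appearance of the dual lattice.
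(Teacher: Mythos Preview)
Your proposal is correct and follows essentially the same route as the paper: define $f(z)=e^{-\omega|x+Vz|^2}$, apply Poisson summation over $\bbZ^d$, compute $\widehat f$ via the translation rule, the linear change of variables $u=Vz$, and the Gaussian Fourier transform, and then reindex over $\mcv^*$. Your explicit verification that $m\mapsto (V^*)^{-1}m$ is a bijection $\bbZ^d\to\mcv^*$ is the one step the paper leaves implicit, and it is a welcome clarification.
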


\vspace{7mm}

\end{document}